\newcommand{\RN}[1]{%
  \textup{\uppercase\expandafter{\romannumeral#1}}%
}
\newcommand{\pl}{\partial}
\newcommand{\rw}{\rightarrow}
\newcommand{\R}{\mathbb{R}}
\newcommand{\lt}{\left}
\newcommand{\rt}{\right}
\newcommand{\na}{\nabla}
\renewcommand{\hat}{\widehat}
\newcommand{\cb}{\underline{c}}
\newcommand{\bna}{\bar\nabla}
\newtheorem{theorem}{Theorem}
\newtheorem{proposition}[theorem]{Proposition}
\newtheorem{corollary}[theorem]{Corollary}
\newtheorem{lemma}[theorem]{Lemma}
\theoremstyle{definition}
\newtheorem{definition}[theorem]{Definition}
\newtheorem{remark}[theorem]{Remark}
\title{BMS charges without supertranslation ambiguity}
\author[P.-N. Chen, M.-T. Wang, Y.-K. Wang, and S.-T. Yau]{Po-Ning Chen, Mu-Tao Wang,\\ Ye-Kai Wang, and Shing-Tung Yau}
\numberwithin{theorem}{section}
\numberwithin{equation}{section}
\begin{document}

\begin{abstract} 
The asymptotic symmetry of an isolated gravitating system, or the Bondi-Metzner-Sachs (BMS) group, contains an infinite-dimensional subgroup of supertranslations.  
Despite decades of study, the difficulties with the  ``supertranslation ambiguity" persisted in  making sense of fundamental notions such as the angular momentum carried away by gravitational radiation. The issues of angular momentum and center of mass were resolved by the authors  recently. In this paper, we address the issues for conserved charges with respect to both the classical BMS algebra and the extended BMS algebra.  In particular, supertranslation ambiguity of the classical charge for the BMS algebra, as well as the extended BMS algebra, is completely identified. We then propose a new invariant charge by adding correction terms to the classical charge. With the presence of these correction terms, the new invariant charge is then shown to be free from any supertranslation ambiguity. Finally, we prove that both the classical and invariant charges
for the extended BMS algebra are invariant under the boost transformations.
\end{abstract}

\thanks{P.-N. Chen is supported by NSF grant DMS-1308164 and Simons Foundation collaboration grant \#584785, M.-T. Wang is supported by NSF grants DMS-1810856 and DMS-2104212, Y.-K. Wang is supported by MOST Taiwan grant 107-2115-M-006-001-MY2 and 109-2628-M-006-001 -MY3, and S.-T. Yau is supported by NSF grants  PHY-0714648 and DMS-1308244.} 

\maketitle
\tableofcontents
\section{Introduction}

The study of gravitational radiation \cite{Bondi, BVM} in the 1960's culminated in the discovery of the definition of Bondi-Sachs energy-momentum and the mass loss formula of Bondi. An unexpected realization was that the asymptotic symmetry group of an asymptotically flat spacetime, or the Bondi-Metzner-Sachs (BMS) group \cite{BVM, Sachs, Sachs2},  is much larger than the Poincar\'e group. In particular, there exists an infinite dimensional normal subgroup of supertranslations of the BMS group.  This was expressed first in terms of coordinate transformations of Bondi-Sach coordinate systems \cite{BVM, Sachs} and then in terms of Penrose's conformal compactifications \cite{NP2, Geroch}. Supertranslations are universal as they are independent of the underlying spacetime, and ubiquitous in any description of null infinity. Their presence created subtleties in describing physical reality for a distant observer. The Bond-Sachs energy-momentum was shown to be supertranslation invariant and the energy/mass loss due to radiation can be evaluated without ambiguity. However, this is no longer the case for angular momentum at null infinity. With the existence of supertranslation ambiguity, one cannot rigorously discuss the angular momentum carried away by gravitational radiation \cite{Penrose2}. This issue was recently resolved by the authors' discovery \cite{CKWWY, CWWY} of a supertranslation invariant definition of angular momentum, which modifies the classical definition of angular momentum by a correction term that is derived from the CWY quasilocal angular momentum \cite{CWY3, CWY4}.  A supertranslation invariant
center of mass was also proposed in \cite{CKWWY, CWWY}. Angular momentum and center of mass correspond to the quotient Lorentz algebra (rotations and boosts) of the BMS algebra by supertranslations. In this article, we consider the conserved charges \cite{AS, DS, WZ} of the full BMS algebra. Our analysis also applies to the extended BMS algebra which includes superrotations or superboosts \cite{Flanagan-Nichols}. In both cases, supertranslation ambiguity is completely identified for the classical charge, and an invariant charge is proposed and shown to be free from any supertranslation ambiguity. 
We note that, from a completely different perspective, the work of Hawking-Perry-Strominger \cite{HPS} attempted to interpret these ambiguities associated with superrotations and supertranslations as soft hairs of black holes. 

Consider the future null infinity $\mathscr{I}^+$  of an asymptotically flat spacetime which is described in terms of a Bondi-Sachs coordinate system. $\mathscr{I}^+$ is identified with $I \times S^2$, where $I\subset (-\infty, +\infty)$ is an interval parametrized by the retarded time $u$ and $S^2$ is the standard unit 2-sphere equipped with the standard round metric $\sigma_{AB}$.  
Let $\epsilon_{AB}$ be the area form of $\sigma_{AB}$ and $\na_A$ be the covariant derivative with respect to $\sigma_{AB}$. Indexes are raised, lowered, or contracted with respect to $\sigma_{AB}$. 

The symmetry of $\mathscr{I}^+$ consists of the Bondi-Metzner-Sachs (BMS) group, of which supertranslations form an infinite dimensional subgroup. 
\begin{definition} Suppose $(\bar{u}, \bar{x})$ and $(u, x)$ are two Bondi-Sachs coordinate systems on $\mathscr{I}^+$. 
A BMS transformation $B: \mathscr{I}^+ \rw \mathscr{I}^+$ is given by
\begin{align}\label{BMS_transformation}
(u, x) = B(\bar u, \bar x) = \lt( K(\bar x) \bar u + f(\bar x), g(\bar x) \rt)
\end{align}
where $f$ is a function on $S^2$ and $g: (S^2, \bar{\sigma}) \rw (S^2, \sigma)$ is a conformal map with $g^*\sigma = K^2 \bar \sigma$ (see Appendix C).

$B$ is called a {\it supertranslation}, denoted by $S_f$, if $g$ is the identity map, $K\equiv 1$, and $(u, x)=(\bar{u}+f(\bar{x}), \bar{x})$. On the other hand, $B$ is called a {\it boost}, denoted by $B_g$,  if $f=0$ 
and $(u, x)=(K(\bar{x})\bar{u}, g(\bar x))$. 
\end{definition}

The infinitesimal symmetry of $\mathscr{I}^+$ is the BMS algebra of BMS fields. Our consideration also includes the extended BMS algebra 
which was recently proposed in \cite{Banks, BT, BT1, BT2}. 

\begin{definition}
A {\it classical BMS field} is a vector field $Y$ on $\mathscr{I}^+$
\begin{equation}\label{BMS_field} Y=(Y^0+uY^1)\partial_u+Y^A \partial_A, \end{equation} such that $Y^0$ is a smooth function on $S^2$ and $Y^A$ is a smooth conformal Killing vector field on $S^2$,
\begin{equation}\label{cke_intro} \nabla^A Y^B+\nabla^B Y^A=2Y^1\sigma_{AB}.\end{equation}

An {\it extended BMS field} is a vector field $Y$ defined on $(-\infty,\infty) \times\mathscr{U}\subset \mathscr{I}^+$ for some open subset $\mathscr{U}\subset S^2$. $Y = (Y^0 +uY^1)\partial_u+Y^A \partial_A$  where $Y^0$ is a smooth function on $\mathscr{U}$ and $Y^A$ satisfies \eqref{cke_intro} on $\mathscr{U}$.  It is understood that an extended BMS field indicates such a pair $(Y, \mathscr{U})$, though $\mathscr{U}$ is often omitted. 
\end{definition}

A classical BMS field corresponds to an infinitesimal deformation of the BMS group, while an extended BMS field may not be integrable. We also note that a classical BMS field is characterized by that both $\na_A Y^A$ and $\epsilon_{AB}\na^A Y^B$ are functions of harmonic mode $\ell=1$ on $S^2$. One can view a classical field as a special case of an extended field with $\mathscr{U} = S^2$.

Let $m$ denote the mass aspect, $N_A$ the angular momentum aspect, $C_{AB}$ the shear tensor, and $N_{AB}$ the news tensor of the Bondi-Sachs coordinate on $\mathscr{I}^+$.  One can view $m$ as a smooth function, $N_A$ a smooth one-form, and $C_{AB}$ and $N_{AB}$ smooth symmetric traceless 2-tensors (with respect to $\sigma_{AB}$) on $S^2$ that depend on $u$. In particular, $\partial_u C_{AB}=N_{AB}$. See a brief description of $\mathscr{I}^+$ in the Bondi-Sachs coordinate and the definitions of these quantities in Section 2.1.

\begin{definition}
The {\it classical charge} $\tilde{Q}(u, Y)$ associated with an extended BMS field $Y$ defined on $(-\infty, \infty)\times \mathscr{U}$ is defined to be:
\begin{align}\label{classical}
\begin{split}
\tilde Q(u, Y)  &= \int_{\mathscr{U} } Y^A [N_A-\frac{1}{4}C_{A}^{\,\,\,\,D}\nabla^B C_{DB} - \frac{1}{16}\nabla_{A}(C_{DE}C^{DE})] \\
&\quad +\int_{\mathscr{U}} (Y^0+uY^1) (2m)\\
&\quad -\frac{1}{4} \int_\mathscr{U} u (\epsilon_{AB} \nabla^A Y^B) \epsilon^{PQ} \nabla_P \nabla^E C_{EQ} \end{split}                  
\end{align} where $\nabla_A$ denotes the covariant derivative with respect to $\sigma_{AB}$. 
\end{definition}

The last integral vanishes for classical BMS fields as $\epsilon_{AB}\na^A Y^B$ is of mode $\ell=1$. In this case, the definition reduces to that of Dray-Streubel \cite{DS}. See Section III.B of Flanagan-Nichols \cite{Flanagan-Nichols} for details. It is added to make the charge independent of retarded time in a non-radiative spacetime for extended BMS fields. See also Section III.D of Flanagan-Nichols \cite{Flanagan-Nichols} for a related discussion.  The integrals may not be finite for an extended BMS field $Y$. In this case, we shall assume that the supports of the BMS functions $m, N_A, C_{AB}, N_{AB}$ are contained in $\mathscr{U}$. This will ensure the finiteness of the charge and justify the integration by parts that we perform. One can in fact work in the space of the $L^2$ completion of data with such support. For simplicity, we will not pursue this completion here. In the Bondi-Sachs formalism, such data can be constructed by choosing initially that $m$, $C_{AB}$ and $N_A$ are properly supported and then require that $N_{AB}$ is properly supported. On the other hand, asymptotically flat initial data sets at spacetime infinity with physical data properly supported can be constructed via the localized solutions to the Einstein constraints equation by Carlotto and Schoen \cite{CS}.

Throughout this article, we assume that $\mathscr{I}^+$ extends from $u=-\infty$ ($i^0$) to $u=+\infty$ ($i^+$), the total flux of the classical charge is \[\delta \tilde{Q}(Y)=\lim_{u\rightarrow +\infty} \tilde{Q}(u, Y)-\lim_{u\rightarrow +\infty} \tilde{Q}(u, Y).\]

We turn to the definition of supertranslation ambiguity.
\begin{definition} Let $S_f$ be the supertranslation $(u, x)=(\bar{u}+f(\bar{x}), \bar{x})$. 
We say that two extended BMS fields $Y=(Y^0+uY^1)\partial_u+Y^A \partial_A$ and $\bar{Y}=(\bar{Y}^0+\bar{u}\bar{Y}^1)\partial_{\bar{u}}+\bar{Y}^a \partial_{a}$ are related by $S_f$ if 
\begin{equation} \label{supertranslation_related}Y^0=\bar{Y}^0 , Y^1=\bar{Y}^1, Y^A=\delta^A_a\bar{Y}^a.\end{equation}
\end{definition}
 
\begin{definition}
The {\it supertranslation ambiguity} of the total flux of the classical charge is defined as 
\[\delta \tilde{Q}(\bar{Y})-\delta \tilde{Q}(Y)\]
for two BMS fields related by a supertranslation.
\end{definition}

\begin{definition} We say a quantity is {\it supertranslation invariant} if it is invariant under all supertranslations of mode $\ell\geq 2$. In the case of the classical charge,
this means \[\delta \tilde{Q}(\bar{Y})-\delta \tilde{Q}(Y)=0\] for any supertranslation $S_f$ of mode $\ell\geq 2$. 

\end{definition}

For classical BMS fields, we prove that

\begin{theorem}[Corollary \ref{supertranslation invariance of total flux}] Suppose the news tensor decays as
\[ N_{AB}( u,x) = O(|u|^{-1-\varepsilon}) \mbox{ as } u \rw \pm\infty,
\]
then the total flux of the classical charge is supertranslation invariant for all classical BMS fields if and only if \[\lim_{u \rw +\infty} m(u,x) - \lim_{u \rw -\infty}m(u,x)\] is a constant on $S^2$. 
\end{theorem}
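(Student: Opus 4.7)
The plan is to substitute the supertranslation transformation laws of the Bondi data into the definition (\ref{classical}), take the limits $\bar u \to \pm \infty$ using the news decay hypothesis, and distill the residue into a single integral on $S^2$. First, I would invoke the transformation laws (standard in the Bondi-Sachs formalism, available from the paper's setup section) under $S_f:(u,x)=(\bar u+f(\bar x),\bar x)$: the news translates in time, $\bar N_{AB}(\bar u,x)=N_{AB}(\bar u+f,x)$; the shear picks up a time-independent shift,
\[ \bar C_{AB}(\bar u,x) = C_{AB}(\bar u+f,x) - 2c_{AB}, \qquad c_{AB}:=\na_A\na_B f - \tfrac12\sigma_{AB}\Delta f; \]
and $\bar m$, $\bar N_A$ transform by more elaborate shifts involving $f$, $\na f$, $c_{AB}$, $C_{AB}$, and time integrals of $N_{AB}$. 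Under $N_{AB}=O(|u|^{-1-\veps})$, these integrals have finite limits as $\bar u\to\pm\infty$, and the asymptotic values $\bar m(\pm\infty,x)$, $\bar N_A(\pm\infty,x)$ are expressible purely in terms of $m(\pm\infty)$, $C_{AB}(\pm\infty)$, and $f$.

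Second, I would substitute these into $\tilde Q(\bar u,\bar Y)$ and form $\delta\tilde Q(\bar Y)-\delta\tilde Q(Y)$. Because $Y$ is classical, $\epsilon^{AB}\na_A Y_B$ is of mode $\ell=1$, so the third integral in (\ref{classical}) vanishes for both $Y$ and $\bar Y$. Expanding the remaining two integrals using the laws from Step 1, performing repeated integration by parts on $S^2$, and invoking the conformal Killing equation $\na_A Y_B + \na_B Y_A = 2Y^1\sigma_{AB}$, I expect all contributions containing $[C_{AB}]:=C_{AB}(+\infty)-C_{AB}(-\infty)$ to cancel: the $c_{AB}$-shifts coming from $\na_A(C_{DE}C^{DE})$, $C_A{}^D\na^B C_{DB}$, and the $\bar N_A$ correction combine and collapse (note in particular that $\sigma^{AB}c_{AB}=0$, killing any trace-paired contribution against $Y^1\sigma^{AB}$). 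The surviving residue should read
\[ \delta\tilde Q(\bar Y)-\delta\tilde Q(Y) = c_0 \int_{S^2} f\, Y^1\, [m]\, d\mu, \qquad [m]:=m(+\infty)-m(-\infty), \]
for a fixed nonzero constant $c_0$, traced back to the $\bar u Y^1\bar m$ piece of (\ref{classical}) under the shift $\bar u = u - f$.

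Third, I would conclude by spectral theory on $S^2$. The vanishing of the above integral for every supertranslation $f$ of mode $\ell\ge 2$ and every classical BMS field $Y$ (equivalently, every $\ell=1$ scalar $Y^1=\tfrac12\na_A Y^A$) amounts to $[m]$ being $L^2$-orthogonal to every product $fY^1$ with $f\in\ell\ge 2$ and $Y^1\in\ell=1$. By Clebsch--Gordan, for $\ell_1\ge 2$ and $\ell_2=1$ the product spans modes $L=\ell_1-1,\ell_1,\ell_1+1$, and as $\ell_1$ varies over $\ge 2$ every mode $L\ge 1$ is reached. Hence the vanishing condition is equivalent to $[m]$ having no $\ell\ge 1$ modes, i.e., $[m]$ constant on $S^2$.

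The principal technical obstacle is Step 2: contributions from the transformation of $\bar N_A$, from the two $C$-bilinear pieces of (\ref{classical}), and from the $\bar u Y^1\bar m$ shift all arise at the same order in $f$ and $c_{AB}$ in the $\bar u\to\pm\infty$ limit, and organizing them into exactly the single $fY^1[m]$ integral requires systematic use of the conformal Killing identities for $Y^A$, integration by parts on $S^2$, and the elementary spherical identity $\na^B c_{AB}=\tfrac12\na_A(\Delta f+2f)$ used to reduce $c_{AB}$-derivatives paired against classical $Y^A$ back to $f$ itself.
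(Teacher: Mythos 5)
There is a genuine gap, concentrated in your Step 2. Your claimed residue $\delta\tilde Q(\bar Y)-\delta\tilde Q(Y)=c_0\int_{S^2}fY^1[m]$ is not what the computation yields. The paper's Theorem \ref{supertranslation_ambiguity_classical} gives
\[
\delta\tilde Q(\bar Y)-\delta\tilde Q(Y)=\int_{S^2}\bigl(-f\,\nabla_AY^A+2Y^A\nabla_Af\bigr)\bigl(m(+)-m(-)\bigr),
\]
which contains, besides the $-2fY^1[m]$ piece you identify, the term $2\int_{S^2}Y^A\nabla_Af\,[m]$; after integration by parts this contributes $-2\int_{S^2}fY^A\nabla_A[m]$, which is not a multiple of $\int fY^1[m]$ for non-constant $[m]$. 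This extra term does not come from the $\bar uY^1\bar m$ shift at all: it arises from the interaction of the shear shift $F_{AB}=2\nabla_A\nabla_Bf-\Delta f\,\sigma_{AB}$ with the news in the quadratic flux terms (in the paper's notation, the $-F_{AB}\nabla_DN^{BD}+N_{AB}\nabla_DF^{BD}-\tfrac12\epsilon_{AB}\nabla^B(\epsilon^{PQ}F_P^{\ E}N_{EQ})$ contributions, reorganized via Lemma \ref{integral_lemma} into $(-\nabla_AY^Af+2Y^A\nabla_Af)\nabla_B\nabla_DN^{BD}$ and then matched against $\partial_um$). So your assertion that ``all contributions containing $[C_{AB}]$ cancel'' and only the time-shift of $\bar uY^1\bar m$ survives is false. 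Your final equivalence happens to come out right—both the correct residue and your residue vanish for all $\ell\ge2$ functions $f$ and all conformal Killing $Y^A$ precisely when $[m]$ is constant, by essentially your Clebsch--Gordan argument—but that coincidence does not repair the derivation.

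A second, structural problem is the strategy of evaluating $\tilde Q(\bar u,\bar Y)$ at $\bar u=\pm\infty$ term by term. Under $N_{AB}=O(|u|^{-1-\veps})$ the angular momentum aspect does \emph{not} converge: by \eqref{duNA}, $\partial_uN_A\to\nabla_Am(\pm)+\tfrac14\epsilon_{AB}\nabla^B(\epsilon^{PQ}\nabla_P\nabla^EC_{EQ})(\pm)$, which is generically nonzero, so $N_A$ grows linearly in $u$ and ``the asymptotic values $\bar N_A(\pm\infty,x)$'' you propose to express in terms of $m(\pm)$, $C_{AB}(\pm)$, $f$ simply do not exist; only the full combination defining $\tilde Q$ has finite limits. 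The paper sidesteps this by first computing $\partial_u\tilde Q$ (Proposition \ref{evolution1}), in which the non-decaying pieces cancel against the $uY^1\partial_um$ term via the conformal Killing equation, and then writing the total flux as $\int_{-\infty}^{+\infty}\partial_u\tilde Q\,du$; the ambiguity is read off from the change of the quadratic-in-$(C,N)$ flux density under $C_{AB}\mapsto C_{AB}-F_{AB}$ and the time shift $u\mapsto u-f$. To make your boundary-evaluation route rigorous you would have to group the divergent terms before passing to the limit, at which point you are effectively reproducing the flux-density computation.
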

In particular, as long as $\lim_{u \rw +\infty} m(u,x) - \lim_{u \rw -\infty}m(u,x)$ is non-constant, the total flux of the classical charge is shifted by supertranslations and can assume any value. The supertranslation ambiguity of an extended BMS field is also identified, see Theorem \ref{extended_flux_ambiguity}.

In order to remove the supertranslation ambiguity, we define a new charge for the BMS algebra by adding correction terms. To define the new charge, we consider the decomposition of $C_{AB}$ into 
\begin{equation}\label{CAB}C_{AB}=\nabla_A\nabla_B c-\frac{1}{2} \sigma_{AB} \Delta c+\frac{1}{2}(\epsilon_A^{\,\,\,\, E} \nabla_E \nabla_B \underline{c}+\epsilon_B^{\,\,\,\, E} \nabla_E \nabla_A \underline{c})\end{equation} where $\epsilon_{AB}$ denotes  the area form of $\sigma_{AB}$.  $c=c(u, x)$ and $\underline{c}=\underline{c}(u, x) $ are the closed and co-closed potentials of $C_{AB}(u, x)$. They are chosen to be of $\ell\geq 2$ harmonic modes and thus such a decomposition is unique. 

Given the decomposition of the shear tensor $C_{AB}$, its Hodge dual, $(*C)_{AB}={\epsilon_A}^DC_{DB}$, admits the following decomposition 
\[
(*C)_{AB}=-\nabla_A\nabla_B \underline{c} + \frac{1}{2} \sigma_{AB} \Delta \underline{c}+\frac{1}{2}(\epsilon_A^{\,\,\,\, E} \nabla_E \nabla_B c+\epsilon_B^{\,\,\,\, E} \nabla_E \nabla_A c).
\]
In the following, we propose a notion of invariant charge for an extended BMS field.
\begin{definition}\label{def_invariant_charge} The {\it invariant charge} $Q(u, Y)$ associated with an extended BMS field $Y$ on $(-\infty, \infty)\times \mathscr{U}$ is defined to be:
\begin{align}\label{invariant_charge} \begin{split}
  Q(u, Y) &= \tilde Q(u,Y) + \int_\mathscr{U} (Y^A \na_A c - Y^1 c )m\\
&\quad - \frac{1}{16}\int_\mathscr{U} c \na^A\na^B(\na_D Y^D)C_{AB}\\
&\quad -\frac{1}{16}\int_\mathscr{U}  \underline{c} \na^A\na^B(\na_D Y^D)  (*C)_{AB} .
\end{split}  \end{align} 
\end{definition}
Again, we assume that $m$, $C_{AB}$ and $N_A$ are supported on $\mathscr{U}$.  

\begin{remark}
For a classical BMS field $Y$, 
\begin{align}\label{invariant_charge_classical}
\begin{split}
Q(u,Y) &= \int_{S^2 } Y^A [N_A-\frac{1}{4}C_{A}^{\,\,\,\,D}\nabla^B C_{DB} - \frac{1}{16}\nabla_{A}(C_{DE}C^{DE})] \\
&\quad +\int_{S^2} (Y^0+uY^1) (2m) + \int_{S^2} \lt( Y^A \na_A c - Y^1 c\rt)m.
\end{split}
\end{align}
This provides a unified expression for the invariant angular momentum ($Y^A = \epsilon^{AB}\na_B \tilde X^k$) and invariant center of mass integral ($Y^A = \na^A \tilde X^k$) studied in \cite{CWWY}. We do not include the two terms involving co-closed potentials in Chen-Wang-Yau center of mass integral in order to retain boost invariance.
\end{remark}
The correction term for a classical BMS field is exactly the same as the correction term for the invariant angular momentum and center of mass defined in \cite{CKWWY, CWWY}, which are the limits of CWY quasilocal angular momentum and center of mass at null infinity \cite{CWY3, CWY4, KWY}. The correction term arises from solving the optimal isometric embedding equation in the theory of Wang-Yau quasilocal mass \cite{WY1, WY2}. This provides the reference term  that is critical in the Hamiltonian approach of defining conserved quantities.

We can similarly define the total flux of our new invariant charge and the supertranslation ambiguity. We prove that the total flux of the invariant charge is free of supertranslation ambiguity for both classical BMS fields and extended BMS fields.

\begin{theorem}[Theorem \ref{classical_invariant}]
Suppose $Y$ is a classical BMS field and $N_{AB}=O(|u|^{-1-\epsilon})$ as $u\rightarrow \pm \infty$, then the total flux of the invariant charge ${Q}(u, Y)$ is  supertranslation invariant.
\end{theorem}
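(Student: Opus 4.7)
The plan is to decompose $Q(u,Y)=\tilde Q(u,Y)+\mathcal{C}(u,Y)$ with correction term
\[
\mathcal{C}(u,Y):=\int_{S^2}\bigl(Y^A\nabla_A c-Y^1 c\bigr)m,
\]
where I have dropped the last two integrals in Definition~\ref{def_invariant_charge} because, for a classical BMS field, $\nabla_D Y^D=2Y^1$ is of harmonic mode $\ell=1$ and the traceless Hessian of an $\ell=1$ function on $S^2$ vanishes. Supertranslation invariance of $\delta Q$ then reduces to showing that the flux difference $\delta\mathcal{C}(\bar Y)-\delta\mathcal{C}(Y)$ under $S_f$ exactly cancels the anomaly $\delta\tilde Q(\bar Y)-\delta\tilde Q(Y)$ identified in the proof of Corollary~\ref{supertranslation invariance of total flux}.

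The preparatory step is to record the transformation laws under $S_f:(u,x)=(\bar u+f(\bar x),\bar x)$. The shear obeys $\bar C_{AB}(\bar u,\bar x)=C_{AB}(\bar u+f,\bar x)-2\nabla_A\nabla_B f+\sigma_{AB}\Delta f$, which by uniqueness of the decomposition (\ref{CAB}) on $\ell\ge 2$ modes gives $\bar c(\bar u,\bar x)=c(\bar u+f,\bar x)-f(\bar x)$ and $\bar{\underline c}(\bar u,\bar x)=\underline c(\bar u+f,\bar x)$. The news transforms covariantly, while $\bar m$ picks up extra contributions bilinear in $\nabla f$ and the news/shear. Under the decay $N_{AB}=O(|u|^{-1-\epsilon})$ these extra pieces are integrable in $u$ and produce no residue at $\bar u\to\pm\infty$, leaving at the two endpoints $\bar m^\pm=m^\pm$, $\bar c^\pm=c^\pm-f$ (on $\ell\ge 2$ modes), and $\bar C^\pm_{AB}=C^\pm_{AB}-2\nabla_A\nabla_B f+\sigma_{AB}\Delta f$.

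Feeding these limit relations into $\delta\mathcal{C}$, using the identifications $Y^A=\delta^A_a\bar Y^a$ and $\bar Y^1=Y^1$ from (\ref{supertranslation_related}), and telescoping between the $+$ and $-$ endpoints yields
\[
\delta\mathcal{C}(\bar Y)-\delta\mathcal{C}(Y)=-\int_{S^2}\bigl(Y^A\nabla_A f-Y^1 f\bigr)\bigl(m^+-m^-\bigr).
\]
I expect this expression, after one integration by parts on $S^2$ using $\nabla^A Y_A=2Y^1$ from (\ref{cke_intro}), to coincide up to sign with the anomaly $\delta\tilde Q(\bar Y)-\delta\tilde Q(Y)$ derived in Corollary~\ref{supertranslation invariance of total flux}, producing the desired cancellation and therefore $\delta Q(\bar Y)=\delta Q(Y)$ for every supertranslation $S_f$ of mode $\ell\ge 2$.

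The hard part will be the second paragraph: deriving the precise supertranslation transformation of the mass aspect and verifying that, after integrating in $u$, the only surviving correction to $m^\pm$ under $S_f$ is the one inherited from the shift $c^\pm\mapsto c^\pm-f$ of the shear potential, with no residual finite-$u$ integrals against $f$ surviving at $\pm\infty$. This is precisely the place where the decay assumption $N_{AB}=O(|u|^{-1-\epsilon})$ is used. Once this decoupling is established, the remainder is bookkeeping: an integration by parts on $S^2$, an application of the conformal Killing equation (\ref{cke_intro}), and matching against the anomaly formula for $\tilde Q$.
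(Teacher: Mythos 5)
Your overall strategy coincides with the paper's: combine the classical-charge anomaly from Theorem \ref{supertranslation_ambiguity_classical} with the endpoint transformation of the correction term $\int_{S^2}(Y^A\nabla_A c - Y^1 c)\,m$; you correctly observe that the two remaining correction integrals in Definition \ref{def_invariant_charge} drop out for a classical field (the traceless Hessian of the $\ell=1$ function $\nabla_DY^D$ vanishes) and that $\bar m(\pm)=m(\pm)$ under the stated decay. However, there is a genuine computational error that breaks the cancellation: the shift of the closed potential. Since $C_{AB}=\nabla_A\nabla_B c-\tfrac{1}{2}\sigma_{AB}\Delta c+\cdots$ while the shear shifts by $-2\nabla_A\nabla_B f+\sigma_{AB}\Delta f=-2\bigl(\nabla_A\nabla_B f-\tfrac{1}{2}\sigma_{AB}\Delta f\bigr)$, uniqueness of the decomposition \eqref{CAB} gives $\bar c(\pm)=c(\pm)-2f_{\ell\ge 2}$, not $c(\pm)-f$.

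This factor of $2$ is not cosmetic. With your shift, the correction term changes by $-\int_{S^2}\bigl(Y^A\nabla_A f-Y^1 f\bigr)\bigl(m(+)-m(-)\bigr)$, whereas the anomaly of the classical charge is $\int_{S^2}\bigl(-f\nabla_AY^A+2Y^A\nabla_A f\bigr)\bigl(m(+)-m(-)\bigr)=2\int_{S^2}\bigl(Y^A\nabla_A f-Y^1 f\bigr)\bigl(m(+)-m(-)\bigr)$ by $\nabla_AY^A=2Y^1$ (no integration by parts is needed, and none would change the mismatch). Their sum is $\int_{S^2}\bigl(Y^A\nabla_A f-Y^1 f\bigr)\bigl(m(+)-m(-)\bigr)$, which does not vanish for generic $f$ of mode $\ell\ge 2$ when $m(+)-m(-)$ is non-constant; as written, your computation would disprove the theorem. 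With the correct shift the correction changes by $-2\int_{S^2}\bigl(Y^A\nabla_A f_{\ell\ge 2}-Y^1 f_{\ell\ge 2}\bigr)\bigl(m(+)-m(-)\bigr)$, the sum collapses to $2\int_{S^2}\bigl(Y^A\nabla_A f_{\ell\le 1}-Y^1 f_{\ell\le 1}\bigr)\bigl(m(+)-m(-)\bigr)$, and this vanishes exactly for supertranslations of mode $\ell\ge 2$ --- which is the paper's argument. So the gap is a single but essential coefficient; once corrected, your outline reproduces the paper's proof.
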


In order to ensure the finiteness of the total flux and prove the supertranslation invariance for an extended BMS field, a slightly stronger decay condition for the news at $i^0$ and $i^+$ is required. 

\begin{theorem}[Theorem \ref{extended_invariant}]
Suppose $Y$ is an extended BMS field and $N_{AB}=O(|u|^{-2-\epsilon})$ as $u\rightarrow \pm \infty$, then the total flux of the invariant charge ${Q}(u, Y)$ is supertranslation invariant.
\end{theorem}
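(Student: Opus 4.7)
The plan is to isolate the features that appear when $Y$ is extended rather than classical, and to show that the three correction integrals added in (\ref{invariant_charge}) together cancel the supertranslation ambiguity of the classical charge $\tilde Q(u,Y)$ identified in Theorem \ref{extended_flux_ambiguity}.

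First, I would record the transformation laws of the geometric data under a supertranslation $S_f$ with $u=\bar u+f(\bar x)$. The shear transforms as $\bar C_{AB}(\bar u,\bar x)=C_{AB}(\bar u+f,\bar x)-2\na_A\na_B f+\sigma_{AB}\Delta f$, so the uniqueness of the decomposition (\ref{CAB}) forces $\bar c=c|_{\bar u+f}-2f$ and $\bar{\underline c}=\underline c|_{\bar u+f}$ (with $f$ in its $\ell\geq 2$ part), while $m$ and $N_A$ acquire corrections built from $f$ together with $\partial_u$-derivatives of $C_{AB}$ and $N_{AB}$. The strengthened decay $N_{AB}=O(|u|^{-2-\veps})$ guarantees that $C_{AB}$, $c$, $\underline c$ admit finite limits $C^\pm_{AB}$, $c^\pm$, $\underline c^\pm$ at $u=\pm\infty$, that $uN_{AB}\to 0$, and that the $u$-integrals appearing in the total flux converge; this is essential both for the finiteness of $\delta Q$ and for discarding the $u$-boundary contributions produced by integration by parts against $f$.

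Next, I would invoke Theorem \ref{extended_flux_ambiguity} to obtain an explicit expression for the classical ambiguity $\delta\tilde Q(\bar Y)-\delta\tilde Q(Y)$ as an integral over $\mathscr U$ built from $f$, $(Y^0,Y^1,Y^A)$, and the differences $m^+-m^-$, $C^+_{AB}-C^-_{AB}$, $\underline c^+-\underline c^-$. In parallel, I would compute the shift in each of the three correction integrals. The first, $\int_\mathscr U(Y^A\na_A c-Y^1 c)m$, shifts by $-2\int_\mathscr U(Y^A\na_A f-Y^1 f)(m^+-m^-)$ after taking the flux. The $c$- and $\underline c$-weighted correction integrals shift both through $c\to c-2f$ and through $C_{AB}\to C_{AB}-2\na_A\na_B f+\sigma_{AB}\Delta f$; after integrating by parts on $S^2$ to move $\na^A\na^B$ off of $\na_D Y^D$ and onto $cC_{AB}$ or $\underline c(*C)_{AB}$, these shifts become integrals over $\mathscr U$ of $f$ paired with contractions of $\na_D Y^D$ against $C^\pm_{AB},c^\pm,\underline c^\pm$.

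Matching the sum of these three shifts against the classical ambiguity is the main task. Applying the decomposition (\ref{CAB}) to $C^+_{AB}-C^-_{AB}$ supplies the algebraic identities needed to pair the $c$-correction with the closed part of the shear difference and the $\underline c$-correction with the co-closed (Hodge-dual) part, while the first correction absorbs the $f(m^+-m^-)$ contribution exactly as in \cite{CKWWY,CWWY}. The main obstacle is precisely this bookkeeping: in the classical case the identity $\na^A\na^B Y^1+Y^1\sigma^{AB}=0$ (valid because $Y^1$ is of harmonic mode $\ell=1$) makes the $c$- and $\underline c$-weighted corrections vanish identically, but for an extended BMS field every harmonic mode of $\na_D Y^D$ contributes and the cancellations must be verified through careful integration by parts and repeated use of (\ref{CAB}). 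Once this matching is established, one concludes $\delta Q(\bar Y)-\delta Q(Y)=0$ for every supertranslation of mode $\ell\geq 2$, yielding the claimed invariance.
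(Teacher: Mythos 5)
Your overall strategy coincides with the paper's: start from the classical ambiguity of Theorem \ref{extended_flux_ambiguity}, compute the supertranslation shift of each correction term in \eqref{invariant_charge}, and show that the shifts cancel the ambiguity up to contributions from $f_{\ell\leq 1}$ only. The first correction does absorb the $(m(+)-m(-))$ piece as you say, and the $c$- and $\underline c$-weighted corrections are meant to absorb the boundary term $-\frac14\int_{\mathscr U}f\,\na^A\na^B(\na_DY^D)C_{AB}\big|_{-\infty}^{+\infty}$ that is new in the extended case.

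However, the step you defer as ``the main task'' is where the proof actually lives, and your proposal omits the tool that makes it work. The shift of $-\frac{1}{16}\int_{\mathscr U}c\,\na^A\na^B(\na_DY^D)C_{AB}$ under $c\mapsto c-2f_{\ell\geq 2}$ produces \emph{two} cross terms: one with $f$ in the undifferentiated slot paired with the traceless Hessian of $c(\pm)$, and one with $c(\pm)$ in the undifferentiated slot paired with the traceless Hessian of $f$. Only the first has the shape of the classical ambiguity term. Recombining them requires the symmetry of the bilinear forms
\[
T(v,w)=\int_{\mathscr U} v\big(\na_A\na_Bw-\tfrac12\sigma_{AB}\Delta w\big)\na^A\na^B(\na_CY^C),\quad S(v,w)=\int_{\mathscr U} v\,\big({\epsilon_A}^D\na_D\na_Bw+{\epsilon_B}^D\na_D\na_Aw\big)\na^A\na^B(\na_CY^C),
\]
which is Lemma \ref{T} and rests on two inputs you never invoke: the eigenfunction equation $(\Delta+2)(\na_CY^C)=0$ for an extended conformal Killing field (Lemma \ref{eigen_form}), and the support hypotheses that legitimize integration by parts on $\mathscr U$. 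The symmetry defect is $\tfrac12\int(v\Delta w-w\Delta v)(\Delta+2)(\na_CY^C)$, so it is precisely this equation that kills it; ``careful integration by parts and repeated use of \eqref{CAB}'' will not produce the identity without it. Two further points your sketch should record: the purely quadratic terms $T(f_{\ell\geq2},f_{\ell\geq2})$ and the analogous $S$-terms are $u$-independent and hence drop out of the flux difference; and the residual ambiguity involves only $f_{\ell\leq1}$ because the potential shifts by $-2f_{\ell\geq 2}$ while the classical ambiguity involves the full $f$ — which is exactly why invariance holds for supertranslations of mode $\ell\geq 2$ and not in general.
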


Next, we study the invariance of the total flux of the charges under a boost, first for the classical BMS fields and next for the extended BMS fields. 

\begin{definition} \label{def_boost_related} Let $B_g$ be the boost $(u, x)=(K(\bar{x}) \bar{u}, g(\bar{x}))$. 
We say that two extended BMS fields  $Y=(Y^0+uY^1)\partial_u+Y^A \partial_A$ and $\bar{Y}=(\bar{Y}^0+\bar{u}\bar{Y}^1)\partial_{\bar{u}}+\bar{Y}^a \partial_{a}$ are \it{related by $B_g$} if 
\begin{equation}Y^0=K\bar{Y}^0, Y^1=\bar{Y}^1+ K^{-1}\bar Y^a \na_a K, Y^A=\partial_a g^A \bar{Y}^a\end{equation}
where the left hand sides are evaluated at $x$ and  the right hand sides are evaluated at $\bar x = g^{-1}(x)$.
\end{definition}

\begin{theorem}[Theorem \ref{boost_invariance_flux}]
Suppose $\bar Y$ and $Y$ are two classical BMS fields related by a boost and $N_{AB}=O(|u|^{-1-\epsilon})$ as $u\rightarrow \pm \infty$. We have
\[
\begin{split}
\delta \tilde Q(\bar Y) = \delta \tilde Q (Y) \text{ and }
\delta  Q(\bar Y) = \delta Q (Y).
\end{split}
\]
\end{theorem}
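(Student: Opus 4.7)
The plan is to establish the boost transformation laws for the Bondi-Sachs data $(m, N_A, C_{AB}, N_{AB})$ and for the potentials $(c, \underline{c})$, then verify that the integrands of $\tilde{Q}(u, Y)$ and $Q(u, Y)$ transform covariantly (as densities on $S^2$) when $Y$ is replaced by the boost-related field $\bar Y$. Boost invariance of the total flux will then follow by pulling back along $g$, applying the conformal relation $g^*\sigma = K^2 \bar\sigma$ to the area form, and passing to the limits $u \to \pm\infty$.

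The key computational step is to transform each term in \eqref{classical} under $B_g$. Using the standard Bondi-Sachs transformation rules for $m$ and $C_{AB}$ with their appropriate conformal weights, together with the fact that $N_A$ carries an anomalous piece precisely compensated by the quadratic shear corrections $-\frac{1}{4}C_A{}^D \na^B C_{DB} - \frac{1}{16}\na_A(C_{DE}C^{DE})$ already present in \eqref{classical}, the first integral reduces to its barred analogue up to boundary terms. For the mass integral, the identity $\bar u \bar Y^1 + \bar Y^0 = K^{-1}(uY^1 + Y^0) - \bar u K^{-1}\bar Y^a \na_a K$ derived from Definition \ref{def_boost_related} combines with the weight of $m$ and the Jacobian of $g$ to produce the correct equality, with the stray $\na_a K$ piece absorbed by integration by parts using the conformal Killing equation $\na_a \bar Y^a = 2\bar Y^1$ on $(S^2, \bar\sigma)$. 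For classical BMS fields the third integral in \eqref{classical} vanishes identically by mode orthogonality, so no additional work is needed there.

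A subtle point is that the slice $\{u = u_0\}$ does not correspond to a slice $\{\bar u = \bar u_0\}$ under the boost, since $\bar u = K^{-1}(\bar x) u$ depends on $\bar x$; hence $\tilde Q(u_0, Y)$ is not pointwise equal to $\tilde Q(\bar u_0, \bar Y)$ at finite retarded time. The remedy occurs at the level of the flux: since $K > 0$ is bounded above and below on $S^2$, the limits $u \to \pm\infty$ and $\bar u \to \pm\infty$ are equivalent uniformly in $\bar x$, and the decay assumption $N_{AB} = O(|u|^{-1-\varepsilon})$ guarantees that $C_{AB}$ admits well-defined limits $C_{AB}^\pm$ at $i^\pm$ together with finiteness of $m$ and $N_A$ there. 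Taking limits of the covariantly transformed integrands then yields $\delta \tilde Q(\bar Y) = \delta \tilde Q(Y)$.

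For the invariant charge, the potentials $c$ and $\underline c$ transform under $B_g$ by a definite conformal weight, determined by applying the Hodge decomposition \eqref{CAB} to the boosted $C_{AB}$ and invoking the uniqueness of this decomposition in the $\ell \geq 2$ modes. Each correction term in \eqref{invariant_charge}---the piece $(Y^A \na_A c - Y^1 c)\, m$ and the two terms involving $\na^A\na^B(\na_D Y^D)$ paired with $C_{AB}$ and $(*C)_{AB}$---then carries the correct conformal weight to balance the area form after pullback, giving $\delta Q(\bar Y) = \delta Q(Y)$. The main obstacle throughout is the bookkeeping of conformal weights, Jacobian factors, and cross terms: most delicately, the cancellation of the anomalous part of the $N_A$ transformation against the quadratic shear corrections, and the matching of $\bar u \bar Y^1 \bar m$ against $u Y^1 m$ after accounting for the $\na_a K$ contribution in $\bar Y^1$.
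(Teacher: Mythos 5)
There is a genuine gap in your strategy for $\delta\tilde Q$. You propose to transform the integrand of $\tilde Q(u,Y)$ covariantly and then ``take limits of the covariantly transformed integrands'' at $i^\pm$, asserting that the decay $N_{AB}=O(|u|^{-1-\varepsilon})$ gives finiteness of $N_A$ there. This is false in general: from \eqref{duNA}, $\partial_u N_A = \nabla_A m + \dots$ tends to $\nabla_A m(\pm)\neq 0$, so the angular momentum aspect grows linearly in $u$, as does the term $\int uY^1(2m)$; only their combination inside $\tilde Q$ stays bounded, and the total flux is finite only as the convergent integral $\int_{-\infty}^{+\infty}\partial_u\tilde Q\,du$. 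Term-by-term covariant transformation plus passage to the limit therefore cannot be carried out as described. The paper's proof avoids this entirely by splitting $\bar Y$ into $\bar Y^0\partial_{\bar u}$ and $\bar u\bar Y^1\partial_{\bar u}+\bar Y^a\partial_a$ and, for the second piece, rewriting the total flux as the Ashtekar--Streubel bulk integral \eqref{Ashtekhar-Streubel flux}, which is quadratic in shear and news only --- the aspect $N_A$ and its anomalous boost transformation never enter. Boost invariance is then a computation with \eqref{shear}, \eqref{news}, the change-of-variable formula, and the pointwise identity $\epsilon_{AB}(\epsilon^{PQ}C_P^{\;E}N_{EQ}) = C_A^{\;D}N_{DB}-C_B^{\;D}N_{DA}$. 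Your claim that the anomalous part of the $N_A$ transformation is ``precisely compensated by the quadratic shear corrections'' is asserted without computation and does not reflect the actual mechanism: in the paper the anomalous terms (proportional to $\bar u\, m\,\partial_a K$ and $\bar u\, P_{BA}\hat\nabla^b K$) appear only in the non-radiative fixed-time statement (Theorem \ref{boost_invariance_nonradiative}), where they are killed after integration over the sphere via Lemma \ref{integral formula for nonradiative case} and the vanishing of the traceless Hessian of $K\circ g^{-1}$, not by cancellation against the $C\nabla C$ terms.

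Your treatment of the invariant-charge correction term is essentially right and matches the paper: the potentials transform with weight $\bar c(\pm)=K^{-1}c\circ g$ and $\bar m(\pm)=K^3 m\circ g$ (these limits do exist under the stated decay), and the correction $\int(Y^A\nabla_A c-\tfrac12\nabla_AY^A\,c)\,m$ is checked to be invariant directly; for classical fields the two terms involving $\nabla^A\nabla^B(\nabla_DY^D)$ indeed drop out by mode considerations. But since $\delta Q=\delta\tilde Q+{}$correction, the gap above still blocks the conclusion $\delta Q(\bar Y)=\delta Q(Y)$.
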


\begin{theorem}[Theorem \ref{boost_equ_extended_flux}]
Suppose $\bar Y$ and $Y$ are two extended BMS fields related by a boost and $N_{AB}=O(|u|^{-2-\epsilon})$ as $u\rightarrow \pm \infty$. We have
\[
\begin{split}
\delta \tilde Q(\bar Y) = \delta \tilde Q (Y) \text{ and }
\delta  Q(\bar Y) =  \delta Q (Y).
\end{split}
\]
\end{theorem}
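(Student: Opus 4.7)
The plan is to reduce boost invariance for extended BMS fields to a term-by-term check using the transformation laws of the Bondi-Sachs data under a boost, following the blueprint of the classical-field case just proven (Theorem \ref{boost_invariance_flux}). The new content is that two families of terms which vanish for classical BMS fields must now be handled: the third (extended-BMS) integral
\[
-\frac{1}{4}\int_\mathscr{U} u(\epsilon_{AB}\nabla^A Y^B)\epsilon^{PQ}\nabla_P\nabla^E C_{EQ}
\]
in the classical charge, and the last two correction integrals in $Q(u,Y)$ which carry the factor $\nabla_D Y^D$ and likewise vanish whenever $\nabla_A Y^A = 2Y^1$ has mode $\ell = 1$ only.

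First I would record the boost transformation rules for $(m, N_A, C_{AB}, N_{AB})$ derived in Appendix C. These assign definite conformal weights in $K$ to each quantity so that, combined with $d\mu_\sigma = K^2 d\mu_{\bar\sigma}$, the identity $Y^A = \partial_a g^A \bar Y^a$, and
\[
Y^0 + uY^1 = K(\bar Y^0+\bar u\bar Y^1) + \bar u\bar Y^a \nabla_a K
\]
from Definition \ref{def_boost_related}, each piece of the Dray-Streubel integrand in \eqref{classical} pulls back to its barred counterpart as a top-form on $S^2$. This is exactly the computation carried out for Theorem \ref{boost_invariance_flux}, so the first two integrals in \eqref{classical} present no new difficulty.

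Second I would treat the extra extended-BMS integral. Using $u = K\bar u$, the boost transformation of the divergence of a conformal Killing vector, and the transformation of the shear, the integrand $u(\epsilon_{AB}\nabla^A Y^B)\epsilon^{PQ}\nabla_P\nabla^E C_{EQ}\, d\mu_\sigma$ should pull back to its barred counterpart; here the specific structure of $\epsilon^{PQ}\nabla_P\nabla^E C_{EQ}$, which depends only on the co-closed potential $\underline{c}$ of $C_{AB}$ via \eqref{CAB}, is essential. For the invariant charge, the potentials $c, \underline{c}$ are determined by $C_{AB}$ uniquely in the mode $\ell\geq 2$ gauge, so their boost transformation is inherited from that of $C_{AB}$. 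Boost invariance of the three correction integrals then follows by integration by parts on $S^2$, the conformal Killing equation \eqref{cke_intro}, and the transformation rule for $m$.

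Third, the stronger decay $N_{AB} = O(|u|^{-2-\epsilon})$ guarantees that the integrands of the extended-BMS terms are absolutely integrable in $u$ and that the limits $u\to\pm\infty$ can be interchanged with integration over the sphere; since $K > 0$, these limits correspond to $\bar u \to \pm\infty$, and the total fluxes match. The main obstacle I anticipate is the ``anomalous'' piece $\bar u\bar Y^a\nabla_a K$ appearing in the transformation of $(Y^0 + uY^1)$: for classical fields it cancels against a contribution from the $Y^A N_A$ term, but for extended fields one must also track how it interacts with the new $u(\epsilon_{AB}\nabla^A Y^B)$ integral and with the $\nabla_D Y^D$ factor in the correction terms of $Q$. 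Carefully organizing these cancellations, and verifying that the decay hypothesis is exactly strong enough to justify every integration by parts and every limit interchange, is where the bulk of the computation lies.
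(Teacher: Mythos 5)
Your overall strategy --- cite the classical computation of Theorem \ref{boost_invariance_flux} for the Dray--Streubel terms and then isolate the genuinely new pieces, namely the extra integral $-\frac{1}{4}\int u(\epsilon_{AB}\nabla^AY^B)\epsilon^{PQ}\nabla_P\nabla^EC_{EQ}$ and the correction terms carrying $\nabla_DY^D$ --- is exactly the paper's. But the proposal stops short of the one idea that actually makes those new terms work, and the mechanism you do sketch for the extra term is the wrong one. The paper first uses the identity \eqref{closed=coclosed}, which converts the ``curl'' expression $(\epsilon_{ab}\bar\nabla^a\bar Y^b)\bar\epsilon^{pq}\bar\nabla_p\bar\nabla^e\bar N_{eq}$ into $(\bar\nabla^a\bar\nabla^b\bar Y^1-\frac{1}{2}\bar\Delta\bar Y^1\bar\sigma^{ab})\bar N_{ab}$, and then invokes the pointwise conformal covariance of the traceless Hessian of $K\bar Y^1$ (Lemma \ref{Y1_conformal}),
\[
\hat\na_a \hat\na_b (K\bar Y^1) - \tfrac{1}{2} \hat\Delta(K\bar Y^1) \hat\sigma_{ab}  = K \lt( \hat\na_a \hat\na_b (Y^1\circ g) - \tfrac{1}{2} \hat\Delta (Y^1 \circ g) \hat\sigma_{ab} \rt),
\]
whose proof is itself a nontrivial computation relying on the conformal Killing equation and $\hat\Delta K=-2K+\mathrm{const}$. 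Combined with $\bar N_{ab}=\pl_ag^A\pl_bg^BN_{AB}$ and the change-of-variable formula, this gives the boost invariance of the extra term, and the same lemma handles the two correction integrals in $Q$ (via $\bar c=K^{-1}c\circ g$ and $K\bar C_{ab}=\pl_ag^A\pl_bg^BC_{AB}$). Nothing in your proposal identifies or substitutes for this covariance statement; ``the boost transformation of the divergence of a conformal Killing vector'' is in the right neighborhood but is not the required tensorial identity, and the ``anomalous'' $\bar u\bar Y^a\nabla_aK$ cancellation you flag is already absorbed in the classical computation you are citing.

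The specific route you propose for the extra term --- exploiting that $\epsilon^{PQ}\nabla_P\nabla^EC_{EQ}$ depends only on the co-closed potential $\underline{c}$ --- would require integrating by parts over $\mathscr{U}$ to pass derivatives onto $\epsilon_{AB}\nabla^AY^B$ and extract $\underline{c}$, and that step is not justified under the theorem's hypotheses: only $m,N_A,C_{AB},N_{AB}$ are assumed supported in $\mathscr{U}$, not their potentials, and $\underline{c}$ is generically not compactly supported in $\mathscr{U}$ even when $C_{AB}$ is. (The paper confronts precisely this issue in the non-radiative Theorem \ref{boost_equ_extended_vanish}, where it must \emph{add} the hypothesis that $\mathring{\underline{c}}$ is supported in $g(\mathscr{U})$.) So as written, the proposal is a correct outline whose hardest step is either missing or routed through an argument that fails without extra assumptions.
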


The other major results of this article concern non-radiative spacetimes\footnote{A spacetime is non-radiative if the news tensor vanishes.}. In such case, $Q(u, Y)$ is supertranslation invariant and boost invariant. See the statement of Theorem \ref{invariant_CWY_vanish_news}, Theorem \ref{boost_invariance_nonradiative}, Theorem \ref{invariant_CWY_vanish_news_extended} and Theorem \ref{boost_equ_extended_vanish} for further details.

The paper is organized as follows. In Section 2, we review the Bondi-Sachs coordinate system, compute the evolution of classical charges, and present the formula of Bondi-Sachs data in non-radiative spacetimes. In Section 3, the supertranslation ambiguity of the classical charge is defined and  supertranslation invariance of the invariant charges is proved. Section 4 demonstrates the boost invariance for both the classical charge and the invariant charge. Section 5 and 6 generalize the results in Section 3 and 4 to extended BMS fields. There are three appendices presenting useful formula and explicit forms of extended BMS fields.

\section{Background information}
In this section, we review the Bondi-Sachs coordinate systems and compute the evolution of the classical charge in such a coordinate system.
\subsection{Bondi-Sachs coordinates and BMS transformations}
In terms of a Bondi-Sachs coordinate system $(u, r,  x^2, x^3)$, near $\mathscr{I}^+$ of a vacuum spacetime, the metric takes the form
\begin{equation}\label{spacetime_metric}g_{\alpha\beta}dx^\alpha dx^\beta= -UV du^2-2U dudr+r^2 h_{AB}(dx^A+W^A du)(dx^B+W^B du).\end{equation} 

The index conventions here are $\alpha, \beta=0,1, 2, 3$, $A, B=2, 3$, and $u=x^0, r=x^1$. See \cite{BVM, MW} for more details of the construction of the coordinate system. 

The metric coefficients $U, V, h_{AB}, W^A$  of \eqref{spacetime_metric} depend on $u, r, \theta, \phi$, but $\det h_{AB}$ is independent of $u$ and $r$. These gauge conditions thus reduce the number of metric coefficients of a Bondi-Sachs coordinate system to six (there are only two independent components in $h_{AB}$). On the other hand, the boundary conditions $U\rightarrow 1$, $V\rightarrow 1$, $W^A\rightarrow 0$, $h_{AB}\rightarrow \sigma_{AB}$ are imposed as $r\rightarrow \infty$. The special gauge choice of the Bondi-Sachs coordinates implies a hierarchy among the vacuum Einstein equations, see \cite{ HPS,MW}.

 Assuming the outgoing radiation condition \cite{BVM, Sachs, MW}, the boundary condition and the vacuum Einstein equation imply that as $r\rightarrow \infty$, all metric coefficients can be expanded in inverse integral powers of $r$. In particular (see Chrusciel-Jezierski-Kijowski \cite[(5.98)-(5.100)]{CJK} for example), 
\[\begin{split} U&=1 - \frac{1}{16r^2} |C|^2 + O(r^{-3}),\\
V&=1-\frac{2m}{r}+ \frac{1}{r^2}\lt( \frac{1}{3}\na^A N_A + \frac{1}{4} \na^A C_{AB} \na_D C^{BD} + \frac{1}{16}|C|^2 \rt) + O(r^{-3}),\\
W^A&= \frac{1}{2r^2} \na_B C^{AB} + \frac{1}{r^3} \lt( \frac{2}{3}N^A - \frac{1}{16} \na^A |C|^2 -\frac{1}{2} C^{AB} \na^D C_{BD} \rt) + O(r^{-4}),\\
h_{AB}&={\sigma}_{AB}+\frac{C_{AB}}{r}+ \frac{1}{4r^2} |C|^2 \sigma_{AB} + O(r^{-3})\end{split} \] where  $m=m(u, x^A)$ is the mass aspect, $N_A = N_A(u,x^A)$ is the angular aspect and $C_{AB}=C_{AB}(u, x^A)$ is the shear tensor of this Bondi-Sachs coordinate system. Note that our convention of angular momentum aspect differs from that of Chrusciel-Jezierski-Kijowski, $N_A =-3 N_{A(CJK)}$. Here we take norm, raise and lower indices of tensors with respect to the metric $\sigma_{AB}$. We also define the {\it news tensor} $N_{AB} = \pl_u C_{AB}$.

To close this subsection, we note that, 
identifying a BMS field $Y=(Y^0+uY^1)\partial_u+Y^A \partial_A$ with $(Y^A,Y^0)$, the Lie bracket of the BMS algebra is (see \cite[(2.11)]{BT2})
\[
[(Y_1^A,Y_1^0),(Y_2^A,Y_2^0)]=(\hat Y^A,\hat Y^0)
\]
where
\[
\begin{split}
\hat Y^A=&Y_1^B\na_BY_2^A -Y_2^B\na_BY_1^A\\
\hat Y^0=&Y_1^A\na_AY^0_2-Y_2^A\na_AY^0_1+ \frac{1}{2} \Big (Y^0_1\na_A Y^A_2  -Y^0_2\na_A Y^A_1 \Big )
\end{split}
\]
In particular, the Lie bracket between a supertranslation $f \partial_u$ and a BMS field $Y$ is 
\begin{equation}\label{supertranslation_bracket}
[ f \partial_u, Y]=(\frac{1}{2}f \nabla_AY^A -Y^A \nabla_A f  )\partial_u.
\end{equation}
As a result, the supertranslations form an ideal in the BMS algebra.

\subsection{Evolution of the classical charge}
In this subsection, we derived the evolution formula of the classical charge, which will be used to calculate the total flux of the classical charge and the invariant charge in the later sections. 

The  evolution of the mass aspect \cite[(5.102)]{CJK} and the angular momentum aspect \cite[Proposition 3.1]{CKWWY} are given by 
\begin{equation}\label{dum}\partial_u m=-\frac{1}{8}N_{AB} N^{AB}+ \frac{1}{4} \nabla^A\nabla^B N_{AB},
\end{equation}
and \begin{equation}\label{duNA}\begin{split}\partial_u N_A
=& \na_A (m+\frac{1}{8}C_{BE} N^{BE}) + \frac{1}{4} \epsilon_{AB} \nabla^B( \epsilon^{PQ} \nabla_P \nabla^E C_{EQ})\\
&+\frac{1}{8} \epsilon_{AB} \nabla^B (\epsilon^{PQ} C_P^{\,\,\, E}N_{EQ} )     +\frac{1}{2} C_{AB}\nabla_D N^{DB}.\end{split}\end{equation}

 \begin{proposition}\label{evolution1} Let $Y$ be an extended BMS field defined on $(-\infty, \infty)\times \mathscr{U}$. The classical  charge $ \tilde{Q}(u, Y)$ evolves according to the following:
\begin{equation}\label{charge_evolution}\begin{split}
\partial_u \tilde Q(u, Y) &=\frac{1}{4}\int_{\mathscr{U}} Y^A \lt[ C_{AB}\nabla_{D}N^{BD} -N_{AB}\nabla_{D}C^{BD} +\frac{1}{2} \epsilon_{AB} \nabla^B (\epsilon^{PQ} C_P^{\,\,\, E}N_{EQ})        \rt]\\
&\quad -\frac{1}{4}\int_{\mathscr{U}} uY^1|N|^2 +\int_{\mathscr{U}}  Y^0 \partial_u (2m)\\
&\quad +\frac{1}{4}\int_{\mathscr{U}}  u \lt[  (2Y^1)\na^A\na^BN_{AB} -   (\epsilon_{AB} \nabla^A Y^B) \epsilon^{PQ} \nabla_P \nabla^E N_{EQ} \rt]
\end{split}\end{equation}
Note the last line vanishes for classical BMS fields.
\end{proposition}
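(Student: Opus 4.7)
The plan is to differentiate $\tilde Q(u,Y)$ directly in $u$, substitute the evolution formulas \eqref{dum} and \eqref{duNA} for $m$ and $N_A$, and then cancel terms via integration by parts using the conformal Killing equation $\na_A Y^A = 2Y^1$. Since $m$, $N_A$, $C_{AB}$ are supported in $\mathscr{U}$, all the integrations by parts on $\mathscr{U}$ produce no boundary contribution.

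First I would compute $\partial_u$ of each of the three integrals in \eqref{classical}. In the first integral, the derivative of $C_{A}^{\ D}\na^B C_{DB}$ produces $N_A^{\ D}\na^B C_{DB} + C_A^{\ D}\na^B N_{DB}$, and the derivative of $\na_A(C_{DE}C^{DE})$ produces $2\na_A(C^{DE}N_{DE})$. In the second integral, $\partial_u$ gives $2Y^1 m + (Y^0+uY^1)\partial_u(2m)$. In the third, $\partial_u$ gives the $u$-less piece $-\tfrac14 (\epsilon_{AB}\na^A Y^B)\epsilon^{PQ}\na_P\na^E C_{EQ}$ plus the expected $N$-piece with $u$.

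Next, I would substitute \eqref{duNA} into $\int Y^A \partial_u N_A$. Three cancellations/simplifications then conspire to produce the stated formula. First, the term $\tfrac18 Y^A\na_A(C_{BE}N^{BE})$ from $\partial_u N_A$ cancels exactly the $-\tfrac18 Y^A\na_A(C^{DE}N_{DE})$ coming from $\partial_u$ of the last term in the first integral. Second, using $\na_A Y^A = 2Y^1$ and integration by parts, $\int_\mathscr{U} Y^A \na_A m = -\int_\mathscr{U} 2Y^1 m$, so this absorbs the $\int 2Y^1 m$ term from the second integral. Third, the $C_{AB}\na_D N^{DB}$ and $C_A^{\ D}\na^B N_{DB}$ contributions are equal by symmetry of $C$ and $N$, and combine as $\tfrac12 - \tfrac14 = \tfrac14$, producing the coefficient in the first line of \eqref{charge_evolution}.

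The last cancellation is the one requiring care: the $C$-type $\epsilon$-term from $\partial_u N_A$, namely $\tfrac14 \int Y^A \epsilon_{AB}\na^B(\epsilon^{PQ}\na_P\na^E C_{EQ})$, must cancel the surviving $u$-less piece $-\tfrac14\int(\epsilon_{AB}\na^A Y^B)\epsilon^{PQ}\na_P\na^E C_{EQ}$ from the third integral. Integrating by parts once moves the outer $\na^B$ off the bracketed factor onto $\epsilon_{AB}Y^A$, giving $-\tfrac14\int \epsilon_{AB}\na^B Y^A \cdot \epsilon^{PQ}\na_P\na^E C_{EQ}$, and using the antisymmetry $\epsilon_{AB}\na^B Y^A = -\epsilon_{AB}\na^A Y^B$ produces exactly the needed sign. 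After this cancellation, what remains are precisely the first and third lines of \eqref{charge_evolution}, together with $\int Y^0 \partial_u(2m) - \tfrac14\int uY^1|N|^2 + \tfrac12\int uY^1 \na^A\na^B N_{AB}$ from the $m$-evolution applied to the $(Y^0+uY^1)(2m)$ piece; rewriting $\tfrac12 Y^1 = \tfrac14(2Y^1)$ matches the target.

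The main obstacle is purely bookkeeping: juggling the various $\epsilon$-tensor and divergence terms so that all the ``non-diagonal'' contributions cancel in pairs. The substantive content of the calculation is the observation that the new $u$-dependent term added to $\tilde Q$ in \eqref{classical} is designed precisely to annihilate the $C$-type $\epsilon$-piece produced by $\partial_u N_A$ via \eqref{duNA}, thereby leaving only $N$-dependent flux terms on the right-hand side.
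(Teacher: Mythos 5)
Your proposal is correct and follows essentially the same route as the paper: differentiate \eqref{classical} term by term, substitute the evolution equations \eqref{dum} and \eqref{duNA}, and use $\nabla_A Y^A = 2Y^1$ with integration by parts to cancel $\int_{\mathscr{U}}\left[Y^A\nabla_A m + 2Y^1 m\right]$. The cancellations you track explicitly (the quadratic $C\cdot N$ terms, the $C$-type $\epsilon$-term against the $u$-less piece of the third integral) are exactly the ones the paper leaves implicit in the phrase ``apply \eqref{dum} and \eqref{duNA}''.
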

\begin{proof}
From the definition of the classical charge \eqref{classical}, we compute 
\begin{equation}\begin{split}\label{classical_evolution}
\partial_u\tilde Q(u, Y) =& \int_{\mathscr{U}} Y^A [\partial_u N_A-\frac{1}{4}\partial_u(C_{A}^{\,\,\,\,D}\nabla^B C_{DB}) - \frac{1}{16}\nabla_{A}\partial_u (C_{DE}C^{DE})]\\
&+ \int_{\mathscr{U}} Y^0 \partial_u(2m)+\int_{\mathscr{U}} Y^1 (2m)+\int_{\mathscr{U}} uY^1 \partial_u(2m).\\
&- \frac{1}{4} \int_\mathscr{U}  (\epsilon_{AB} \nabla^A Y^B) \epsilon^{PQ} \nabla_P \nabla^E C_{EQ} + u (\epsilon_{AB} \nabla^A Y^B) \epsilon^{PQ} \nabla_P \nabla^E N_{EQ} 
\end{split}\end{equation} and then apply \eqref{dum} and \eqref{duNA}. Finally, we use \eqref{cke}$,\nabla_A Y^A=2Y^1$,  to show that $\int_{S^2} \lt[Y^A\na_A m+Y^1(2m) \rt]=0$. \end{proof}



\subsection{Non-radiative spacetimes}
By a {\it non-radiative spacetime}, we mean a spacetime with zero news tensor $N_{AB}(u,x) \equiv 0$. This includes all model spacetimes such as Minkowski and Kerr.
The vanishing of the news tensor implies $\pl_u m(u, x)=0, \pl_u C_{AB}(u, x)=0$ and thus 

\[ m(u, x)\equiv \mathring{m}(x), C_{AB}(u, x)\equiv \mathring{C}_{AB}(x)\] and both potentials $c$ and $\underline{c}$ are independent of $u$ as well. 
By Proposition \ref{evolution1},  the classical charge is independent of $u$ in a non-radiative spacetime. In fact, the same holds for the invariant charge since the corrections terms are
independent of $u$ in a non-radiative spacetime as well.

In the rest of this subsection, we pin down the exact formulae for the transformation of Bondi-Sachs data under supertranslations and boost, in a non-radiative spacetime. In general, under a BMS transformation the shear and news tensors are transformed by \cite[Corollary 2.7]{CKWWY2} 
\begin{align}
\bar C_{ab}(\bar u,\bar x) &= K^{-1}(\bar x) \pl_a g^A \pl_b g^B C_{AB}(\hat f, g(\bar x)) - 2 \bna_a \bna_b \hat f + \bar\Delta \hat f \bar\sigma_{ab} \label{shear}\\
\bar N_{ab}(\bar u, \bar x) &= \pl_{\bar u} \bar C_{ab} = \pl_a g^A \pl_b g^B N_{AB}(\hat f, g(\bar x)) \label{news}
\end{align}
where $\hat f(\bar u, \bar x) = K(\bar x)\bar u + f(\bar x)$. See also \cite[page 163]{CJK}. Consequently, vanishing of news is preserved under BMS transformations.

First we focus on supertransltion.  Since the spherical coordinate is unchanged, we use the same index $A$ for both $x$ and $\bar x$ to simplify the notation and   use $x$ to denote either $x^A$ or $\bar x^A$ in the following lemma.
\begin{lemma}
Suppose the news $N_{AB}(u, x)\equiv 0$ in a Bondi-Sachs coordinate system $(u, x)$ and $(\bar{u}, x)$ is another Bondi-Sachs coordinate system that is related to $(u, x)$ by a supertranslation $u=\bar{u}+f$ for $f\in C^\infty(S^2)$. Then we have
\begin{equation}\label{transform}\begin{split}
\bar{m}(\bar u,x) &= \mathring{\bar{m}}(x)= \mathring{m}(x) \\
\bar{C}_{AB}(\bar u,x) &= \mathring{\bar{C}}_{AB}(x)=\mathring{C}_{AB}(x) - F_{AB} \\
\mathring{\bar{c}}&=\mathring{c}-2 f_{\ell\geq 2}\\
\mathring{\bar{\underline{c}}}&=\mathring{\underline{c}}
\end{split}\end{equation} where $F_{AB}=2 \na_A\na_B f -\Delta f \sigma_{AB} $
and
\begin{equation}
\label{AM_aspect2} 
\begin{split}
\bar{N}_A  (\bar{u}, x)=  N_A  (u_0, x) +(\bar{u}-u_0+f)       (\na_A \mathring{m}-\frac{1}{4}\nabla^B \mathring{P}_{BA}  )+3\mathring{m}\nabla_A f-\frac{3}{4}\mathring{P}_{BA} \na^B f
\end{split}
\end{equation} for any $\bar{u}$ and fixed $u_0$. 
\end{lemma}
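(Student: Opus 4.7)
The plan is to combine the BMS transformation rules \eqref{shear}--\eqref{news} for shear and news, specialised to supertranslation ($K=1$, $g=\mathrm{id}$, $\hat f(\bar u, \bar x) = \bar u + f(\bar x)$), with the non-radiative hypothesis $N_{AB}\equiv 0$, and then obtain the $\bar N_A$ identity by integrating the non-radiative form of \eqref{duNA} and applying the supertranslation pullback rule for $N_A$.

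Substituting into \eqref{shear} and using $C_{AB}(u,x) \equiv \mathring C_{AB}(x)$ gives $\bar C_{AB}(\bar u, x) = \mathring C_{AB}(x) - F_{AB}$ with $F_{AB} = 2\na_A\na_B f - \Delta f\,\sigma_{AB}$; \eqref{news} then shows $\bar N_{AB} \equiv 0$, so $\bar C_{AB}$ is $\bar u$-independent and equals $\mathring{\bar C}_{AB}(x)$. The potential identities follow from uniqueness of the closed/co-closed decomposition \eqref{CAB}: since $F_{AB} = \na_A\na_B(2f) - \tfrac{1}{2}\sigma_{AB}\Delta(2f)$ is of purely closed type with potential $2f$ (whose $\ell\leq 1$ modes lie in the kernel of the traceless Hessian), we conclude $\mathring{\bar c} = \mathring c - 2f_{\ell\geq 2}$ and $\mathring{\bar{\cb}} = \mathring{\cb}$. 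For the mass aspect, applying \eqref{dum} in the barred system with $\bar N_{AB} \equiv 0$ gives $\pl_{\bar u}\bar m = 0$, so $\bar m(\bar u, x) = \mathring{\bar m}(x)$; the identity $\mathring{\bar m} = \mathring m$ then follows from the general supertranslation transformation of $m$, where every correction term carries at least one factor of $N_{AB}$ or of $\pl_u C_{AB}$ and hence vanishes in the non-radiative case.

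For the angular momentum aspect, setting $N_{AB} = 0$ in \eqref{duNA} produces
\[
\pl_u N_A = \na_A \mathring m + \tfrac{1}{4}\epsilon_{AB}\na^B\bigl(\epsilon^{PQ}\na_P\na^E \mathring C_{EQ}\bigr),
\]
a $u$-independent expression that may be abbreviated $\na_A\mathring m - \tfrac{1}{4}\na^B \mathring P_{BA}$, which fixes the notation $\mathring P_{AB}$ appearing in the statement. Integrating yields
\[
N_A(u, x) = N_A(u_0, x) + (u - u_0)\lt(\na_A \mathring m - \tfrac{1}{4}\na^B \mathring P_{BA}\rt).
\]
Invoking the supertranslation transformation rule for $N_A$, which in the non-radiative regime reduces to $\bar N_A(\bar u, x) = N_A(\bar u + f, x) + 3\mathring m\,\na_A f - \tfrac{3}{4}\mathring P_{BA}\na^B f$, and substituting $u = \bar u + f$ into the linear-in-$u$ expression for $N_A$ collects to exactly \eqref{AM_aspect2}.

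The main obstacle is pinning down the precise supertranslation transformation law for $N_A$ with the exact numerical coefficients $3$ and $\tfrac{3}{4}$. This is not a consequence of non-radiativity per se but a general BMS identity arising from the pullback of the $1/r^3$ coefficient of $W^A$ in the Bondi--Sachs expansion; it can be derived by a direct (tedious) expansion in Bondi--Sachs gauge, or extracted from the general transformation rules of \cite{CKWWY2}. Once that law is in hand, the remainder is substitution and collection of terms.
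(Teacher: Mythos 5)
Your proposal is correct and follows essentially the same route as the paper: read off the shear/news/mass-aspect identities from the supertranslation transformation formulas (all correction terms carry a factor of the news), identify the potentials via uniqueness of the closed/co-closed decomposition, integrate the non-radiative form of the $N_A$ evolution equation, and substitute into the transformation law $\bar N_A(\bar u,x)=N_A(\bar u+f,x)+3\mathring m\,\na_A f-\tfrac{3}{4}\mathring P_{BA}\na^B f$ quoted from \cite{CKWWY2}. The one ingredient you flag as needing external input (the coefficients $3$ and $\tfrac34$) is exactly what the paper also imports from \cite{CKWWY2} rather than rederiving.
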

\begin{proof}
The mass aspect $\bar{m} (\bar u, x)
$, the shear $\bar{C}_{AB}(\bar u,x)$, and the news $ \bar{N}_{AB}(\bar u,x)$  in the $(\bar u, \bar{x})$ coordinate system are related to the mass aspect ${m} (u, x)
$, the shear ${C}_{AB}(u,x)$, and the news $ {N}_{AB}(u,x)$  in the $(u, x)$ coordinate system through:
\begin{equation}\label{basic_data}\begin{split}
\bar{m}(\bar u,x) &= m(\bar u+f,x) + \frac{1}{2} (\na^B N_{BD})(\bar u+f,x) \na^D f \\
& + \frac{1}{4} (\partial_{u}N_{BD})(\bar u+f,x) \na^B f\na^D f + \frac{1}{4} N_{BD}(\bar u+f,x) \na^B\na^D f\\
\bar{C}_{AB}(\bar u,x) &= C_{AB}(\bar u+ f(x),x) - 2 \na_A\na_B f + \Delta f \sigma_{AB} \\
\bar{N}_{AB}(\bar u,x) &= N_{AB}(\bar u+f(x),x)
\end{split}\end{equation}
See \cite[(C.117) and (C.119)]{CJK} for example. 

In particular, $N_{AB}(u, x)\equiv 0$ implies that $\bar{N}_{AB}(\bar u,x)\equiv 0$ and
\begin{align*}
\bar{m}(\bar u,x) &= \mathring{\bar{m}}(x)= \mathring{m}(x) \\
\bar{C}_{AB}(\bar u,x) &= \mathring{\bar{C}}_{AB}(x)=\mathring{C}_{AB}(x) - F_{AB} \\
\mathring{\bar{c}}&=\mathring{c}-2 f_{\ell\geq 2}\\
\mathring{\bar{\underline{c}}}&=\mathring{\underline{c}}.
\end{align*}

For the angular momentum aspect, we have
\[\partial_u N_A  (u, x)= \na_A m (u, x) -\frac{1}{4}\nabla^B P_{BA} (u, x)\]
where \[ P_{BA} (u, x)=(\nabla_B \nabla^E C_{EA}-\nabla_A \nabla^E C_{EB})(u, x).\]

Therefore, \[\partial_u N_A  (u, x)= \na_A \mathring{m} (x) -\frac{1}{4}\nabla^B \mathring{P}_{BA} (x)\] is independent of $u$. Integrating gives
\begin{equation}\label{AM_aspect1} N_A  (u, x)=  N_A  (u_0, x) +(u-u_0)       (\na_A \mathring{m} -\frac{1}{4}\nabla^B \mathring{P}_{BA} )\end{equation} for any $u$ and fixed $u_0$.

Finally, from  \cite{CKWWY2} the angular momentum aspect transforms by 
\[\begin{split}\bar{N}_A(\bar{u}, x)&=N_{A}(\bar{u}+f, x)+3m(\bar{u}+f, x)\nabla_A f-\frac{3}{4}(\nabla_B \nabla^E C_{EA}-\nabla_A \nabla^E C_{EB})(u, x)
\na^B f\\
&=N_{A}(\bar{u}+f, x)+3\mathring{m}\nabla_A f-\frac{3}{4}\mathring{P}_{BA} \na^B f.\\
\end{split}\]

Combining with \eqref{AM_aspect1} and setting $u=\bar{u}+f$, we obtain \eqref{AM_aspect2}.
\end{proof}
Next we focus on the boost. 
\begin{lemma}\label{transform_boost}
Suppose the news $N_{AB}(u,x) \equiv 0$ in a Bondi-Sachs coordinate system $(u,x)$, and $(\bar u, \bar x)$ is another Bondi-Sachs coordinate system that is related to $(u,x)$ by a boost $B_g$, namely,
\begin{align}
u = K(\bar x) \bar u, x = g(\bar x).
\end{align}
Then we have
\begin{align*}
\begin{split}
\bar m(\bar u, \bar x) &= K^3 \mathring m \\
\bar C_{ab} (\bar u, \bar x) &= K^{-1} \pl_a g^A \pl_b g^B \mathring C_{AB}\\
\bar c(\bar u,\bar x) &= K^{-1} \mathring c\\
\bar N_a (\bar u, \bar x) &= K^2 \pl_a g^A \lt(  N_A  (u_0, g(\bar x)) +(K\bar u-u_0) (\na_A \mathring{m} -\frac{1}{4}\nabla^B \mathring{P}_{BA} )\rt) \\
&\quad + 3 \mathring m K^2 \pl_a K \bar u - \frac{3}{4} K \pl_a g^A \mathring P_{BA} K \hat\na^b K \bar u \pl_b g^B
\end{split}
\end{align*}
where $K$ is evaluated at $\bar x$ and $\mathring m, \mathring C_{AB},\mathring c, \mathring P_{BA}$ are evaluated at $g(\bar x)$.
\end{lemma}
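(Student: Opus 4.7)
The plan is to specialize the general BMS transformation formulas to the case of a pure boost, i.e., set $f\equiv 0$ and $\hat f(\bar u,\bar x) = K(\bar x)\bar u$, and then use the non-radiative condition $N_{AB}\equiv 0$ to make all $u$-dependence disappear. Vanishing of news on the $(\bar u,\bar x)$ side is immediate from \eqref{news}. The recurring auxiliary input will be that the conformal factor $K$ of a conformal diffeomorphism of the round $S^2$ lies in harmonic modes $\ell\leq 1$, and therefore satisfies
\[
\bna_a\bna_b K = \tfrac{1}{2}\bar\Delta K\,\bar\sigma_{ab}.
\]

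For the shear, specializing \eqref{shear} gives
\[
\bar C_{ab}(\bar u,\bar x) = K^{-1}\pl_a g^A \pl_b g^B\,\mathring C_{AB}(g(\bar x)) + \bar u\bigl(-2\bna_a\bna_b K + \bar\Delta K\,\bar\sigma_{ab}\bigr),
\]
and the bracketed $\bar u$-term vanishes by the identity above. For the closed potential $\bar c$, I would verify that $\bar c = K^{-1}\mathring c\circ g$ solves the Hodge decomposition \eqref{CAB} for $\bar C_{ab}$: insert this ansatz, use the conformal covariance of the trace-free Hessian operator acting on a function of conformal weight $-1$, and appeal to the uniqueness of the mode $\ell\geq 2$ potentials. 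The analogous identification of $\bar{\underline c}$ would follow by applying the same argument to the Hodge dual decomposition.

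For the mass aspect I would quote the transformation law \cite[Corollary 2.7]{CKWWY2} (cf.\ \cite[(C.119)]{CJK}); under a pure boost with vanishing news it collapses to $\bar m(\bar u,\bar x) = K^3\, m(K\bar u, g(\bar x)) = K^3\mathring m$. For the angular momentum aspect, I would start from the general boost formula for $N_a$ recorded in \cite{CKWWY2}, whose two extra pieces beyond the conformal rescaling $K^2\pl_a g^A\, N_A(K\bar u,g(\bar x))$ are precisely the $3\mathring m\,K^2\bar u\,\pl_a K$ term and the $\mathring P$-term in the statement, and then substitute
\[
N_A(K\bar u, g(\bar x)) = N_A(u_0, g(\bar x)) + (K\bar u - u_0)\bigl(\na_A\mathring m - \tfrac{1}{4}\na^B \mathring P_{BA}\bigr),
\]
which is \eqref{AM_aspect1} evaluated at the point $g(\bar x)$ with retarded time $u = K\bar u$.

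The main obstacle I expect is the identification $\bar c = K^{-1}\mathring c\circ g$ (and its co-closed counterpart). This requires careful bookkeeping of conformal weights under $g^*\sigma = K^2\bar\sigma$: one must show that when the trace-free Hessian is applied to a function of conformal weight $-1$, it reproduces the pullback of the trace-free Hessian on the target $S^2$ with overall weight $-1$, so that every derivative-of-$K$ cross term cancels by virtue of the $\ell\leq 1$ identities for $K$. Once that identification is secured, the formulas for $\bar m$ and $\bar N_a$ follow directly from the known BMS transformation laws combined with the non-radiative evolution \eqref{AM_aspect1}.
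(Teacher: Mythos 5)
Your overall route is the same as the paper's: specialize \eqref{shear}--\eqref{news} to $\hat f = K\bar u$ and use non-radiativity to freeze the $u$-dependence, obtain $\bar c$ from the conformal covariance of the trace-free Hessian (the paper's \eqref{bar_traceless_Hessian}), quote the boost transformation laws for the mass and angular momentum aspects from \cite{CKWWY2}, and substitute \eqref{AM_aspect1}. That is exactly the paper's argument.

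There is, however, one step that fails as literally written: the auxiliary identity that ``$K$ lies in harmonic modes $\ell\le 1$'' and hence $\bna_a\bna_b K=\tfrac12\bar\Delta K\,\bar\sigma_{ab}$. With respect to the source round metric $\bar\sigma$ it is $K^{-1}=\alpha_0+\alpha_i x^i$ that is of mode $\ell\le 1$ (Appendix C, \eqref{Hessian K}), and from \eqref{Hessian K} one gets
\begin{align*}
\bna_a\bna_b K-\tfrac12\bar\Delta K\,\bar\sigma_{ab}=2K^{-1}\lt(\bna_a K\,\bna_b K-\tfrac12|\bar\nabla K|^2\,\bar\sigma_{ab}\rt),
\end{align*}
which does not vanish in general. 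The correct statement, used elsewhere in the paper (proof of Lemma \ref{Y1_conformal}; equivalently \eqref{bar_traceless_Hessian} with $f=1$), is that $K$ is of mode $\ell\le1$ with respect to $\hat\sigma=g^*\sigma$, i.e. $\hat\na_a\hat\na_b K-\tfrac12\hat\Delta K\,\hat\sigma_{ab}=0$. Your cancellation of the $\bar u$-linear term in the shear, and your weight-$(-1)$ verification that $\bar c=K^{-1}\mathring c\circ g$, both require the trace-free Hessians to be taken with respect to $\hat\sigma$ and the appeal to \eqref{bar_traceless_Hessian}; with the $\bar\sigma$-Hessian as you wrote it the term does not cancel. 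This is a local fix (and, for $\bar c$, one should add that any $\ell\le1$ modes of $K^{-1}\mathring c\circ g$ are annihilated by the trace-free Hessian, so uniqueness of the $\ell\ge2$ potential applies to its projection). The remaining items, $\bar m=K^3\mathring m$ and the formula for $\bar N_a$ via \eqref{AM_aspect1}, match the paper's proof.
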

\begin{proof}
 The formula for $\bar C_{ab}$ follows from $C_{AB}(K\bar u, g(\bar x)) = \mathring C_{AB}(g(\bar x))$ when $N_{AB} \equiv 0$ and then the formula for $\bar c$ follows from \eqref{bar_traceless_Hessian}.
 
Finally, from \cite[Theorem 3.1 and Theorem 3.3]{CKWWY2}, the assumption $N_{AB} \equiv 0$ implies that 
\begin{align*}
\bar m(\bar u, \bar x) &= K^3 m\\
\bar N_a (\bar u, \bar x) &= K^2 \pl_a g^A N_A + 3 m K^2 \pl_a K \bar u - \frac{3}{4} K \pl_a g^A P_{BA} K \hat\na^b K \bar u \pl_b g^B
\end{align*}
where $K$ is evaluated at $\bar x$ and $N_A, m, P_{BA}$ are evaluated at $(K\bar u, g(\bar x))$. The formula follows from \eqref{AM_aspect1}.
\end{proof} 
\part{Classical BMS fields}
\section{Supertranslation invariance of the invariant charges}\label{sec:supertranslation_invariance_classical}
In this section, we revisit the study of the effect of supertranslation on the total flux of a classical BMS field in \cite{CKWWY}, providing a unified treatment of angular momentum and center of mass integral. In the first subsection, the supertranslation ambiguity of the classical charge is identified, and then supertranslation invariance of the invariant charge is shown in the next subsection. In the third subsection we show that the classical charge itself, not just its total flux, is supertranslation invariant in non-radiative spacetimes.

\subsection{Supertranslation ambiguity of the classical charge}\label{subsec:supertranslation_ambiguity}
We study the effect of supertranslation on the total flux of the classical charge along null infinity or, equivalently, the difference of classical charge at timelike infinity and spatial infinity. Suppose $I=(-\infty,\infty)$ and $\mathscr{I}^+$ is complete extending from spatial infinity ($u=-\infty$) to timelike infinity ($u=+\infty$). A supertranslation is a change of coordinates $(\bar{u}, \bar x^A)\rightarrow (u, x^a)$ such that $u = \bar u + f(x), x^A = \delta^A_a\bar x^a$ on $\mathscr{I}^+$. Let $m$, $C_{AB}$, and $N_{AB}$ denote the mass aspect, the shear, and the news, respectively, in the $(u,  x^A)$ coordinate system. Since the spherical coordinate is unchanged, we use the same index $A$ for both $x$ and $\bar x$ to simplify the notation and   use $x$ to denote either $x^A$ or $\bar x^A$ throughout this section. 

We assume that there exists a constant $\varepsilon>0$ such that
\begin{align}\label{news_decay_1}
N_{AB}( u,x) = O(|u|^{-1-\varepsilon}) \mbox{ as } u \rw \pm\infty.
\end{align}
The decay rate \eqref{news_decay_1} implies that the limits of the shear tensor and the mass aspect exist 
\[ \lim_{ u \rw \pm\infty} C_{AB} (u,x) = C_{AB}(\pm), \lim_{ u \rw \pm\infty} m (u,x) = m(\pm).\]

Recall the mass aspect $\bar{m} (\bar u, x)$, the shear $\bar{C}_{AB}(\bar u,x)$, and the news $ \bar{N}_{AB}(\bar u,x)$  in the $(\bar u, x)$ coordinate system are related to those in  the $( u, x)$ coordinate system through:
\begin{equation}\label{supertranslation}\begin{split}
\bar{m}(\bar u,x) &= m(\bar u+f,x) + \frac{1}{2} (\na^B N_{BD})(\bar u+f,x) \na^D f \\
&\quad + \frac{1}{4} (\partial_{u}N_{BD})(\bar u+f,x) \na^B f\na^D f + \frac{1}{4} N_{BD}(\bar u+f,x) \na^B\na^D f\\
\bar{C}_{AB}(\bar u,x) &= C_{AB}(\bar u+ f(x),x) - 2 \na_A\na_B f + \Delta f \sigma_{AB} \\
\bar{N}_{AB}(\bar u,x) &= N_{AB}(\bar u+f(x),x)
\end{split}\end{equation}
Applying the chain rule on \eqref{supertranslation} yields
\begin{align*}
\na_D \bar C_{AB} (\bar u,x)&= N_{AB}(\bar u + f,x)\na_D f + (\na_D C_{AB})(\bar u+f,x) -\na_D F_{AB},\\ 
\na_D \bar{C}^{BD}(\bar u,x) &= N^{BD}(\bar u+f,x) \na_D f + (\na_D C^{BD}) ( \bar u+f ,x) -\na_D F^{BD}, 
\end{align*} where
 \[ F_{AB} = 2\na_A\na_B f - \Delta f \sigma_{AB} \] is a $u$-independent symmetric traceless two-tensor on $S^2$. 
 
We are ready to identify the supertranslation ambiguity of the total flux of classical charge for classical BMS fields.   
\begin{theorem}\label{supertranslation_ambiguity_classical} Suppose $Y$ is a classical BMS field and $N_{AB}=O(|u|^{-1-\epsilon})$, then the supertranslation ambiguity of the total flux of the classical charge $\delta \tilde{Q}$ is 
\[
   \delta \tilde Q(\bar Y) - \delta \tilde Q(Y)  = \int_{S^2}(-f \nabla_AY^A +2Y^A \nabla_A f )(m(+)- m(-)),
\]
if $\bar{Y}$ is related to $Y$ by a supertranslation $S_f$.
\end{theorem}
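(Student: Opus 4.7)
The plan is to expand $\tilde Q(\bar u,\bar Y)$ using the transformation rules for the Bondi--Sachs data under the supertranslation $S_f$, take limits $\bar u\to\pm\infty$ paired with $u=\bar u+f\to\pm\infty$, and then subtract to form $\delta\tilde Q(\bar Y)-\delta\tilde Q(Y)$.

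First I would substitute in \eqref{classical} the transformations \eqref{supertranslation} for $\bar m$, $\bar C_{AB}$, and $\bar N_{AB}$, together with the angular-momentum-aspect transformation from \cite{CKWWY2}, schematically
\[
\bar N_A(\bar u,x) = N_A(\bar u+f,x) + 3m(\bar u+f,x)\na_A f - \tfrac{3}{4}P_{BA}(\bar u+f,x)\na^B f + (\text{news-bilinear}),
\]
and the shear decomposition $\bar C_{AB}(\bar u,x) = C_{AB}(\bar u+f,x) - F_{AB}$ with $F_{AB} = 2\na_A\na_B f - \Delta f\,\sigma_{AB}$. Using the identifications $\bar Y^0=Y^0$, $\bar Y^1=Y^1$, $\bar Y^a = \delta^a_A Y^A$ from \eqref{supertranslation_related}, this rewrites $\tilde Q(\bar u,\bar Y)$ as $\tilde Q(\bar u+f,Y)$ plus three types of correction terms: (i) $f$-linear cross-terms pairing $m$, $C_{AB}$, or $P_{BA}$ at time $\bar u+f$ against $\na f$ or $F_{AB}$; (ii) $u$-independent pure-$F$ expressions produced by the quadratic shear terms; and (iii) news-bilinear remainders.

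Next I would pass to the limit $\bar u\to +\infty$. Under $N_{AB} = O(|u|^{-1-\veps})$ (and the corresponding decay of its spatial derivatives), all type-(iii) remainders vanish, and the potentially divergent combination $\bar u Y^1\cdot 2(\bar m(\bar u)-m(\bar u+f))$ vanishes because its integrand is $\bar u\cdot O(|\bar u|^{-1-\veps})$. The coordinate mismatch $\bar u - u = -f$ in the $(Y^0+\bar uY^1)\cdot 2\bar m$ term leaves a finite endpoint contribution $-2 f Y^1\cdot 2 m(+)$, and analogously $-2 f Y^1\cdot 2m(-)$ at $\bar u\to -\infty$. In the limit the shear, mass aspect and curl-type tensor converge to $C_{AB}(\pm)$, $m(\pm)$, and $P_{BA}(\pm)$.

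Subtracting the $+\infty$ and $-\infty$ contributions, the $u$-independent type-(ii) terms cancel. What remains is a sum of integrals against $m(+)-m(-)$, $C_{AB}(+)-C_{AB}(-)$ and $P_{BA}(+)-P_{BA}(-)$. The main obstacle is showing that the shear-dependent remainders collapse so that only the $m(+)-m(-)$ integral survives; this relies on repeated integration by parts on $S^2$, the identity $P_{BA} = \na_B\na^E C_{EA} - \na_A\na^E C_{EB}$, the commutator $[\na_A,\na_B]\omega_C = \sigma_{AC}\omega_B - \sigma_{BC}\omega_A$ on the unit sphere, the conformal Killing equation $\na^A Y^B + \na^B Y^A = 2 Y^1\sigma^{AB}$, and the classical-field property that $\na_A Y^A$ and $\epsilon_{AB}\na^A Y^B$ are of harmonic mode $\ell=1$. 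After these simplifications, together with $\na_A Y^A = 2Y^1$, the identity
\[
\delta \tilde Q(\bar Y) - \delta \tilde Q(Y) = \int_{S^2}(-f\na_A Y^A + 2Y^A\na_A f)(m(+) - m(-))
\]
emerges as claimed.
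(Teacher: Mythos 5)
Your route is genuinely different from the paper's. The paper never evaluates $\tilde Q(u,Y)$ at the endpoints in the radiative regime: it writes $\delta\tilde Q(Y)-\int 2Y^0(m(+)-m(-))$ as $\int_{-\infty}^{\infty}\partial_u\tilde Q\,du$ via the evolution formula of Proposition \ref{evolution1}, whose integrand involves only $m$, $C_{AB}$ and $N_{AB}$ (the angular momentum aspect $N_A$ has dropped out entirely). It then transforms that flux density with \eqref{supertranslation}, discards total $u$-derivatives using the news decay, and collapses the result to $\int(-f\nabla_AY^A+2Y^A\nabla_Af)\,\partial_u m$ by Lemma \ref{integral_lemma} together with $\partial_um=-\tfrac18|N|^2+\tfrac14\nabla^A\nabla^BN_{AB}$. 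Your endpoint-evaluation strategy is instead the one the paper uses only in the non-radiative setting (Theorem \ref{invariant_CWY_vanish_news}), extended to the radiative case by taking limits.

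As written, your proposal has genuine gaps. First, it hinges on the full finite-$u$ transformation law of $\bar N_A$ in a \emph{radiative} spacetime, which you only state ``schematically''; the paper quotes that law (from a reference in preparation) only under $N_{AB}\equiv 0$, and the radiative correction terms are numerous — you would need to exhibit them and verify that each, together with the relevant $u$- and angular derivatives of $N_{AB}$, decays (the hypothesis controls only $N_{AB}$ itself). Second, the separate limits $\lim_{u\to\pm\infty}\tilde Q(u,Y)$ do not exist termwise: $N_A$ grows linearly in $u$ (since $\partial_uN_A\to\nabla_Am(\pm)+\dots$), and only the combination $\int Y^AN_A+\int uY^1(2m)$ converges after an integration by parts using $\nabla_AY^A=2Y^1$; this cancellation of divergences must be tracked through the supertranslation before any limit is taken, and your sketch does not do so. Third, the ``main obstacle'' you defer — that the shear- and $P_{BA}$-dependent remainders collapse — is precisely the content of the paper's Lemma \ref{integral formula for nonradiative case}; asserting it follows from ``repeated integration by parts'' leaves the hardest computation undone. (There is also a factor slip: the endpoint contribution of the mismatch $\bar u-u=-f$ in the mass term is $-fY^1\cdot 2m(\pm)=-f\nabla_AY^A\,m(\pm)$, not $-2fY^1\cdot 2m(\pm)$.) The strategy can likely be completed, but these three items are exactly the work the paper's flux-integral argument is designed to avoid.
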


\begin{proof}

By Proposition \ref{evolution1} and the remark afterward, we obtain 
\begin{equation}\begin{split}
&\delta \tilde{Q}(Y)-\int 2Y^0(m(+)-m(-))\\
=& \frac{1}{4}\int_{-\infty}^{+\infty}\int_{S^2} Y^A \lt( C_{AB}\nabla_{D}N^{BD} -N_{AB}\nabla_{D}C^{BD} +\frac{1}{2} \epsilon_{AB} \nabla^B (\epsilon^{PQ} C_P^{\,\,\, E}N_{EQ}  )             \rt)\\
-&\frac{1}{4}\int_{-\infty}^{+\infty}\int_{S^2} uY^1|N|^2 =\RN{1}+\RN{2}.
\end{split}\end{equation}
Suppose $\bar{Y}$ is related to $Y$ by a supertranslation $S_f$, we derive:
\begin{equation}\label{after_classical}\begin{split}
&\delta \tilde{Q}(\bar{Y})-\int 2Y^0(m(+)-m(-))\\
=& \frac{1}{4}\int_{-\infty}^{+\infty}\int_{S^2} Y^A \lt( \bar C_{AB}\nabla_{D}\bar{N}^{BD} -\bar{N}_{AB}\nabla_{D}\bar{C}^{BD} +\frac{1}{2} \epsilon_{AB} \nabla^B (\epsilon^{PQ} \bar{C}_P^{\,\,\, E}\bar{N}_{EQ} )\rt)\\
-&\quad \frac{1}{4}\int_{-\infty}^{+\infty}\int_{S^2} \bar{u}Y^1|\bar N|^2 =\bar{\RN{1}}+\bar{\RN{2}}.\\
\end{split}\end{equation}
To evaluate $\bar{\RN{1}}-\RN{1}$, we compute 
\begin{align*}
\nabla^B ( \bar{C}_P^{\,\,\, E}\bar{N}_{EQ} )& =     \partial_u (({C}_P^{\,\,\, E}-F_P^{\,\,\, E}){N}_{EQ}  \na^B f)+ \na_B( ({C}_P^{\,\,\, E}-F_P^{\,\,\, E}){N}_{EQ} ) \\
\bar{C}_{AB} \na_D \bar{N}^{BD}& = \partial_u ((C_{AB}-F_{AB}) N^{BD} \na_D f)-N_{AB}N^{BD}\nabla_Df + (C_{AB}-F_{AB})\na_D N^{BD} \\
\bar{N}_{AB} \na_D \bar{C}^{BD} &= N_{AB} N^{BD} \na_D f + N_{AB}(\na_D C^{BD}  -\na_D F^{BD}).\\
\end{align*}
All left-hand sides are evaluated at $(\bar{u}, x)$ and all right hand sides are evaluated at $(\bar{u}+f, x)$. 
Plugging these into \eqref{after_classical}, applying Fubini's theorem and change of variable, the assumption \eqref{news_decay_1}, the identity $2N_{AB}N^{BD}=|N|^2\delta_A^D$, and \eqref{integral_formula}, we obtain
\[\begin{split}&\bar{\RN{1}}-\RN{1}\\
&=\frac{1}{4}\int_{-\infty}^{+\infty}\int_{S^2}\lt[ Y^A \lt( -F_{AB}\nabla_{D}N^{BD} +N_{AB}\nabla_{D}F^{BD}-\frac{1}{2} \epsilon_{AB} \nabla^B (\epsilon^{PQ} F_P^{\,\,\, E}N_{EQ} )  -\na_A f |N|^2\rt) \rt]  \\
&=\frac{1}{4}\int_{-\infty}^{+\infty} \int_{S^2}(-\nabla_AY^A f + 2 Y^A \nabla_A f )(\nabla_B\nabla_D N^{BD})- Y^A\na_A f |N|^2.\\
\end{split}\]

On the other hand,
\[\begin{split}\bar{\RN{2}}-\RN{2}&=-\frac{1}{4}\int_{-\infty}^{+\infty} \int_{S^2} (-f)(\frac{1}{2}\na_A Y^A)|{N}|^2)= \frac{1}{4}\int_{-\infty}^{+\infty} \int_{S^2} \frac{1}{2}f(\na_A Y^A)|{N}|^2. \end{split}\]

Therefore $\bar{\RN{1}}-\RN{1}+\bar{\RN{2}}-\RN{2}$ is given by

\[ \bar{\RN{1}}-\RN{1}+\bar{\RN{2}}-\RN{2}=\int_{-\infty}^{\infty}\int_{S^2}(-\nabla_AY^A f + 2 Y^A \nabla_A f )(\partial_u m). \]
The theorem follows immediately.
\end{proof}

 \begin{remark}
The case $Y^A = \epsilon^{AB}\na_B \tilde X^k$ of Theorem \ref{supertranslation_ambiguity_classical} corresponds to the combination of (3.21) and (3.22) in \cite{AKL}.
\end{remark}

\begin{corollary}\label{supertranslation invariance of total flux} Suppose the news tensor decays as
\[ N_{AB}( u,x) = O(|u|^{-1-\varepsilon}) \mbox{ as } u \rw \pm\infty,
\]
then the total flux of the classical charge is supertranslation invariant for all classical BMS fields if and only if $m(+)-m(-)$ is a constant on $S^2$. 

\end{corollary}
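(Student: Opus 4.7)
The corollary will follow directly from Theorem \ref{supertranslation_ambiguity_classical}, which identifies the ambiguity as
\[
\delta \tilde Q(\bar Y) - \delta \tilde Q(Y) = \int_{S^2}(-f\nabla_A Y^A + 2 Y^A \nabla_A f)\,h,
\]
where $h := m(+) - m(-)$. The task therefore reduces to showing that this integral vanishes for every classical BMS field $Y$ and every supertranslation $f$ of mode $\ell \geq 2$ if and only if $h$ is constant on $S^2$.

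For the easy direction, when $h$ is constant it pulls out of the integral, and integration by parts gives $\int_{S^2} 2 Y^A \nabla_A f = -2\int_{S^2} f\,\nabla_A Y^A$, so the integrand collapses to $-3h\int_{S^2}f\,(\nabla_A Y^A)$. Since $\nabla_A Y^A$ is a pure $\ell = 1$ spherical harmonic for a classical BMS field while $f$ is orthogonal to $\ell \leq 1$ modes by the definition of supertranslation invariance, the integral vanishes.

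For the converse direction I decompose $h = \sum_{\ell \geq 0} h_\ell$ into spherical harmonic modes and proceed in two steps. First, I test the ambiguity against rotation generators $Y^A$, for which $\nabla_A Y^A = 0$. Integration by parts reduces the vanishing condition to $\int_{S^2} f\,(Y^A \nabla_A h) = 0$ for all admissible $f$ and all rotations. Because rotations preserve harmonic modes, $Y^A \nabla_A h_\ell$ is again of mode $\ell$, so orthogonality against every test function of mode $\ell \geq 2$ forces $Y^A \nabla_A h_\ell = 0$ for each $\ell \geq 2$ and each rotational vector field. A function on $S^2$ annihilated by all three rotation generators is rotationally invariant, hence constant, and no nonzero constant lives in a mode $\ell \geq 2$; therefore $h_\ell = 0$ for $\ell \geq 2$ and we are reduced to $h = h_0 + h_1$.

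Second, I test against the boost generators $Y^A = \nabla^A X^k$, where $X^k$ denotes an $\ell = 1$ spherical harmonic (a coordinate function on $S^2$), so that $\nabla_A Y^A = -2 X^k$. The $h_0$ contributions drop out by mode orthogonality, and one integration by parts together with the standard identity $(\nabla^A X^k)(\nabla_A X^j) = \delta^{kj} - X^k X^j$ reduces the remaining integral, with $h_1 = \sum_j a^j X^j$, to a constant multiple of $\sum_j a^j \int_{S^2} f\,(X^k X^j - \tfrac{1}{3}\delta^{kj})$ (after using $\int_{S^2} f = 0$ to pass from $X^k X^j$ to its traceless part). As $f$ ranges over $\ell = 2$ spherical harmonics this pairing sweeps out every symmetric traceless $3 \times 3$ matrix, so the vector $a$ must be annihilated by all such matrices, forcing $a = 0$ and hence $h_1 = 0$. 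The main obstacle is this final step, where one must track the integrations by parts carefully and invoke the identification of $X^k X^j - \tfrac{1}{3}\delta^{kj}$ with a basis of the $\ell = 2$ harmonics to guarantee that the pairing is nondegenerate.
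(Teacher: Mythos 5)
Your proof is correct and follows essentially the same route as the paper: both reduce the statement to Theorem \ref{supertranslation_ambiguity_classical} and a single integration by parts, after which the question becomes whether $-3\nabla_AY^A\,h-2Y^A\nabla_A h$ has only $\ell\leq 1$ content for every conformal Killing field $Y^A$, where $h=m(+)-m(-)$. The paper asserts the equivalence of this condition with $h$ being constant in one line, whereas you supply the full converse argument (testing against rotations to force $h_\ell=0$ for $\ell\geq 2$, then against boosts and the nondegenerate pairing with $X^jX^k-\tfrac13\delta^{jk}$ to force $h_1=0$); your details check out.
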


\begin{proof} The supertranslation invariance is equivalent to 
\[\int_{S^2}f \lt[-3\nabla_AY^A (m(+)-m(-)) -2 Y^A \nabla_A  (m(+)-m(-))\rt]=0, \] for any function $f$ of mode $\ell\geq 2$ and any conformal Killing field $Y^A$ on $S^2$.  This is equivalent to the condition that 
\[ -3\nabla_AY^A (m(+)-m(-)) -2 Y^A \nabla_A  (m(+)-m(-))\] must be a function of mode $\ell\leq 1$ for any conformal Killing field $Y^A$, or $m(+)-m(-)$ must be of $\ell=0$. 
\end{proof}

\begin{remark}
Let
\[ (u,x) = B(\bar u, \bar x) = (K(\bar x) \bar u + f(\bar x), g(\bar x)) \]
be a BMS transformation and choose local coordinates $\bar x^a$ and $x^A$. Given a BMS field $\bar Y = (\bar Y^0 + \bar u \bar Y^1) \frac{\pl}{\pl \bar u} + \bar Y^a \frac{\pl}{\pl \bar x^a}$, we compute the differential of $B$
\begin{align*}
B_*(\bar Y) &= \begin{bmatrix}
K & \pl_a K \bar u + \pl_a f \\
0 & \pl_a g^A
\end{bmatrix} \begin{bmatrix}
\bar Y^0 + \bar u \bar Y^1 \\ \bar Y^a
\end{bmatrix} \\
&= \lt( K\bar Y^0 - f\bar Y^1 - K^{-1} f \bar Y^a \pl_a K + u(\bar Y^1 +K^{-1}\bar Y^a\pl_a K) + \bar Y^a \pl_a f \rt)\frac{\pl}{\pl u}\\
&\quad  + \pl_a g^A \bar Y^a \frac{\pl}{\pl x^A}.
\end{align*}
In view of this and \eqref{supertranslation_bracket}, Theorem \ref{supertranslation_ambiguity_classical} says $\delta\tilde Q(\bar Y) = \delta\tilde Q( {S_f}_* (\bar Y))$.
\end{remark}

\subsection{Supertranslation invariance of the total flux of the invariant charge}
We show that the invariant charge is supertranslation invariant for any classical BMS field. 
\begin{theorem}\label{classical_invariant}
Suppose $Y$ is a classical BMS field and $N_{AB}=O(|u|^{-1-\epsilon})$, then the total flux of the invariant charge is supertranslation invariant. Namely,
\[
\begin{split}
 \delta Q(\bar Y)- \delta Q(Y)=&\int_{S^2}(-f_{l\le 1} \nabla_AY^A +2Y^A \nabla_A f_{l\le 1}  )(m(+)- m(-))\\
                                              =&-\delta Q ([ f_{\ell\leq 1} \partial_u, Y])
\end{split}
\]
if $\bar{Y}$ is related to $Y$ by a supertranslation $S_f$ \eqref{supertranslation_related}.
\end{theorem}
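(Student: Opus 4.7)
The plan is to reduce the question to the known ambiguity of the classical charge (Theorem~\ref{supertranslation_ambiguity_classical}) plus a short calculation of how the single correction term for a classical BMS field transforms at $u=\pm\infty$. Writing, from \eqref{invariant_charge_classical},
\[
  Q(u,Y) = \tilde Q(u,Y) + \mathcal C(u,Y), \qquad \mathcal C(u,Y):=\int_{S^2}\bigl(Y^A\nabla_A c - Y^1 c\bigr)\,m,
\]
we have $\delta Q(Y) = \delta\tilde Q(Y) + \mathcal C(+\infty,Y) - \mathcal C(-\infty,Y)$, and the same for $\bar Y$. So the task reduces to evaluating $[\mathcal C(\bar Y)-\mathcal C(Y)]_{-\infty}^{+\infty}$.

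The decay $N_{AB}=O(|u|^{-1-\varepsilon})$ forces the news-dependent terms in \eqref{supertranslation} to vanish in the limit, so $\bar m(\pm)=m(\pm)$ and $\bar C_{AB}(\pm)=C_{AB}(\pm)-F_{AB}$ with $F_{AB}=2\nabla_A\nabla_B f-\Delta f\,\sigma_{AB}$. Because the symmetric-traceless Hessian annihilates $\ell\le 1$ functions on $S^2$, one has $F_{AB}=\nabla_A\nabla_B(2f_{\ell\ge 2})-\tfrac12\sigma_{AB}\Delta(2f_{\ell\ge 2})$, and uniqueness of the decomposition \eqref{CAB} on $\ell\ge 2$ modes then gives
\[
  \bar c(\pm) = c(\pm) - 2 f_{\ell\ge 2}.
\]
Since $Y^A=\delta^A_a \bar Y^a$ and $Y^1=\bar Y^1=\tfrac12\nabla_A Y^A$, a direct substitution yields
\[
 [\mathcal C(\bar Y)-\mathcal C(Y)]_{-\infty}^{+\infty} = \int_{S^2}\bigl(f_{\ell\ge 2}\nabla_A Y^A - 2Y^A\nabla_A f_{\ell\ge 2}\bigr)\bigl(m(+)-m(-)\bigr).
\]
Adding this to the classical-charge ambiguity $\int_{S^2}(-f\nabla_A Y^A+2Y^A\nabla_A f)(m(+)-m(-))$ from Theorem~\ref{supertranslation_ambiguity_classical} and decomposing $f=f_{\ell\le 1}+f_{\ell\ge 2}$, the $\ell\ge 2$ contributions cancel and the first stated equality drops out.

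For the second equality, \eqref{supertranslation_bracket} gives $[f_{\ell\le 1}\partial_u, Y]=\hat f\,\partial_u$ with $\hat f=\tfrac12 f_{\ell\le 1}\nabla_A Y^A - Y^A\nabla_A f_{\ell\le 1}$, a classical BMS field of pure supertranslation type ($Y^A=Y^1=0$) for which $\mathcal C\equiv 0$; its flux is $\delta Q=2\int_{S^2}\hat f(m(+)-m(-))$, which is precisely the negative of the first formula. The only delicate step in the whole argument is establishing the endpoint identity $\bar c(\pm)=c(\pm)-2f_{\ell\ge 2}$ in the radiative regime — the analogous statement in \eqref{transform} was proved only in the non-radiative setting — but this follows immediately by applying the pointwise transformation \eqref{supertranslation} of $C_{AB}$ at the endpoints and invoking uniqueness of the Hodge-type decomposition \eqref{CAB} on the $\ell\ge 2$ sector.
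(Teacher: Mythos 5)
Your proposal is correct and follows essentially the same route as the paper's proof: split $Q=\tilde Q+\mathcal C$, use $\bar m(\pm)=m(\pm)$ and $\bar c(\pm)=c(\pm)-2f_{\ell\ge 2}$ from \eqref{supertranslation} together with $Y^1=\tfrac12\nabla_AY^A$ to evaluate the change in the correction term, add the ambiguity from Theorem \ref{supertranslation_ambiguity_classical} so the $\ell\ge 2$ modes cancel, and invoke \eqref{supertranslation_bracket} for the second equality. Your explicit justification of the endpoint identity $\bar c(\pm)=c(\pm)-2f_{\ell\ge2}$ via uniqueness of the decomposition \eqref{CAB} is exactly what the paper leaves implicit.
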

\begin{proof}
Recall that for a classical BMS field $Y$ \eqref{invariant_charge_classical},
\[
Q(Y)= \tilde Q(Y)+ \int_{S^2} (Y^A \na_A c - Y^1 c )m.
\]
Suppose $Y$ and $\bar Y$ are related by a supertranslation $S_f$, then we have
\[
\begin{split}
    & \delta Q(\bar Y)- \delta Q(Y)\\
= & \delta \tilde Q(\bar Y)- \delta \tilde Q(Y)+ \lt[ \int_{S^2} (Y^A \na_A \bar c -Y^1 \bar c ) \bar m \rt]_{-\infty}^{\infty} - \lt[ \int_{S^2} (Y^A \na_A  c -Y^1 c ) m \rt]_{-\infty}^{\infty}\\
\end{split}
\]
After the supertranslation, we have $\bar c(\pm) = c(\pm) - 2f_{l\ge 2}$, $\underline {\bar{ c}}(\pm) = \underline c(\pm) $  and $\bar m(\pm) = m(\pm)$ from \eqref{supertranslation}. Recall $Y^1 = \frac{1}{2}\na_A Y^A$ and it follows that 
\[
 \delta Q(\bar Y)- \delta Q(Y)=\int_{S^2}(-f_{l\le 1} \nabla_AY^A +2Y^A \nabla_A f_{l\le 1}  )(m(+)- m(-))
\]
The result follows from \eqref{supertranslation_bracket}.
\end{proof}

\subsection{Non-radiative spacetimes}

Recall that both the classical and invariant charge are independent of the retarded time in a non-radiative spacetime.

Fixing $\bar{u}=\bar{u}_0$, we consider the invariant charges
\[\begin{split}&Q=Q(u_0,Y)\\
&\bar Q=Q(\bar{u}_0,\bar Y)
\end{split}\]

We show that in non-radiative spacetimes, the invariant charge itself, not just its total flux, is invariant under supertranslation.
\begin{theorem} \label{invariant_CWY_vanish_news}
The invariant charge satisfies
\begin{align}
\bar{Q}-Q= \int_{S^2} (2 Y^A \na_A f_{\ell\le 1} - f_{\ell\le 1} \na_A Y^A)  \mathring m  
= -Q ([ f_{\ell\leq 1} \partial_u, Y])
\end{align}
\end{theorem}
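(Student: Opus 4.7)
The plan is to compute $\bar Q - Q$ by direct substitution of the supertranslation transformation formulas into the definition \eqref{invariant_charge_classical} of the invariant charge, and then identify cancellations. Since the spacetime is non-radiative, $m$, $C_{AB}$, $c$, $\underline c$ are all $u$-independent and the invariant charge $Q(u, Y)$ is itself independent of $u$ (as noted in the paragraph preceding the theorem). Hence I may evaluate both $Q$ and $\bar Q$ at $u_0 = \bar u_0 = 0$. With this choice, the formulas \eqref{transform} and \eqref{AM_aspect2} give $\bar m(0,x) = \mathring m$, $\bar C_{AB}(0,x) = \mathring C_{AB} - F_{AB}$, $\bar c = \mathring c - 2 f_{\ell \ge 2}$, $\bar{\underline c} = \mathring{\underline c}$, together with
\[ \bar N_A(0,x) = N_A(0,x) + f\bigl(\na_A \mathring m - \tfrac{1}{4}\na^B \mathring P_{BA}\bigr) + 3 \mathring m \na_A f - \tfrac{3}{4}\mathring P_{BA}\na^B f. \]

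Subtracting $Q$ from $\bar Q$ term by term, the pieces that depend only on unbarred data drop out, and I group what remains into three contributions: (i) the $\bar N_A$ shift above integrated against $Y^A$; (ii) the $F$-dependent shift in the quadratic-in-$C$ terms $-\tfrac{1}{4}Y^A C_A^{\ D}\na^B C_{DB} - \tfrac{1}{16} Y^A \na_A(C_{DE}C^{DE})$; and (iii) the shift in the correction term $(Y^A \na_A c - Y^1 c)m$, which by $\bar c - \mathring c = -2 f_{\ell \ge 2}$ and $2Y^1 = \na_A Y^A$ yields exactly $-\int(2Y^A \na_A f_{\ell \ge 2} - f_{\ell \ge 2}\na_A Y^A)\mathring m$. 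The mass-aspect part of (i), after integrating by parts the first term, produces
\[ \int_{S^2} Y^A\bigl(f \na_A \mathring m + 3 \mathring m \na_A f\bigr) = \int_{S^2} (2 Y^A \na_A f - f \na_A Y^A)\mathring m. \]
Adding this to (iii) telescopes the $\ell \ge 2$ mode of $f$ and gives precisely $\int_{S^2}(2Y^A \na_A f_{\ell \le 1} - f_{\ell \le 1}\na_A Y^A)\mathring m$.

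It remains to show that the leftover contribution, namely the $\mathring P_{BA}$ pieces of (i) combined with the entirety of (ii), vanishes. This is the core calculation and the main obstacle. The strategy is to note that $F_{AB}$ is the trace-free Hessian of $f_{\ell \ge 2}$, so the linear-in-$F$ cross terms in (ii) can be integrated by parts twice to produce expressions involving $f$ together with divergences and curls of $\mathring C_{AB}$; these should match the $\mathring P_{BA}$ terms in (i) upon using the conformal Killing equation \eqref{cke_intro} for $Y^A$ and the antisymmetric definition $\mathring P_{BA} = \na_B \na^E \mathring C_{EA} - \na_A \na^E \mathring C_{EB}$. The purely quadratic-in-$F$ terms in (ii) vanish separately after integration by parts, again using that $\na^A Y^B + \na^B Y^A = (\na_D Y^D)\sigma^{AB}$ and that $F$ is trace-free.

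Finally, to obtain the second equality in the theorem, apply \eqref{supertranslation_bracket}: $[f_{\ell \le 1}\partial_u, Y] = (\tfrac{1}{2} f_{\ell \le 1}\na_A Y^A - Y^A \na_A f_{\ell \le 1})\partial_u$ is a pure supertranslation, whose invariant charge reduces to $2\int_{S^2}(\tfrac{1}{2} f_{\ell \le 1}\na_A Y^A - Y^A \na_A f_{\ell \le 1})\mathring m$, since the correction term $(Y^A \na_A c - Y^1 c)m$ vanishes when $Y^A = 0$ and $Y^1 = 0$. This matches $-(\bar Q - Q)$ and completes the proof.
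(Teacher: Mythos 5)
Your overall route is the same as the paper's: substitute the non-radiative transformation formulas \eqref{transform} and \eqref{AM_aspect2} into the charge, peel off the mass-aspect bookkeeping, and show the remainder cancels. The parts you carry out explicitly --- the integration by parts giving $\int_{S^2} Y^A(f\na_A\mathring m + 3\mathring m\na_A f) = \int_{S^2}(2Y^A\na_A f - f\na_A Y^A)\mathring m$, the telescoping of the $\ell\ge 2$ modes against the shift $\bar c = \mathring c - 2f_{\ell\ge 2}$ in the correction term, and the identification of the result with $-Q([f_{\ell\le 1}\partial_u, Y])$ via \eqref{supertranslation_bracket} --- are all correct, and fixing $u_0=\bar u_0 = 0$ is a legitimate simplification given the $u$-independence of the charges in non-radiative spacetimes.

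The gap is that the step you yourself flag as ``the core calculation and the main obstacle'' is left as a strategy rather than a proof, and the strategy as stated is not quite enough. The claim that the $F$-linear cross terms in (ii) ``should match'' the $\mathring P_{BA}$ pieces of (i) is precisely the paper's Lemma \ref{integral formula for nonradiative case}, roughly a page of integration by parts; crucially, that lemma shows the combination does \emph{not} cancel identically but leaves a residual $-\tfrac14\int f\,\na^A\na^B(\na_C Y^C)\,\mathring C_{AB}$. This residual vanishes here only because $Y^A$ is a \emph{global} conformal Killing field, so that $\na_C Y^C$ is a harmonic mode $\ell\le 1$ function and its trace-free Hessian $\na^A\na^B(\na_C Y^C)-\tfrac12\Delta(\na_C Y^C)\sigma^{AB}$ vanishes against the trace-free $\mathring C_{AB}$; for an extended BMS field the same term survives and appears explicitly in Theorem \ref{invariant_CWY_vanish_news_extended}. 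Invoking only the pointwise conformal Killing equation \eqref{cke_intro}, as your sketch does, would not close the argument --- you need the $\ell\le 1$ property of the divergence, which is a global consequence of \eqref{eigen_fcn_div}. The same remark applies to your assertion that the quadratic-in-$F$ terms ``vanish separately after integration by parts'': by the paper's Lemma \ref{quadratic_integral} that integral equals $T(f_{\ell\ge2},f_{\ell\ge2})=\int f_{\ell\ge 2}(\na_A\na_B f_{\ell\ge2}-\tfrac12\sigma_{AB}\Delta f_{\ell\ge2})\na^A\na^B(\na_C Y^C)$, which again vanishes only through the $\ell\le 1$ mode of $\na_C Y^C$, not through tracelessness of $F$ alone. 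To complete the proof you need to actually carry out the computations of Lemmas \ref{quadratic_integral} and \ref{integral formula for nonradiative case} (or cite equivalents) and then observe that the resulting residuals are killed by the classical, i.e.\ global, nature of $Y$.
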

\begin{proof}

Taking the difference of $\bar Q $ and  $Q$ and applying \eqref{transform}, we obtain 
 \begin{equation}
 \begin{split}
    &\bar Q-Q\\
 =&\frac{1}{4}\int_{S^2}  Y^A \Big [  \mathring{C}_{AB}\nabla_D F^{BD}+F_{AB} \nabla_D \mathring{C}^{BD}+\frac{1}{2} \nabla_A (\mathring{C}_{BD} F^{BD})- F_{AB} \nabla_D F^{BD} \\
      &\qquad\qquad\qquad -\frac{1}{4} \na_A( F_{BD}F^{BD} )   \Big ] +\int_{S^2} Y^A \lt[\bar{N}_A  (\bar{u}_0, x)-  N_A  (u_0, x)\rt]\\
& + \big[ \int_{S^2}(2Y^1 f_{\ell\ge 2} - 2Y^A \na_A f_{\ell\ge 2})  \mathring m\big] \\
&+ (\bar u_0 - u_0) \Big[ \int_{S^2} 2 \mathring m Y^1 -\frac{1}{4} \int_{S^2}  (\epsilon_{AB} \nabla^A Y^B) \epsilon^{PQ} \nabla_P \nabla^E \mathring{C}_{EQ} \Big]
 \end{split}
 \end{equation}
 We observe that,   by Lemma \ref{quadratic_integral},
  \[ \int_{S^2} Y^A \lt[  -F_{AB}\nabla_D F^{DB}-\frac{1}{4} \na_A( F_{BD}F^{BD} ) \rt]  = 0 .\] 
  
  We compute
\[\begin{split}&\int_{S^2} Y^A \lt[\bar{N}_A  (\bar{u}_0, x)-  N_A  (u_0, x)\rt] + (\bar u_0 - u_0) \Big[ \int_{S^2} 2  \mathring m Y^1 -\frac{1}{4} \int_{S^2}  (\epsilon_{AB} \nabla^A Y^B) \epsilon^{PQ} \nabla_P \nabla^E C_{EQ} \Big]
\\
=& \int_{S^2} Y^A\lt[     f(\na_A \mathring{m}+3\mathring{m}\nabla_A f-f \nabla^B \mathring{P}_{BA} -3\mathring{P}_{BA} \na^B f  
 \rt]\\ 
\end{split}
\]
As a result,
\[\begin{split}
   &\bar{Q}-Q\\
=&\frac{1}{4}\int_{S^2} Y^A \lt[  \mathring{C}_{AB}\nabla_D F^{BD}+F_{AB} \nabla_D \mathring{C}^{BD}+\frac{1}{2} \nabla_A (\mathring{C}_{BD} F^{BD})-f \nabla^B \mathring{P}_{BA} -3\mathring{P}_{BA} \na^B f   \rt]\\
&+ \int_{S^2}  (2 Y^A \na_A f_{\ell\le 1} - f_{\ell\le 1} \na_A Y^A) \mathring m
  \end{split}\]
By Lemma \ref{integral formula for nonradiative case}, the first line vanishes. This completes the proof using \eqref{supertranslation_bracket}.
\end{proof}

\section{Boost invariance of the classical and invariant charges}\label{sec:boost_classical}
\subsection{Total flux of classical and invariant charges}
In this section, we show that the total flux of both  classical and invariant charge is invariant under a boost. Suppose $g: (S^2, \bar{\sigma}) \rw (S^2, \sigma)$ is a conformal map with conformal factor $K$. We denote $g^*\sigma = \hat\sigma = K^2 \bar\sigma$. Consider the boost $B_g$ in the BMS group. Denoting the coordinate by  $(\bar u, \bar x^a)$ and $(u,x^A)$, we have $B_g(\bar u, \bar x) = (u,x) $ where $u=K \bar u, x=g(\bar{x})$ and consider two BMS fields $\bar{Y}=(\bar{Y}^0+\bar{u}\bar{Y}^1)\partial_{\bar{u}}+\bar{Y}^a\partial_{a}$ and $Y= B_*(\bar Y) = (Y^0+uY^1)\partial_u+Y^A \partial_A$ related by the boost, Definition \ref{def_boost_related}. In particular, we have the relation of the spherical vectors $\pl_a g^A \bar Y^a = Y^A$ on $S^2$.

The main result of this subsection is the following.
\begin{theorem} \label{boost_invariance_flux}
Suppose $\bar Y$ and $Y$ are two classical BMS fields related by a boost and $N_{AB}=O(|u|^{-1-\epsilon})$ as $u\rightarrow \pm \infty$
. We have
\[
\delta \tilde Q(\bar Y) = \delta \tilde Q (Y)\quad\mbox{and}\quad  \delta Q(\bar Y) = \delta Q(Y).
\]
\end{theorem}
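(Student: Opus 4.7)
The plan is to establish boost invariance by realizing the total flux $\delta\tilde Q(Y)$ as a \emph{geometric} integral over $\mathscr{I}^+\cong \R\times S^2$ that transforms tensorially under the BMS group. Since the boost $(u,x) = (K(\bar x)\bar u, g(\bar x))$ is a diffeomorphism of $\R\times S^2$ mapping $\bar u\to\pm\infty$ to $u\to\pm\infty$ (as $K>0$), it suffices to check that the integrand of $\delta\tilde Q$ transforms in the correct way and that boundary contributions match.

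First I would use Proposition \ref{evolution1}; since $Y$ is classical, the last line of \eqref{charge_evolution} drops out, and the decay hypothesis $N_{AB} = O(|u|^{-1-\varepsilon})$ makes the $u$-integration finite. Isolating $Y^0\pl_u(2m)$ as a boundary contribution yields
\begin{equation*}
\delta \tilde Q(Y) = \int_{-\infty}^{\infty}\int_{S^2} \mathcal{L}(u,x;Y)\, du + 2\int_{S^2} Y^0\bigl(m(+)-m(-)\bigr),
\end{equation*}
where $\mathcal{L}$ is the quadratic/cubic bulk integrand in $C_{AB}, N_{AB}, Y^A, Y^1$ from the first two lines of \eqref{charge_evolution}. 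The same formula holds for $\delta\tilde Q(\bar Y)$ in barred quantities, so the task reduces to matching the two sides under the boost substitution.

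Next I would assemble the transformation laws. From \eqref{shear}--\eqref{news} one has $\bar C_{ab} = K^{-1}\pl_a g^A\pl_b g^B C_{AB}$ and $\bar N_{ab} = \pl_a g^A\pl_b g^B N_{AB}$, while Definition \ref{def_boost_related} gives $Y^0 = K\bar Y^0$, $Y^A = \pl_a g^A\bar Y^a$, $Y^1 = \bar Y^1 + K^{-1}\bar Y^a\na_a K$. Combined with $du = K\,d\bar u$ and the area scaling by $K^2$ (so the volume element picks up $K^3$), each index contraction in $\mathcal{L}$ should pick up exactly the $K$-weights that cancel this Jacobian, via the conformal transformation formulas for $\na_A$ and $\epsilon_{AB}$ on $S^2$ from Appendix C. The boundary term is simpler: in the news-free limit one has $\bar m(\pm\infty,\bar x) = K^3 m(\pm\infty, g(\bar x))$ by Lemma \ref{transform_boost}, so $K^{-3}$ cancels with $K$ from $\bar Y^0 = K^{-1}Y^0$ and $K^2$ from the area element.

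The main obstacle will be the two non-tensorial features of the flux: the anomalous piece $K^{-1}\bar Y^a\na_a K$ in $Y^1$, and the explicit weight $u$ in $-\frac{1}{4}uY^1|N|^2$, which under $u=K\bar u$ generates an unwanted factor of $K$. These should be compensated by cross terms produced when the magnetic piece $\epsilon_{AB}\na^B(\epsilon^{PQ}C_P^{\ E}N_{EQ})$ and the derivatives in $C_{AB}\na_D N^{BD} - N_{AB}\na_D C^{BD}$ are transformed (each $\na$ on the sphere produces a $\na K$ under the conformal change), and I expect the cancellation to proceed by integration by parts on $S^2$ together with the conformal Killing identity $\na_a \bar Y^a = 2\bar Y^1$. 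Once $\delta\tilde Q(\bar Y) = \delta\tilde Q(Y)$ is established, the invariant-charge statement follows by verifying that the extra correction $\int_{S^2}(Y^A\na_A c - Y^1 c)m$ also transforms correctly: at $u=\pm\infty$ the news vanishes, so $\bar c(\pm\infty,\bar x) = K^{-1}c(\pm\infty, g(\bar x))$ by \eqref{bar_traceless_Hessian}, and the combination $Y^A\na_A c - Y^1 c$ maps to $\bar Y^a\bna_a\bar c - \bar Y^1\bar c$ after a short computation using $Y^1 = \bar Y^1 + K^{-1}\bar Y^a\na_a K$, whose $\na K$ term pairs with $\bar c \propto K^{-1}c$ to produce the required derivative of $K^{-1}$.
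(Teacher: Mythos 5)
Your setup coincides with the paper's: both start from Proposition \ref{evolution1} (whose last line drops out for classical fields), both split off the $\bar Y^0\partial_{\bar u}$ part and handle it exactly as you describe via $\bar m(\pm)=K^3\, (m(\pm)\circ g)$ against the $K^{-2}$ scaling of the area form, and your treatment of the correction term $\int_{S^2}(Y^A\na_A c - Y^1 c)\,m$ for the invariant charge is precisely the paper's computation. The gap is in the decisive step for the $\bar u\bar Y^1\partial_{\bar u}+\bar Y^a\partial_a$ part: you correctly locate the obstruction (the inhomogeneous term $K^{-1}\bar Y^a\na_a K$ in $Y^1$ and the explicit weight $u=K\bar u$), but you only assert that the resulting cross-terms ``should'' cancel by integration by parts on $S^2$ together with the conformal Killing identity. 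That is not the mechanism that works, and since this cancellation is the entire content of the theorem, leaving it as an expectation leaves the proof incomplete.

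What actually happens in the paper: after substituting \eqref{shear}, \eqref{news}, the divergence formulas \eqref{div shear}--\eqref{div news}, the transformation of $\bar\na_b\bar Y^a$, and the change of variables, the difference $4\delta\tilde Q(\bar Y)-4\delta\tilde Q(Y)$ collects into four integrals, each carrying exactly one factor of $\na K$. One of them, $\int\!\!\int Y^A C_A^{\;B}\,\pl_u N_{BD}\, u\, (K^{-1}\circ g^{-1})\na^D(K\circ g^{-1})$, is disposed of using the decay of the news (integration by parts in $u$, not on the sphere). The remaining three cancel \emph{pointwise} by two-dimensional algebra for symmetric traceless tensors: the identity $2N_{AB}N^{BD}=|N|^2\delta_A^{\;D}$ absorbs the $u|N|^2\na K$ contributions, and
\begin{equation*}
\epsilon_{AB}\,\bigl(\epsilon^{PQ}C_P^{\;E}N_{EQ}\bigr) = C_A^{\;D}N_{DB}-C_B^{\;D}N_{DA}
\end{equation*}
kills what is left. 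No integration by parts on $S^2$ and no use of the conformal Killing equation enters at this stage; indeed the paper remarks that the flux expression \eqref{Ashtekhar-Streubel flux} is boost invariant for an \emph{arbitrary} vector field $\bar Y^a$, so the conformal Killing identity cannot be the source of the cancellation. To complete your argument you would need to exhibit these pointwise identities (or an equivalent cancellation) explicitly.
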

\begin{proof}
We first deal with the case of  the classical charge and separate it into two cases $\bar Y = \bar Y^0 \partial_{\bar u}$ and $\bar Y = \bar u \bar Y^1 \partial_{\bar u}+ \bar Y^a \partial_a$.\\
\noindent Case 1: Suppose  $\bar Y = \bar Y^0 \partial_{\bar u}$. From the transformation of the mass aspect function
\begin{align*}
\bar m(\pm) = K^3(\bar x) (m(\pm) \circ g)(\bar x),
\end{align*} we have
\begin{align*}
\delta\tilde Q(\bar Y) &= \lt. \int_{S^2} \bar Y^0(\bar x^a) (2\bar m) d\bar\sigma \rt]_{-\infty}^{+\infty}\\
&= \lt. \int_{S^2} K(\bar x) \bar Y^0(\bar x) (2m\circ g)(\bar x) d\hat\sigma \rt]_{-\infty}^{+\infty}\\
&= \lt. \int_{S^2} K(g^{-1}(x)) \bar Y^0(g^{-1}(x)) \cdot 2m(x) d\sigma \rt]_{-\infty}^{+\infty}\\
&= \delta\tilde Q(Y). 
\end{align*}
Case 2: Suppose $\bar Y = \bar u \bar Y^1 \partial_{\bar u}+ \bar Y^a \partial_a$. We have
\begin{align}\label{Ashtekhar-Streubel flux}
\begin{split}
4\delta\tilde Q(\bar Y) &= \int_{-\infty}^\infty \int_{S^2} \lt[ \bar Y^a \lt( \bar C_{ae} \bar\na^d N_{bd} \bar\sigma^{be} - \bar N_{ad} \bar\na^d \bar C_{bd} \bar\sigma^{be}\rt)  \rt] d\bar\sigma d\bar u\\
&\quad +\int_{-\infty}^\infty \int_{S^2} -\frac{1}{2} \bar\na_b \bar Y^a \bar\epsilon_a^{\;\;b} (\bar\epsilon^{pq} \bar C_{pe} \bar N_{dq} \bar\sigma^{de})\, d\bar\sigma d\bar u\\
&\quad +\int_{-\infty}^\infty \int_{S^2} -\bar u \cdot \frac{1}{2} \bar\na_a \bar Y^a \bar\sigma^{bc}\bar\sigma^{de} \bar N_{bd}\bar N_{ce}\, d\bar\sigma d\bar u.
\end{split}
\end{align}

By \eqref{shear} and \eqref{news}, the divergence of the shear tensor and the news tensor are given by
\begin{align}
\bar\na^d (K\bar C_{bd}) &= K^2 \hat\na^d (K\bar C_{bd}) = K^2 \lt( \pl_b g^B \na^D C_{BD} + \pl_b g^B \pl_d g^D N_{BD} \bar u \hat\na^d K \rt) \label{div shear}\\
\bar\na^d \bar N_{bd} &= K^2 \hat\na^d \bar N_{bd} = K^2 \lt( \pl_b g^B \na^D N_{BD} + \pl_b g^B \pl_d g^D \pl_u N_{BD} \bar u \hat\na^d K \rt). \label{div news}
\end{align}
Moreover, we have
\begin{align*}
\bar\na_b \bar Y^a = \hat\na_b \bar Y^a +K^{-1} \lt( -\pl_b K \bar Y^a - \pl_d K \bar Y^d \delta^a_b + \hat\na^a K \hat\sigma_{bd} \bar Y^d \rt)
\end{align*}
and
\begin{align*}
\bar\na_a \bar Y^a = \hat\na_a \bar Y^a -2K^{-1} \pl_a K \bar Y^a.
\end{align*}
Putting these together with the change of variable formula
\begin{align}\label{change of variable}
\int_{-\infty}^{+\infty}\int_{S^2} F(\bar u, \bar x) d\hat\sigma d\bar u = \int_{-\infty}^{+\infty}\int_{S^2} F(u,g^{-1}(x)) \cdot (K^{-1} \circ g^{-1})(x) d\sigma du,
\end{align}
we obtain
\begin{align*}
&4\delta\tilde Q(\bar Y) - 4\delta\tilde Q(Y) \\
&\quad= \int_{-\infty}^{+\infty}\int_{S^2} \pl_a g^A \bar Y^a C_A^B \pl_u N_{BD} u (K^{-1}\circ g^{-1}) \na^D (K \circ g^{-1}) \, d\sigma du \\
&\quad + \int_{-\infty}^{+\infty}\int_{S^2}- \pl_a g^A \bar Y^a N_A^B \lt( N_{BD} u \na^D (K \circ g^{-1}) - C_{BD} (K^{-1}\circ g^{-1}) \na^D (K\circ g^{-1})\rt) d\sigma du\\
&\quad +\int_{-\infty}^{+\infty}\int_{S^2} \pl_a g^A \bar Y^a (K^{-1}\circ g^{-1})\na^B (K\circ g^{-1})\epsilon_{AB} (\epsilon^{PQ} C_{P}^E N_{EQ}) d\sigma du\\
&\quad +\int_{-\infty}^{+\infty}\int_{S^2} \pl_a g^A \bar Y^a (K^{-1}\circ g^{-1}) \na_A (K\circ g^{-1}) u |N|^2 d\sigma du.
\end{align*}
The first integral on the right hand side vanishes in view of the decay condition of the news tensor and we obtain
\begin{align*}
  &4\delta\tilde Q(\bar Y) - 4\delta\tilde Q(Y) \\
&=  \int_{-\infty}^{+\infty}\int_{S^2} Y^A (K^{-1}\circ g^{-1}) \na^B (K\circ g^{-1}) \lt[ \epsilon_{AB} (\epsilon^{PQ} C_P^E N_{EQ}) - (C_A^D N_{DB} - C_B^D N_{DA}) \rt]\\
&= 0.  
\end{align*} 
This completes the proof of Case 2 and, by additivity, the assertion $\delta\tilde Q(\bar Y) = \delta\tilde Q(Y)$ for any two classical BMS fields $\bar Y$ and $Y$ related by a boost.

For the invariant charge, recall
\begin{align*}
\delta Q(Y) = \delta\tilde Q(Y) + \int_{S^2} (Y^A \na_A c - \frac{1}{2} \na_A Y^A c) m \Big]_{-\infty}^{+\infty}.
\end{align*}
The closed potential and mass aspect function at $u = \pm\infty$ are transformed by 
\begin{align*}
\bar c(\pm) (\bar x) &= K^{-1}(\bar x) c (g(\bar x)),\\
\bar m(\pm)(\bar x) &= K^3 (\bar x) m(g(\bar x)).
\end{align*}
At $u = \pm\infty$, we have
\begin{align*}
&\int_{S^2} ( \bar Y^a \na_a \bar c - \frac{1}{2} \bar \na_a \bar Y^a \bar c ) \bar m \, d\bar\sigma \\
&= \int_{S^2} \lt[ \bar Y^a \hat\na_a (K^{-1} c \circ g) - \frac{1}{2}\lt( \hat\na_a \bar Y^a - 2K^{-1} \hat\na_a K \bar  Y^a \rt) K^{-1} c \circ g \rt] K^3 m \circ g K^{-2} d\hat\sigma\\
&= \int_{S^2} ( \bar Y^a \hat\na_a (c\circ g) - \frac{1}{2}\hat\na_a \bar Y^a c\circ g ) m\circ g d\hat\sigma\\
&= \int_{S^2} \lt[ Y^A \na_A c - \frac{1}{2} \na_A Y^A c \rt] m d\sigma. 
\end{align*}
Hence the correction term is invariant under boost. This completes the proof of $\delta Q(\bar Y) = \delta Q(Y)$.
\end{proof}

 \begin{remark}
The above argument actually shows that the expression \eqref{Ashtekhar-Streubel flux} is invariant under boost for any vector field  $\bar{Y}^a$ on $S^2$.
\end{remark}
\begin{remark}
Using the notation $(Y^0,Y^A)$ for BMS fields, we define the components of angular momentum
\begin{align}
J^i(u) = Q(u, (0, \epsilon^{AB}\na_B \tilde X^i))
\end{align}
and center of mass integral
\begin{align*}
C^k(u) = Q(u, (0, \na^A \tilde X^k)).
\end{align*}
Suppose the conformal map $g$ is given by $z \mapsto \exp(\beta) z$ in complex coordinates, then Theorem \ref{boost_invariance_flux} reads
\begin{align*}
\delta \bar J^1 &= (\cosh\beta) \delta J^1 + (\sinh\beta) \delta C^2\\
\delta \bar J^2 &= (\cosh\beta) \delta J^2 - (\sinh\beta) \delta C^1\\
\delta \bar J^3 &= \delta J^3 
\end{align*}
and
\begin{align*}
\delta \bar C^1 &= (\cosh\beta) \delta C^1 - (\sinh\beta) \delta J^2\\
\delta \bar C^2 &= (\cosh\beta) \delta C^2 + (\sinh\beta) \delta J^1\\
\delta \bar C^3 &= \delta C^3. 
\end{align*}
The same applies to classical angular momentum and center of mass. The case of classical angular momentum corresponds to the combination of (3.27), (3.28), (3.30) in \cite{AKL}. 
\end{remark}
\subsection{Non-radiative spacetimes}
In this subsection, we prove that both classical and invariant charges for classical BMS fields are invariant under boost.
\begin{theorem}\label{boost_invariance_nonradiative}
Suppose $\bar Y$ and $Y$ are two classical BMS fields related by a boost. We have
\begin{align*}
\tilde Q(\bar u, \bar Y) = \tilde Q(u, Y) \quad\mbox{and}\quad Q(\bar u, \bar Y)= Q(u, Y)
\end{align*}
at any retarded time $\bar u$ and $u$.
\end{theorem}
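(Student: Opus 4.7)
By the observation stated just before this theorem, both $\tilde Q(u,Y)$ and $Q(u,Y)$ are independent of the retarded time $u$ in a non-radiative spacetime, and the same holds for the barred quantities. Hence $\tilde Q(\bar u,\bar Y) = \tilde Q(0,\bar Y)$ and $\tilde Q(u,Y) = \tilde Q(0,Y)$, and likewise for $Q$. The theorem therefore reduces to establishing the two identities at $\bar u = u = 0$, which is also consistent with the boost relation $u = K(\bar x)\bar u$.

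At $\bar u = u = 0$, the explicit time-weighted terms $\bar u\bar Y^1(2\bar m)$ and $uY^1(2m)$ drop out. Setting $u_0 = 0$ in Lemma \ref{transform_boost} also kills all $\bar u$-linear corrections in the transformation of the angular momentum aspect, leaving the clean tensorial identity
\[
\bar N_a(0,\bar x) = K^2\,\partial_a g^A\, N_A(0, g(\bar x)).
\]
The mass term $\int_{S^2}\bar Y^0 (2\bar m)\, d\bar\sigma$ equals $\int_{S^2}Y^0(2m)\, d\sigma$ by the same change of variables used in Case 1 of the proof of Theorem \ref{boost_invariance_flux}, and the linear angular-momentum piece $\int_{S^2}\bar Y^a \bar N_a\, d\bar\sigma$ matches $\int_{S^2} Y^A N_A\, d\sigma$ by combining $\bar Y^a \partial_a g^A = Y^A\circ g$ with the displayed tensorial law for $\bar N_a$ and the Jacobian $d\hat\sigma = K^2 d\bar\sigma$.

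The substantive task is the boost invariance of the nonlinear shear piece
\[
\int_{S^2} \bar Y^a \lt[-\tfrac14 \bar C_a^{\;\;d}\bar\na^b \bar C_{db} - \tfrac1{16}\bar\na_a(\bar C_{de}\bar C^{de})\rt] d\bar\sigma.
\]
Using the transformation $\bar C_{ab} = K^{-1}\partial_a g^A \partial_b g^B \mathring C_{AB}$ from Lemma \ref{transform_boost}, one expands $\bar\na^b \bar C_{db}$ via the difference of Christoffel symbols between $\bar\sigma$ and the conformally related pull-back metric $K^{-2} g^*\sigma$. The resulting $K$-derivative corrections combine with $d\hat\sigma = K^2 d\bar\sigma$ and with an integration by parts against the $\bar\na_a(\bar C_{de}\bar C^{de})$ term to leave precisely the tensorial pullback of the unbarred integrand. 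I expect this conformal-calculus verification to be the main technical obstacle.

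For the invariant charge, it then remains to check that the correction term $\int_{S^2}(\bar Y^a\bar\na_a \bar c - \bar Y^1 \bar c)\bar m\, d\bar\sigma$ is boost invariant at $\bar u = 0$. The computation is identical to the one carried out at $u = \pm\infty$ in the proof of Theorem \ref{boost_invariance_flux}, which uses only the pointwise relations $\bar c = K^{-1}\mathring c\circ g$ and $\bar m = K^3\mathring m\circ g$. By Lemma \ref{transform_boost}, both relations hold at every retarded time in a non-radiative spacetime, so the same argument gives $Q(\bar u,\bar Y) = Q(u,Y)$.
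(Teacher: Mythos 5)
Your strategy is correct, and the one place where it genuinely departs from the paper is a real streamlining. The paper keeps $\bar u$ and $u$ arbitrary, substitutes the full transformation law of Lemma \ref{transform_boost} into $\tilde Q(\bar u,\bar Y)$, and must then dispose of every $\bar u$- and $u$-proportional term separately: this costs the integration by parts \eqref{non-radiative_boost_bp} (via $\na^B\mathring P_{BA}=\frac12\epsilon_{AE}\na^E\Delta(\Delta+2)\mathring{\underline{c}}$), a further integration by parts for the $\bar u\,\mathring m$ terms, and an application of Lemma \ref{integral formula for nonradiative case} with $f=K\circ g^{-1}$ together with \eqref{bar_traceless_Hessian}. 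Your reduction to $\bar u=u=0$ via the time-independence of both charges kills all of these at once, and your treatment of the mass term, of $\int\bar Y^a\bar N_a\,d\bar\sigma$, and of the correction term of the invariant charge is correct as written.

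The weakness is that the one computation you defer --- boost invariance of $\int_{S^2}\bar Y^a\lt[-\tfrac14\bar C_a^{\;\;d}\bar\na^b\bar C_{db}-\tfrac1{16}\bar\na_a(\bar C_{de}\bar C^{de})\rt]d\bar\sigma$ --- is essentially the entire substantive content of the paper's Case 2 after your reduction, and ``I expect this to work out'' is not a verification. It does work out, but the cancellation is not a formal pullback identity: inserting $\bar C_{ab}=K^{-1}\pl_ag^A\pl_bg^B\mathring C_{AB}$ and the divergence law $\bar\na^b(K\bar C_{db})=K^2\pl_dg^D\na^B\mathring C_{DB}$ (that is, \eqref{div shear} with $N_{AB}\equiv0$, equivalently \eqref{bar_div_traceless}) produces an extra term proportional to $K^{-1}\bar Y^a\pl_aK\,\mathring C_{A}^{\;\;E}\mathring C_{EB}\sigma^{AB}$, while converting $\bar\na_a\bar Y^a$ to $\hat\na_a\bar Y^a-2K^{-1}\pl_aK\bar Y^a$ in the integrated-by-parts $|C|^2$ term produces a second extra term; these cancel only after using the two-dimensional identity $\mathring C_{A}^{\;\;E}\mathring C_{EB}=\tfrac12|\mathring C|^2\sigma_{AB}$ for traceless symmetric tensors. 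You should carry this out explicitly (all the needed conformal identities are in Appendix C); until then the proof is a correct plan whose main step is asserted rather than proved.
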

\begin{proof}
We deal with the two cases $\bar Y = \bar Y^0 \partial_{\bar u}$ and $\bar Y = \bar u \bar Y^1 \partial_{\bar u}+ \bar Y^a \partial_a$ separately.\\ 
\noindent Case 1: Suppose $\bar Y = \bar Y^0 \partial_{\bar u}$. Note that $\tilde Q(u,Y) = Q(u,Y)$. For any retarded time $\bar u$ and $u$, we have \begin{align*}
 Q(\bar u, \bar Y) &= \int_{S^2} \bar Y^0(\bar x^a) (2\mathring{ \bar m}) d\bar\sigma \\
&=  \int_{S^2} K(\bar x) \bar Y^0(\bar x) (2\mathring m\circ g)(\bar x) d\hat\sigma \\
&=  \int_{S^2} K(g^{-1}(x)) \bar Y^0(g^{-1}(x)) \cdot 2\mathring m(x) d\sigma \\
&= Q(u,Y). 
\end{align*}
\noindent Case 2: $\bar Y = \bar u \bar Y^1 \partial_{\bar u}+ \bar Y^a \partial_a$. For any retarded time $\bar u$ and $u$, we have, by Lemma \ref{transform_boost},
\begin{align*}
\tilde Q(\bar u, \bar Y)&= \int_{S^2} \lt[ \bar Y^a ( \bar N_a - \frac{1}{4} \bar\sigma^{de}\bar C_{ae} \bar\na^b \bar C_{db} ) + \frac{1}{16} \bar\na_a \bar Y^a \bar\sigma^{ab} \bar\sigma^{de} \bar C_{ad} \bar C_{be} + \bar u \bar\na_a \bar Y^a \bar m \rt] \,d\bar\sigma\\
&= \int_{S^2} \bar Y^a K^2 \pl_a g^A \lt( N_A(u, g(\bar x)) + (K\bar u - u)(\na_A \mathring m - \frac{1}{4} \na^B \mathring P_{BA}) \rt) \,d\bar\sigma\\
&\quad + \int_{S^2} \bar Y^a \lt( 3\mathring m K^2 \na_a K \bar u - \frac{3}{4} K^2 \pl_a g^A \mathring P_{BA} \hat\na^b K \bar u \pl_b g^B \rt) d\bar\sigma\\
&\quad + \int_{S^2} \bar Y^a \lt( -\frac{1}{4} \rt)K^2 \hat\sigma^{de} K^{-1} \pl_a g^A \pl_e g^E \mathring C_{AE} \lt( -K^{-2} \bar\na^b K \cdot K \bar C_{db} + K^{-1} K^2 \pl_b g^B \na^D \mathring C_{BD} \rt)\, d\bar\sigma\\
&\quad + \int_{S^2} \frac{1}{16}  \lt( \hat\na_a \bar Y^a - 2K^{-1} \hat\na_a K \bar Y^a \rt)K^2 \mathring C_{BD} \mathring C^{BD} \,d\bar\sigma\\
&\quad + \int_{S^2} \bar u \lt( \hat\na_a \bar Y^a - 2K^{-1} \hat\na_a K \bar Y^a \rt) K^3 \mathring m \,d\bar\sigma
\end{align*}
where $\mathring m, \mathring P_{BA}, \mathring C_{AB}$ are evaluated at $g(\bar x)$.
We simplify the integral into
\begin{align*}
\tilde Q(\bar u, \bar Y) &= \int_{S^2} Y^A \lt( N_A(u,x) - u \na_A \mathring m + u \frac{1}{4} \na^B \mathring P_{BA} \rt) \,d\sigma\\
&\quad + \int_{S^2} Y^A \lt( -\frac{1}{4} \mathring C_{AB} \na_D \mathring C_{BD} \rt) + \frac{1}{16} \na_A Y^A \mathring C_{BD} \mathring C^{BD} \,d\sigma\\
&\quad + \int_{S^2} Y^A \lt( (K\circ g^{-1}) \bar u \na_A \mathring m + \na_A (K\circ g^{-1}) \bar u \mathring m \rt) + \bar u \na_A Y^A (K\circ g^{-1}) \mathring m \,d\sigma\\
&\quad + \int_{S^2} Y^A \bar u \lt( -\frac{1}{4} (K\circ g^{-1}) \na^B \mathring P_{BA} - \frac{3}{4} \mathring P_{BA} \na^B (K\circ g^{-1}) \rt) \,d\sigma.
\end{align*}
We have
$\na^B \mathring P_{BA} = \frac{1}{2} \epsilon_{AE}\na^E \Delta(\Delta + 2) \mathring \cb$ and integration by parts yields
\begin{align}\label{non-radiative_boost_bp}
\int_{S^2} Y^A \na^B \mathring P_{BA} \,d\sigma = \int_{S^2} -\frac{1}{2} (\Delta + 2) \lt( \epsilon_{AE} \na^E Y^A \rt) \cdot \Delta \mathring\cb =0
\end{align} for either $Y^A = \epsilon^{AB}\na_B \tilde X^k$ or $Y^A = \na^A \tilde X^k$. Hence the first two lines are precisely $\tilde Q(u, \bar Y) $. Moreover, the third line vanishes through integration by parts. Finally, since $\na_A\na_B(K \circ g^{-1}) - \frac{1}{2} \Delta (K\circ g^{-1})=0$ by \eqref{bar_traceless_Hessian}, the last line vanishes by Lemma \ref{integral formula for nonradiative case}. It follows that for any retarded time $\bar u$ and $u$,
\[ \tilde Q(\bar u,\bar Y) =\tilde Q(u, Y).\]
For the invariant charge, we have
\begin{align*}
Q(\bar u, \bar Y) &= \tilde Q(\bar u, \bar Y) + \int_{S^2} \lt( \bar Y^a \na_a \bar c - \bar Y^1 \bar c \rt)\mathring{\bar m} d\bar\sigma \\
&= \tilde Q(\bar u, Y) + \int_{S^2} \lt( \bar Y^a \na_a (K^{-1} \mathring c \circ g) - \bar Y^1\cdot K^{-1} \mathring c \circ g \rt) K \lt( \mathring m \circ g \rt) d\hat\sigma\\
&= \tilde Q(\bar u, Y) + \int_{S^2} \lt( Y^A \na_A \mathring c - Y^1 \mathring c \rt) \mathring m d\sigma\\
&= Q(\bar u, Y) = Q(u,Y)
\end{align*}
for any retarded time $\bar u$ and $u$. This completes the proof.
\end{proof}

\part{Extended BMS fields}

\section{Supertranslation invariance of the invariant charges}

In this section, we generallize the supertranslation invariance of invariant charge for classical BMS fields in Section \ref{sec:supertranslation_invariance_classical} to extended BMS fields. We first introduce the extended BMS algebra. The ensuing three subsections correspond exactly to those in Section \ref{sec:supertranslation_invariance_classical}.  

We recall from \cite[page 31]{Geroch} that a BMS field $Y$ (or an infinitesimal symmetry in the terminology of  \cite{Geroch} ) satisfies the following equations:
\[\mathcal{L}_Y \sigma=2\kappa \sigma, \mathcal{L}_Y n=-\kappa n\] for a function $\kappa$ on $\mathscr{I}^+$. Here $\sigma$ is the degenerate 2-metric and $n$ is the null generator of $\mathscr{I}^+$ which can be taken to be $\partial_u$ and $\mathcal{L}$ is the Lie derivative on $\mathscr{I}^+$. The solutions are of the form
\[ Y=(Y^0(x)+uY^1 (x))\partial_u+Y^A (x) \partial_A\] where $Y^A$ satisfies 
\begin{equation}\nabla^A Y^B+\nabla^B Y^A-2Y^1 \sigma^{AB}=0.\end{equation}

In other words, $Y^A$ is a vector field that satisfies the conformal Killing equation on $S^2$:
\begin{equation} \label{cke} \nabla^A Y^B+\nabla^B Y^A-\na_C Y^C \sigma^{AB}=0.\end{equation}
Note that \eqref{cke} is equivalent to 
\begin{equation}\label{ckf}\na^A Y^B=\frac{1}{2}(\na_CY^C)\sigma^{AB}+\frac{1}{2}(\epsilon_{CD}\na^C Y^D)\epsilon^{AB}.  \end{equation}

Smooth vector fields on $S^2$ that satisfy \eqref{cke} are linear combinations of of  $\nabla^A \tilde{X}^k, k=1, 2, 3$ and $\epsilon^{AB} \nabla_B \tilde{X}^k, k=1, 2, 3$. They are infinitesimal deformation of conformal diffeomorphisms or M\"obius transformations of $S^2$.

 An extended BMS field corresponds to the case when $Y^A$ is a solution of \eqref{cke} on an open subset $\mathscr{U} $ of $S^2$. We define
\begin{definition}\label{extended_conformal_Killing}
An {\it extended conformal Killing field} is a vector field $Y^A \frac{\pl}{\pl x^A}$ defined on an open set $\mathscr{U} \subset S^2$ such that $Y^A$ satisfies the conformal Killing equation \eqref{cke} on $\mathscr{U}$.
\end{definition}
In this case, we may not be able to integrate $Y^A$ to a conformal diffeomorphism of $S^2$. See the appendix for an explicit construction with $\mathscr{U} = S^2 - \{ (0,0,1)\}$. 

\begin{lemma} \label{eigen_form} An extended conformal Killing field $Y^A$ on $\mathscr{U} $ satisfies the following equations on $\mathscr{U} $:
\begin{align}
\nabla_A\nabla^AY^B+Y^B&=0 \label{eigen_form1}\\
 (\Delta +2)(\nabla_AY^A)&=0\label{eigen_fcn_div}\\
 (\Delta+2)(\epsilon_{AB}\na^A Y^B) &=0\label{eigen_fcn_curl}
\end{align}
\end{lemma}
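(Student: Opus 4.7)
The plan is to derive the three equations sequentially: first \eqref{eigen_form1} directly from the conformal Killing equation \eqref{cke}, and then obtain the two scalar equations \eqref{eigen_fcn_div} and \eqref{eigen_fcn_curl} from \eqref{eigen_form1} by taking divergence and curl. All calculations are purely local and tensorial, so working on $\mathscr{U}$ rather than all of $S^2$ makes no difference; the only input beyond the conformal Killing equation is the Ricci identity on $(S^2,\sigma)$, which on the round sphere reduces to $R_{AB}=\sigma_{AB}$ and hence $[\nabla_A,\nabla_B]V^C = \delta^C_A V_B - \delta^C_B V_A$ for any vector $V$.

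For \eqref{eigen_form1} I would apply $\nabla_A$ to the conformal Killing equation written in the form $\nabla^A Y^B + \nabla^B Y^A = (\nabla_CY^C)\sigma^{AB}$. The left side becomes $\Delta Y^B + \nabla_A\nabla^B Y^A$, while the right side becomes $\nabla^B(\nabla_CY^C)$. Commuting the covariant derivatives in the cross term gives $\nabla_A\nabla^B Y^A = \nabla^B(\nabla_AY^A) + R^B{}_CY^C = \nabla^B(\nabla_CY^C) + Y^B$. The divergence gradients cancel and leave $\Delta Y^B + Y^B = 0$.

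For \eqref{eigen_fcn_div} I would take the divergence of \eqref{eigen_form1}. The only nontrivial point is to swap $\nabla_B$ past $\nabla^A\nabla_A$; two applications of the Ricci identity (or a direct computation with the explicit form of $R^A{}_{BCD}$ on $S^2$) yield $\nabla_B\Delta Y^B = \Delta(\nabla_BY^B) + \nabla_BY^B$, which combined with the divergence of $Y^B$ gives $(\Delta+2)(\nabla_AY^A)=0$. For \eqref{eigen_fcn_curl} I would take the curl $\epsilon^{AB}\nabla_A$ of \eqref{eigen_form1}. Commuting $\nabla^C\nabla_C$ past $\nabla_A$ produces curvature terms involving the combination $\nabla_AY_B + \nabla_BY_A$; after contracting with $\epsilon^{AB}$, the conformal Killing equation forces this symmetric combination to equal $(\nabla_CY^C)\sigma_{AB}$, which is annihilated by $\epsilon^{AB}$. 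What remains collapses to $\Delta\beta + 2\beta = 0$ with $\beta = \epsilon_{AB}\nabla^AY^B$.

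The main obstacle is the bookkeeping for the curvature commutators in the curl calculation, where a naive expansion produces several terms. The key simplification, and essentially the only place where the conformal Killing hypothesis enters after \eqref{eigen_form1} is in hand, is that the symmetric part of $\nabla Y$ is a pure trace, so every curvature contribution that is symmetric in the pair of indices paired against $\epsilon^{AB}$ drops out; the rest organizes itself into a multiple of $\beta$.
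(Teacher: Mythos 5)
Your proposal is correct, and your derivation of \eqref{eigen_form1} is exactly the paper's: apply $\na_A$ to \eqref{cke} and use $R_{AB}=\sigma_{AB}$ to commute the cross term. For the two scalar identities, however, you take a genuinely different (and arguably cleaner) route. The paper proves \eqref{eigen_fcn_div} by computing $\Delta(\na_AY^A)=\na_B\na_A\na^BY^A-\na_BY^B$ and then substituting the pointwise form \eqref{ckf} of the conformal Killing equation, $\na^AY^B=\tfrac12(\na_CY^C)\sigma^{AB}+\tfrac12(\epsilon_{CD}\na^CY^D)\epsilon^{AB}$, so that the double divergence collapses to $\tfrac12\Delta(\na_AY^A)$; the curl identity is then ``checked similarly.'' You instead derive both \eqref{eigen_fcn_div} and \eqref{eigen_fcn_curl} as corollaries of \eqref{eigen_form1} via the commutation identities $\na_B\Delta Y^B=\Delta(\na_BY^B)+\na_BY^B$ and $\epsilon_{AB}\na^A\Delta Y^B=(\Delta+1)(\epsilon_{AB}\na^AY^B)$, which I have checked are correct on the unit sphere. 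One small remark: in your curl computation you invoke the conformal Killing equation to kill the symmetric curvature terms, but the contraction against $\epsilon^{AB}$ annihilates any tensor symmetric in those indices regardless, so in fact both scalar equations follow from \eqref{eigen_form1} alone for an arbitrary vector field satisfying $\Delta Y^B+Y^B=0$. This makes your argument marginally more general than the paper's, at the cost of slightly heavier curvature bookkeeping; the paper's use of \eqref{ckf} buys a shorter computation for the divergence identity.
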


\begin{proof}
To prove the first equality, we apply $\na_A$ to \eqref{cke} and obtain
\[\na_A (\na^A Y^B + \na^B Y^A) = \na^B \na_C Y^C.\] On the other hand, since $S^2$ has constant curvature $1$, we have
\[\na_A\na^B Y^A=\na^B \na_AY^A+Y^B.\]
Therefore,
\[ \na_A \na^A Y^B + \na^B \na_A Y^A + Y^B = \na^B \na_C Y^C,\] and the first equality follows.

For the second equality, we proceed as follows.
\[
\begin{split}
 \Delta (\nabla_AY^A) =& \nabla_B \nabla^B \nabla_AY^A\\
 =&\nabla_B \nabla_A \nabla^B Y^A -\nabla_B Y^B\\
 =& \frac{1}{2}\nabla_B\nabla^A \Big[(\nabla_CY^C)\sigma^{BA} +(\epsilon_{CD}\nabla^C Y^D)\epsilon^{BA}\Big]   -\nabla_B Y^B\\
 =& \frac{1}{2}\Delta (\nabla_AY^A)  -\nabla_B Y^B,
\end{split}
\] where we use \eqref{ckf} in the third equality above. 

The third equality can be checked similarly.
\end{proof}

\subsection{Supertranslation ambiguity of the classical charge}\label{subsec:total_flux_extended}
The derivation in this subsection is similar to Section \ref{subsec:supertranslation_ambiguity}. To ensure the finiteness and the absolute convergence of the total flux for an extended BMS field, we need to make stronger decay conditions. We also need to deal with extra terms that vanish for a classical BMS field. Finally, as mentioned after Definition 1.6, we need to assume that the data $m$, $N_A$, $C_{AB}$ and $N_{AB}$ are supported on $\mathscr{U} $. To preserve the support under a supertranslation, we assume that $f$ is also supported on $\mathscr{U} $.

Recalling \eqref{classical_evolution}, the evolution of classical charge of extended BMS fields reads
 \begin{equation}\label{charge_evolution_extended}\begin{split}
\partial_u \tilde Q(u, Y) &= \frac{1}{4}\int_{\mathscr{U} } Y^A \lt[ C_{AB}\nabla_{D}N^{BD} -N_{AB}\nabla_{D}C^{BD} +\frac{1}{2} \epsilon_{AB} \nabla^B (\epsilon^{PQ} C_P^{\,\,\, E}N_{EQ})        \rt]\\
&\quad -\frac{1}{4}\int_{\mathscr{U} } uY^1|N|^2 +\int_{S^2}  Y^0 \partial_u (2m)\\
&\quad +\frac{1}{4}\int_{\mathscr{U}} \lt[ u  (2Y^1)\na^A\na^BN_{AB} -  u (\epsilon_{AB} \nabla^A Y^B) \epsilon^{PQ} \nabla_P \nabla^E N_{EQ} \rt] \\
\end{split}\end{equation}

\begin{theorem}\label{extended_flux_ambiguity} Suppose $Y$ is an extended BMS field and $N_{AB}=O(|u|^{-2-\epsilon})$, then the supertranslation ambiguity of the total flux of the classical charge $\delta \tilde{Q} (Y)$ is 
\[\begin{split}
   &\delta \tilde Q(\bar Y) - \delta \tilde Q(Y) \\
= & \int_{\mathscr{U} }(-f \nabla_AY^A +2Y^A \nabla_A f )(m(+)- m(-))  - \frac{1}{4}\int_{\mathscr{U} } f  \lt[ \na^A\na^B(\na_D Y^D)     {C}_{AB}  \rt]_{-\infty}^{+\infty}\end{split}
\] if $\bar{Y}$ and $Y$ are related by the supertranslation $S_f$ \eqref{supertranslation_related}.
\end{theorem}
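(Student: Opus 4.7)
The plan is to follow the template of the proof of Theorem \ref{supertranslation_ambiguity_classical} while tracking the additional contributions specific to the extended BMS algebra. First I would write
\[
\delta\tilde Q(Y) = \int_{-\infty}^{+\infty} \partial_u \tilde Q(u,Y)\,du
\]
and analogously for $\bar Y$, using the evolution formula \eqref{charge_evolution_extended}. The right-hand side splits into three lines: the ``$\mathrm{I}+\mathrm{II}$'' piece that already appears in the classical case, plus the new third line involving $u\cdot(2Y^1)\na^A\na^B N_{AB}$ and $u(\epsilon_{AB}\na^A Y^B)\epsilon^{PQ}\na_P\na^E N_{EQ}$.

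For the first two lines, applying \eqref{supertranslation} to $\bar C_{AB}$ and $\bar N_{AB}$, exchanging the order of integration by Fubini, and performing the change of variable $u=\bar u+f$ exactly as in the classical proof (with the quadratic-in-$N$ terms handled by the identity $2N_{AB}N^{BD}=|N|^2\delta_A^D$), one obtains
\[
\int_{\mathscr{U}} (-f\na_A Y^A + 2 Y^A \na_A f)(m(+) - m(-)).
\]
The stronger decay $N_{AB}=O(|u|^{-2-\epsilon})$ is used here, and again below, to justify absolute convergence of the $u$-weighted integrals and to pass limits through.

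For the third line, I substitute $\bar u = u-f$ into the $\bar u$-factor. The $u$-part of the split cancels identically with the corresponding term in $\delta\tilde Q(Y)$, while the $-f$-part is a new cross term. After integrating in $u$ from $-\infty$ to $+\infty$ and using $\partial_u C_{AB}=N_{AB}$, this yields
\[
-\tfrac{1}{4}\int_{\mathscr{U}} f \Bigl\{(2Y^1)\na^A\na^B[C_{AB}]_{-\infty}^{+\infty} - (\epsilon_{AB}\na^A Y^B)\epsilon^{PQ}\na_P\na^E[C_{EQ}]_{-\infty}^{+\infty}\Bigr\}.
\]
Using $2Y^1=\na_D Y^D$, the symmetry of $(*C)_{AB}$ (equivalently, the symmetry of $\epsilon^{PQ}C_{EQ}$ in $(P,E)$), and integrations by parts which are legitimate because $C_{AB}(\pm\infty)$ is supported in $\mathscr{U}$, the two bracketed terms consolidate into $[\na^A\na^B(\na_D Y^D)C_{AB}]_{-\infty}^{+\infty}$, producing the stated correction.

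The main obstacle I expect is the final consolidation: showing that the $\epsilon^{PQ}$-piece and the $(2Y^1)$-piece combine into the single expression $\na^A\na^B(\na_D Y^D)C_{AB}$. A useful consistency check is that for a classical BMS field, $\na_D Y^D$ is a mode-$\ell=1$ spherical harmonic, so $\na^A\na^B(\na_D Y^D) = -(\na_D Y^D)\sigma^{AB}$ is pure trace and contracts to zero with the traceless $C_{AB}$; this recovers Theorem \ref{supertranslation_ambiguity_classical}. A secondary technical point, as noted, is the use of the stronger decay $|u|^{-2-\epsilon}$ (versus $|u|^{-1-\epsilon}$ in the classical case) to handle the $u\cdot N_{AB}$ factors that arise from the third line of \eqref{charge_evolution_extended}. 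The support hypothesis on $m$, $N_A$, $C_{AB}$, $N_{AB}$, and on $f$ is essential throughout to avoid boundary contributions on $\partial\mathscr{U}$ in the integrations by parts.
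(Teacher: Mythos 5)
Your overall template (evolution formula, Fubini plus the change of variable $u=\bar u+f$, tracking the new third line) is the right one, and your consistency check for classical fields is correct. But there is a genuine gap in the accounting of where the $C_{AB}$-correction comes from. In the step where you claim the first two lines of the evolution formula reproduce exactly the classical answer, you are implicitly using the integration-by-parts identity of Lemma \ref{integral_lemma}; for an \emph{extended} conformal Killing field that identity carries an additional term
\[
-\int_{\mathscr{U}} f\,\epsilon^{PQ} N_{EQ}\, \na^E\na_P(\epsilon_{CD}\na^C Y^D),
\]
which vanishes in the classical case (where $\epsilon_{CD}\na^C Y^D$ has mode $\ell\le 1$, so its traceless Hessian is zero) but survives here. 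The paper's proof keeps this term; it is precisely what cancels the curl-type half of the third-line cross term, leaving only $-\tfrac14\int f\,\na^A\na^B(\na_C Y^C)N_{AB}$. Your proposal drops it and instead tries to make the two halves of the third-line cross term ``consolidate'' among themselves into $\na^A\na^B(\na_D Y^D)C_{AB}$, which is not a valid identity: after moving the Hessians onto $f\cdot\na_DY^D$ and $f\cdot\epsilon_{AB}\na^AY^B$ you pick up product-rule terms in $\na f$ that do not cancel within the third line alone.

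A secondary but related problem is the order of operations in the third line. You substitute $\bar u=u-f$ ``into the $\bar u$-factor'' while the spatial derivatives still sit on $\bar N_{AB}(\bar u,x)=N_{AB}(\bar u+f(x),x)$; the chain rule then produces terms in $\pl_u N_{AB}\,\na f$ etc.\ that your bookkeeping ignores. The clean route (the paper's) is to integrate by parts twice \emph{first}, so that the third line reads $\tfrac14\int u[\na^A\na^B(\na_CY^C)N_{AB}-\na^E\na_P(\epsilon_{AB}\na^AY^B)\epsilon^{PQ}N_{EQ}]$ with $N$ undifferentiated; then the substitution and Fubini are immediate, the cross term comes out with derivatives on the $Y$-functions, and its $\epsilon^{PQ}$-piece cancels against the extra term from Lemma \ref{integral_lemma} noted above. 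With these two corrections your argument becomes the paper's proof.
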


\begin{proof}

By Proposition \ref{evolution1}, we obtain 
\begin{equation}\begin{split}\label{total_flux1}
&\delta \tilde{Q}(Y)-\int_\mathscr{U}  2Y^0(m(+)-m(-))\\
=& \frac{1}{4}\int_{-\infty}^{+\infty}\int_{\mathscr{U} } Y^A \lt( C_{AB}\nabla_{D}N^{BD} -N_{AB}\nabla_{D}C^{BD} +\frac{1}{2} \epsilon_{AB} \nabla^B (\epsilon^{PQ} C_P^{\,\,\, E}N_{EQ}  )             \rt)\\
&-\frac{1}{4}\int_{-\infty}^{+\infty}\int_{\mathscr{U} } uY^1|N|^2 \\
& + \frac{1}{4}\int_{-\infty}^{\infty}\int_{\mathscr{U} }    u \lt[ \na^A\na^B(\na_CY^C)N_{AB} - \nabla^E\nabla_P (\epsilon_{AB} \nabla^A Y^B) \epsilon^{PQ}  N_{EQ} \rt]  \\
=& \RN{1}+\RN{2}+\RN{3}.
\end{split}\end{equation}where the last term was integrated by parts twice.

If $\bar{Y}$ is related to $Y$ by a supertranslation $S_f$ \eqref{supertranslation_related}, we have

\begin{equation}\label{after}\begin{split}
&\delta \tilde{Q}(\bar{Y})-\int_\mathscr{U}  2Y^0(m(+)-m(-))\\
=& \frac{1}{4}\int_{-\infty}^{+\infty}\int_{\mathscr{U} } Y^A \lt( \bar C_{AB}\nabla_{D}\bar{N}^{BD} -\bar{N}_{AB}\nabla_{D}\bar{C}^{BD} +\frac{1}{2} \epsilon_{AB} \nabla^B (\epsilon^{PQ} \bar{C}_P^{\,\,\, E}\bar{N}_{EQ} )\rt)\\
&-\frac{1}{4}\int_{-\infty}^{+\infty}\int_{\mathscr{U} } \bar{u}Y^1|\bar N|^2 \\
&+\frac{1}{4}\int_{-\infty}^{+\infty}\int_{\mathscr{U} }  \bar u  \lt[  \na^A\na^B(\na_DY^D)\bar N_{AB} -  \na_P \na^E  (\epsilon_{AB} \nabla^A Y^B) \epsilon^{PQ}\bar N_{EQ}  \rt] 
=\bar{\RN{1}}+\bar{\RN{2}}+\bar{\RN{3}}\\
\end{split}\end{equation}

To evaluate $\bar{\RN{1}}-\RN{1}$, we compute 
\begin{align*}
\nabla^B ( \bar{C}_P^{\,\,\, E}\bar{N}_{EQ} )& =     \partial_u (({C}_P^{\,\,\, E}-F_P^{\,\,\, E}){N}_{EQ}  \na^B f)+ \na_B( ({C}_P^{\,\,\, E}-F_P^{\,\,\, E}){N}_{EQ} ) \\
\bar{C}_{AB} \na_D \bar{N}^{BD}& = \partial_u ((C_{AB}-F_{AB}) N^{BD} \na_D f)-N_{AB}N^{BD}\nabla_Df + (C_{AB}-F_{AB})\na_D N^{BD} \\
\bar{N}_{AB} \na_D \bar{C}^{BD} &= N_{AB} N^{BD} \na_D f + N_{AB}(\na_D C^{BD}  -\na_D F^{BD}).\\
\end{align*}
All left hand sides are evaluated at $(\bar{u}, x)$ and all right hand sides are evaluated at $(\bar{u}+f, x)$. 
Plugging these into \eqref{after_classical}, applying Fubini's theorem and change of variable and the assumption \eqref{news_decay_1}, and using $2N_{AB}N^{BD}=|N|^2\delta_A^D$, we obtain, \[\begin{split}&\bar{\RN{1}}-\RN{1}\\
&=\frac{1}{4}\int_{-\infty}^{+\infty}\int_{\mathscr{U} }\lt[ Y^A \lt( -F_{AB}\nabla_{D}N^{BD} +N_{AB}\nabla_{D}F^{BD}-\frac{1}{2} \epsilon_{AB} \nabla^B (\epsilon^{PQ} F_P^{\,\,\, E}N_{EQ} )  -\na_A f |N|^2\rt) \rt]  \\
&=\frac{1}{4}\int_{-\infty}^{+\infty}\int_{\mathscr{U} }(-\nabla_AY^A f + 2 Y^A \nabla_A f )(\nabla_B\nabla_D N^{BD})-\frac{1}{4}\int_{-\infty}^{+\infty} \int_{\mathscr{U} }Y^A\na_A f |N|^2\\
&-\frac{1}{4} \int_{-\infty}^{+\infty}\int_{\mathscr{U} }\int f\epsilon^{PQ} N_{EQ} \na^E\na_P(\epsilon_{CD}\na^C Y^D)\\
\end{split}\]

On the other hand,
\[\begin{split}\bar{\RN{2}}-\RN{2}&=-\frac{1}{4}\int_{-\infty}^{+\infty} \int_{\mathscr{U} } (-f)(\frac{1}{2}\na_A Y^A)|{N}|^2)\\
&= \frac{1}{4}\int_{-\infty}^{+\infty} \int_{\mathscr{U} } \frac{1}{2}f(\na_A Y^A)|{N}|^2\end{split}\]

Therefore $\bar{\RN{1}}-\RN{1}+\bar{\RN{2}}-\RN{2}$ is given by

\begin{equation}\label{I+II}\begin{split}&\bar{\RN{1}}-\RN{1}+\bar{\RN{2}}-\RN{2}\\
&=\int_{-\infty}^{+\infty}\int_{\mathscr{U} }(-\nabla_AY^A f + 2 Y^A \nabla_A f )(\partial_u m)-\frac{1}{4} \int_{-\infty}^{+\infty}\int_\mathscr{U}  f\epsilon^{PQ} N_{EQ} \na^E\na_P(\epsilon_{CD}\na^C Y^D)\\
\end{split}\end{equation}

Next we deal with $\bar{\RN{3}}-\RN{3}$. We compute 

\begin{equation} \label{III}\begin{split}\bar{\RN{3}}-\RN{3}= \frac{1}{4}\int_{-\infty}^{\infty}\int_{\mathscr{U} }    f \lt[  \nabla^E\nabla_P (\epsilon_{AB} \nabla^A Y^B) \epsilon^{PQ}  N_{EQ}-\na^A\na^B(\na_CY^C)N_{AB}  \rt] 
      \end{split}  \end{equation}
using $\bar u = u -f$, \eqref{basic_data} and Fubini's theorem.

Adding up \eqref{I+II} and \eqref{III}, we obtain
\begin{align*}
4(\bar{\RN{1}}-\RN{1}+\bar{\RN{2}}-\RN{2}+\bar{\RN{3}}-\RN{3})
&= 4\int_{-\infty}^{+\infty}\int_{\mathscr{U} }(-f \nabla_AY^A +2Y^A \nabla_A f )\partial_u m\\
&\quad -\int_{-\infty}^{+\infty} \int_{\mathscr{U} }f \na^A\na^B(\na_C Y^C)     {N}_{AB}.
\end{align*}

Therefore, the supertranslation ambiguity is

\[\begin{split}&4(\delta \tilde{Q}(\bar{Y})-\delta \tilde{Q}({Y}))\\
=&4\int_{\mathscr{U} }(-f \nabla_AY^A +2Y^A \nabla_A f )m|_{-\infty}^{+\infty}-\int_{\mathscr{U} } f  \lt[ \na^A\na^B(\na_C Y^C)     {C}_{AB}  \rt]_{-\infty}^{+\infty}\\
\end{split}\]
\end{proof}

\subsection{Supertranslation invariance of the total flux of the invariant charge}
Finally, we prove that the supertranslation invariance of the invariant charge of an extended BMS field.
\begin{theorem} \label{extended_invariant} Suppose $Y$ is an extended BMS field and $N_{AB}=O(|u|^{-2-\epsilon})$, then the total flux of the  invariant charge $\delta Q(Y)$ is supertranslation invariant. Namely,
\[
\begin{split}
 \delta Q(\bar Y)- \delta Q(Y)=&\int_{\mathscr{U} }(-f_{l\le 1} \nabla_AY^A +2Y^A \nabla_A f_{l\le 1}  )(m(+)- m(-))\\
 & -\frac{1}{8} \lt[ \int_{\mathscr{U} } f_{\ell\le 1} \na^A\na^B (\na_C Y^C) C_{AB} \rt]_{-\infty}^{+\infty} \\
 =& -\delta Q ([ f_{\ell\leq 1} \partial_u, Y]) -\frac{1}{8} \lt[ \int_{\mathscr{U} } f_{\ell\le 1} \na^A\na^B (\na_C Y^C) C_{AB} \rt]_{-\infty}^{+\infty}
 \end{split}
\]
if $\bar{Y}$ is related to $Y$ by a supertranslation $S_f$ \eqref{supertranslation_related}. The last integral vanishes if the potentials of $C_{AB}(\infty)- C_{AB}(-\infty)$ is supported in $\mathscr{U}$.
\end{theorem}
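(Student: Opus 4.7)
The plan is to write $Q = \tilde Q + A_1 + A_2 + A_3$ for the three correction terms $A_1, A_2, A_3$ of Definition~\ref{def_invariant_charge}, giving $\delta Q(\bar Y) - \delta Q(Y) = (\delta \tilde Q(\bar Y) - \delta \tilde Q(Y)) + \sum_{i=1}^{3} \delta(\bar A_i - A_i)$. The first piece is Theorem~\ref{extended_flux_ambiguity}. For the remaining three, I would substitute the transformation formulas at $u = \pm\infty$ under $S_f$, namely $\bar m(\pm) = m(\pm)$, $\bar c(\pm) = c(\pm) - 2f_{\ell \ge 2}$, $\bar{\underline c}(\pm) = \underline c(\pm)$, $\bar C_{AB}(\pm) = C_{AB}(\pm) - F_{AB}$ where $F_{AB} = 2\na_A\na_B f_{\ell \ge 2} - \Delta f_{\ell \ge 2}\sigma_{AB}$. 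These limits exist by the news decay hypothesis, and are blind to $f_{\ell \le 1}$ because the traceless Hessian and its curl annihilate $\ell \le 1$ modes.

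Next I would compute each $\delta(\bar A_i - A_i)$. The $A_1$ contribution produces $\int(-2Y^A\na_A f_{\ell \ge 2} + (\na_A Y^A)f_{\ell \ge 2})(m(+) - m(-))$, which exactly cancels the $f_{\ell \ge 2}$-part of the mass term of Theorem~\ref{extended_flux_ambiguity}; after this cancellation only the $f_{\ell \le 1}$ mass piece survives. The $A_2$ contribution is $\tfrac{1}{16}\int\delta c\,\Theta^{AB}F_{AB} + \tfrac{1}{8}\int f_{\ell \ge 2}\Theta^{AB}\delta C_{AB}$ (the $u$-independent pure-$F$ piece drops out of the $\pm\infty$ difference), and the $A_3$ contribution is $\tfrac{1}{16}\int\delta\underline c\,\Theta^{AB}(*F)_{AB}$, where $\Theta^{AB} = \na^A\na^B(\na_D Y^D)$.

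The hard part will be the integration-by-parts identity on $\mathscr{U}$
\[
\int_{\mathscr{U}}f_{\ell \ge 2}\,\Theta^{AB}\delta C_{AB} = \tfrac{1}{2}\int_{\mathscr{U}}\delta c\,\Theta^{AB}F_{AB} + \tfrac{1}{2}\int_{\mathscr{U}}\delta\underline c\,\Theta^{AB}(*F)_{AB},
\]
which is needed to absorb the remaining $f_{\ell \ge 2}$-dependent terms. My strategy for this is to decompose $\delta C_{AB}$ into its closed and co-closed potentials via \eqref{CAB}, integrate by parts twice to move derivatives off of $\delta c$ and $\delta\underline c$, and exploit the commutator identities on $S^2$ (arising from $R_{ABCD} = \sigma_{AC}\sigma_{BD} - \sigma_{AD}\sigma_{BC}$) together with the eigenequations $(\Delta+2)(\na_D Y^D) = 0$ and $(\Delta+2)(\epsilon_{AB}\na^A Y^B) = 0$ from Lemma~\ref{eigen_form}. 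These eigenequations are what ultimately force the unwanted derivative terms to collapse and make the bilinear pairing $(f,g) \mapsto \int f\,\na^A\na^B\phi\,\na_A\na_B g - \tfrac{1}{2}\int f\,\Delta\phi\,\Delta g$ symmetric in $(f,g)$ for $\phi = \na_D Y^D$; the support assumption on $f$ and on the Bondi-Sachs data ensures the integration by parts produces no boundary contributions. Combining the three correction terms with Theorem~\ref{extended_flux_ambiguity} then causes the $f_{\ell \ge 2}$ pieces to telescope, leaving precisely the stated formula.

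For the final remark, I would note that for $f = f_{\ell \le 1}$ we have the analogue $\na_A\na_B f_{\ell \le 1} - \tfrac{1}{2}\Delta f_{\ell \le 1}\sigma_{AB} = 0$ and its curl version also vanishes, since constants and $\ell = 1$ spherical harmonics are annihilated by both operators. Under the hypothesis that the potentials of $C_{AB}(+\infty) - C_{AB}(-\infty)$ are supported in $\mathscr{U}$, the same symmetric bilinear-form identity can be run globally on $S^2$ with $f_{\ell \le 1}$ replacing $f_{\ell \ge 2}$; its right-hand side is identically zero, so the integral in the last bracketed term vanishes, completing the proof.
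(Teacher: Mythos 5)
Your overall architecture matches the paper's (split $Q=\tilde Q+A_1+A_2+A_3$, invoke Theorem \ref{extended_flux_ambiguity}, use the transformation rules of $m,c,\underline c, C_{AB}$ at $u=\pm\infty$, and exploit a symmetric bilinear pairing proved by integration by parts together with $(\Delta+2)(\na_CY^C)=0$; your computations of the individual $\delta(\bar A_i-A_i)$ are correct and your treatment of $A_1$ and of the final vanishing claim agree with the paper). However, your key identity is false as stated, and the error sits exactly on top of the term the theorem is about. Writing $\Theta^{AB}=\na^A\na^B(\na_DY^D)$, the symmetry of the pairings $T(v,w)=\int_{\mathscr{U}}v(\na_A\na_Bw-\tfrac12\sigma_{AB}\Delta w)\Theta^{AB}$ and its curl analogue $S$ holds only when \emph{one of the two arguments is compactly supported in $\mathscr{U}$} (this is the hypothesis of the paper's Lemma \ref{T}; the boundary terms on $\partial\mathscr{U}$ do not cancel otherwise). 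The support assumptions are on the full $f$ and on the tensor $C_{AB}$ — \emph{not} on the projection $f_{\ell\ge 2}$ and \emph{not} on the potentials $\delta c,\delta\underline c$, none of which are compactly supported in $\mathscr{U}$ even when $f$ and $\delta C_{AB}$ are. Consequently the valid identity is
\begin{equation*}
\int_{\mathscr{U}} f\,\Theta^{AB}\delta C_{AB}=\tfrac12\int_{\mathscr{U}}\delta c\,\Theta^{AB}F_{AB}+\tfrac12\int_{\mathscr{U}}\delta\underline c\,\Theta^{AB}(*F)_{AB},
\end{equation*}
with the \emph{full} $f$ on the left (apply symmetry to the pairs $(f,\delta c)$ and $(f,\delta\underline c)$, where $f$ is supported in $\mathscr{U}$, and note $F_{AB}[f]=F_{AB}[f_{\ell\ge2}]$). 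Your version with $f_{\ell\ge2}$ on the left differs from this by precisely $\int_{\mathscr{U}} f_{\ell\le1}\Theta^{AB}\delta C_{AB}$, i.e.\ it implicitly asserts that the residual term of the theorem vanishes unconditionally — which it does not; that is why the theorem carries the extra $-\tfrac18[\int f_{\ell\le1}\na^A\na^B(\na_CY^C)C_{AB}]_{-\infty}^{+\infty}$ and needs the additional support hypothesis on the potentials to kill it.

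You can see the failure arithmetically: feeding your identity into your own bookkeeping gives $-\tfrac14\int f\Theta\delta C+\tfrac14\int f_{\ell\ge2}\Theta\delta C=-\tfrac14\int f_{\ell\le1}\Theta\delta C$, which has the wrong coefficient ($-\tfrac14$ instead of $-\tfrac18$); alternatively, combining your identity with the correct one forces $\int f_{\ell\le1}\Theta\delta C=0$ and erases the residual term altogether. The fix is exactly the paper's bookkeeping: convert $\tfrac1{16}\int\delta c\,\Theta F+\tfrac1{16}\int\delta\underline c\,\Theta(*F)$ into $\tfrac18\int f\Theta\delta C$ (legitimate, since $f$ is supported in $\mathscr{U}$), leave $\tfrac18\int f_{\ell\ge2}\Theta\delta C$ \emph{unconverted}, and then $-\tfrac14\int f\Theta\delta C+\tfrac18\int f\Theta\delta C+\tfrac18\int f_{\ell\ge2}\Theta\delta C=-\tfrac18\int f_{\ell\le1}\Theta\delta C$ as required. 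Your argument for the final remark is fine once corrected in this way, because under the extra hypothesis $\delta c,\delta\underline c$ \emph{are} compactly supported, so the symmetry for the pair $(f_{\ell\le1},\delta c)$ becomes legitimate and the term vanishes since the traceless Hessian annihilates $\ell\le1$ modes.
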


\begin{proof}
We consider the following bilinear forms of functions on $S^2$
\[
T(v,w)=\int_{\mathscr{U} } \Big [ v(\na_A\na_B w -\frac{1}{2} \sigma_{AB} \Delta w)   \Big] \nabla^A\nabla^B (\nabla_CY^C)
\]
\[
S(v,w)= \int_{\mathscr{U} } \Big[v   (\epsilon_{AD}\nabla_D\nabla_B+\epsilon_{BD}\nabla_D\nabla_A) w \Big] \na^A\na^B(\na_C Y^C) 
\]
By Lemma \ref{T} in the appendix, $T(v,w)=T(w,v)$ and $S(v,w)=S(w,v)$ if  either $v$ or $w$ is supported in $\mathscr{U}$. In terms of the bilinear forms $T$ and $S$, 
\[
Q(Y)= {\tilde{Q}}(Y)+ \int_{\mathscr{U} } (Y^A \na_A c - Y^1 c)m- \frac{1}{16}T(c,c)- \frac{1}{32}S(c,\underline{c}) +\frac{1}{16} T(\underline{c},\underline{c}) - \frac{1}{32}S(\underline{c},c)
\]
and 
 $ \delta {\tilde{Q}}(\bar Y)- \delta {\tilde{Q}}(Y)$ is
\[
\int_{\mathscr{U} }(-f \nabla_AY^A +2Y^A \nabla_A f )(m(+)- m(-)) - \frac{1}{4} T(f,c(+)-c(-))- \frac{1}{8}S(f,\underline c(+)- \underline c(-)).
\]
Suppose $Y$ and $\bar Y$ are related by a supertranslation $S_f$, then we have
\[
\begin{split}
    & \delta Q(\bar Y)- \delta Q(Y)\\
= & \delta{\tilde{Q}}(\bar Y)- \delta {\tilde{Q}}(Y) + \lt[ \int_\mathscr{U}  (Y^A \na_A \bar c - Y^1 \bar c ) \bar m \rt]_{-\infty}^{\infty} - \lt[ \int_\mathscr{U}  (Y^A \na_A  c - Y^1 c ) m \rt]_{-\infty}^{\infty}\\
  &+ \frac{1}{16}  \Big [T(c(+),c(+)) -T(c(-),c(-))- T(\bar c(+), \bar c(+)) + T(\bar c(-), \bar c(-))    \Big]\\
  &+\frac{1}{32}    \Big [  S(c(+),\underline c(+))  -S(c(-),\underline c(- )) - S( \bar c(+), \underline{\bar c}(+)  )  +   S( \bar c(-), \underline{\bar c}(-) ))  \Big]  \\
  &+\frac{1}{32}    \Big [ S(\underline c(+),c(+) ) -S(\underline c(-).c(-)  ) -S( \underline{\bar c}(+) .\bar c(+) )  + S(  \underline{\bar c}(-), \bar c(-))  )  \Big] 
\end{split}
\]

After the supertranslation, we have $\bar c(\pm) = c(\pm) - 2f_{l\ge 2}$, $\underline {\bar{ c}}(\pm) = \underline c(\pm) $  and $\bar m(\pm) = m(\pm)$. As a result,
\begin{align*} 
   &\lt[ \int_\mathscr{U}  (Y^A \na_A \bar c - Y^1 \bar c ) \bar m \rt]_{-\infty}^{\infty} - \lt[ \int_\mathscr{U}  (Y^A \na_A  c - Y^1 c ) m \rt]_{-\infty}^{\infty}\\
=& -2 \int_{\mathscr{U} }  (Y^A \na_A f_{l \ge 2} -Y^1 f_{l \ge 2} ) (m(+)-m(-)). 
\end{align*}
On the other hand,
\[
\begin{split}
 &T(c(+),c(+)) -T(c(-),c(-))- T(\bar c(+), \bar c(+)) + T(\bar c(-), \bar c(-))  \\
=&2 T( c(+),  f_{l \ge 2})  +  2T(  f_{l \ge 2}, c(+) )  - 2T( c(-),  f_{l \ge 2})  - 2 T(  f_{l \ge 2}, c(-) )  \\
= &2  T(  f_{l \ge 2},c(+)-c(-))  + 2 T( c(+)-c(-),  f_{l \ge 2})\\ 
=&  2  T(  f_{l \ge 2},c(+)-c(-))  + 2 T( c(+)-c(-),  f)\\
=& 2T ( f_{l \ge 2} + f , c(+)-c(-)).
\end{split}
\]
Similarly, 
\[
\begin{split}
   &    S(c(+),\underline c(+))  -S(c(-),\underline c(- )) - S( \bar c(+), \underline{\bar c}(+)  )  +   S( \bar c(-), \underline{\bar c}(-) )  \\
  =& 2 S(f_{l \ge 2}, \underline c(+)- \underline c(-))
     \end{split}
\]
and
\[
\begin{split}
 &S(\underline c(+),c(+) ) -S(\underline c(-).c(-)  ) -S( \underline{\bar c}(+) .\bar c(+) )  + S(  \underline{\bar c}(-), \bar c(-))   \\
  =&2 S( f, \underline c(+)- \underline c(-)).
\end{split}
\]
It follows that 
\[
\begin{split}
 \delta Q(\bar Y)- \delta Q(Y)=&\int_{\mathscr{U} }(-f_{l\le 1} \nabla_AY^A +2Y^A \nabla_A f_{l\le 1}  )(m(+)- m(-))\\
 & -\frac{1}{8} \lt[ \int_{\mathscr{U} } f_{\ell\le 1} \na^A\na^B (\na_C Y^C) C_{AB} \rt]_{-\infty}^{+\infty} 
 \end{split}
\]
It remains to show that the last integral vanishes, given the additional assumption of the potentials of $C_{AB}(\infty)-C_{AB}(-\infty)$. Combining Lemma 2.6, (2.18), (2,7) of \cite{CKWWY} and \eqref{eigen_fcn_div}, we have
\begin{align*}
\int_{\mathscr{U} } \na^A \na^B (\na_C Y^C) (C_{AB}(\infty)-C_{AB}(-\infty)) = \frac{1}{2} \int_{\mathscr{U} } \Delta(\Delta + 2) (\na_C Y^C) \cdot (c(+)-c(-)) =0 
\end{align*}
and
\begin{align*}
&\int_{\mathscr{U} } \tilde X^k \na^A \na^B (\na_C Y^C)(C_{AB}(\infty)-C_{AB}(-\infty)) \\
&= \frac{1}{2}\int_{\mathscr{U} } \tilde X^k \lt[ (\Delta+2)(\na_C Y^C) (\Delta+2)(c(+)-c(-))  - 2 \epsilon^{AB} \na_A (\na_C Y^C) \na_B (\Delta + 2) (\underline{c}(+) - \underline{c}(-)) \rt]\\
&=0.
\end{align*}
This completes the proof using \eqref{supertranslation_bracket}.
\end{proof}

\subsection{Non-radiative spacetimes}
Recall that the invariant charge is independent of the retarded time in a non-radiative spacetime. We show that in non-radiative spacetimes the invariant charge itself, not just its total flux, is invariant under supertranslation. Namely, fixing $\bar{u}=\bar{u}_0$, we consider the invariant charges
\[\begin{split}&Q=Q(u_0,Y)\\
&\bar Q=Q(\bar{u}_0,\bar Y)
\end{split}\]
and prove the following theorem:

\begin{theorem} \label{invariant_CWY_vanish_news_extended}
The invariant charge satisfies
\[\begin{split}
\bar{Q}-Q= &\int_{\mathscr{U}} (2 Y^A \na_A f_{\ell\le 1} - f_{\ell\le 1} \na_A Y^A)  \mathring m  -\frac{1}{8}\int_{\mathscr{U}} f_{\ell\leq 1}  \na^A\na^B(\na_C Y^C)     \mathring{C}_{AB} \\
=& -Q ([ f_{\ell\leq 1} \partial_u, Y])  -\frac{1}{8}\int_{\mathscr{U}} f_{\ell\leq 1}  \na^A\na^B(\na_C Y^C)     \mathring{C}_{AB}
 \end{split}\]
 Moreover,
 \[\int_{\mathscr{U}} f_{\ell\leq 1}  \na^A\na^B(\na_C Y^C)     \mathring{C}_{AB}=0\]
 if the potentials for $ \mathring{C}_{AB}$ are supported in $\mathscr{U}$.
\end{theorem}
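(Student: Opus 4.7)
The proof plan parallels that of Theorem~\ref{invariant_CWY_vanish_news}, with modifications to accommodate three features of extended BMS fields absent in the classical case: the additional term $-\frac{1}{4}\int_{\mathscr{U}} u(\epsilon_{AB}\nabla^A Y^B)\epsilon^{PQ}\nabla_P\nabla^E C_{EQ}$ in $\tilde Q(u,Y)$, the non-vanishing of $\nabla_C Y^C$ appearing in the two new correction terms of $Q(u,Y)$, and the presence of the co-closed potential $\underline{c}$. First, I would apply the non-radiative transformation formulas \eqref{transform} and \eqref{AM_aspect2} to obtain $\bar m = \mathring m$, $\bar C_{AB} = \mathring C_{AB} - F_{AB}$ with $F_{AB} = 2\na_A\na_B f - \Delta f\, \sigma_{AB}$, $\bar c = \mathring c - 2f_{\ell\geq 2}$, $\bar{\underline c} = \mathring{\underline c}$, and an explicit expression for $\bar N_A(\bar u_0,x)$ involving $N_A(u_0,x)$, $\mathring m$, $\mathring P_{BA}$, $f$, and $(\bar u_0 - u_0)$. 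Substituting into \eqref{invariant_charge}, one obtains $\bar Q - Q$ as a sum of the classical part, the mass-aspect correction, and the two bilinear forms $T$ and $S$ introduced in the proof of Theorem~\ref{extended_invariant}.

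For the classical part $\tilde Q(\bar u_0,\bar Y) - \tilde Q(u_0,Y)$, the computation in the proof of Theorem~\ref{invariant_CWY_vanish_news} carries over: Lemma~\ref{quadratic_integral} disposes of the purely quadratic $F_{AB}$ contributions, and Lemma~\ref{integral formula for nonradiative case} combined with \eqref{AM_aspect2} cancels the mixed $\mathring C_{AB}$--$F_{AB}$ terms against the angular-momentum aspect difference. The new extended-BMS contribution splits, under $\bar C_{EQ} = \mathring C_{EQ} - F_{EQ}$, into a piece linear in $(\bar u_0 - u_0)\mathring C_{EQ}$ and a piece involving $F_{EQ}$; the latter vanishes upon integration by parts using \eqref{eigen_fcn_curl}, since $\epsilon^{PQ}\na_P\na^E F_{EQ}$ reduces to derivatives of $\Delta f$ that pair against the $(\Delta+2)$-eigenfunction $\epsilon_{AB}\na^A Y^B$. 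The $T$- and $S$-differences are handled exactly as in the proof of Theorem~\ref{extended_invariant}: the symmetry of $T$ and $S$ (valid since $c$ or $f$ has support in $\mathscr{U}$) collapses them, modulo $f_{\ell\leq 1}$ terms, to $-2T(f,\mathring c) - 2S(f,\mathring{\underline c})$, which combines with the mass-aspect correction difference $-2\int_{\mathscr{U}}(Y^A\na_A f_{\ell\geq 2} - Y^1 f_{\ell\geq 2})\mathring m$ to yield the stated right-hand side.

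The main obstacle will be the bookkeeping of the several $(\bar u_0 - u_0)$-linear contributions generated by \eqref{AM_aspect2} and by the extended-field term in $\tilde Q$; these must be shown to assemble into $(\bar u_0 - u_0)\int_{\mathscr{U}} Y^A(\na_A \mathring m - \frac{1}{4}\na^B \mathring P_{BA})$ and vanish upon integration by parts against $Y^A$, using that $\na_A Y^A$ and $\epsilon_{AB}\na^A Y^B$ are eigenfunctions of $\Delta + 2$ by \eqref{eigen_fcn_div} and \eqref{eigen_fcn_curl}. For the final assertion, substituting the Hodge decomposition \eqref{CAB} of $\mathring C_{AB}$ and integrating by parts moves $\na^A\na^B$ onto the potentials $\mathring c$ and $\mathring{\underline c}$; writing $f_{\ell\leq 1}$ as a linear combination of $1$ and the $\tilde X^k$, the two displayed identities at the end of the proof of Theorem~\ref{extended_invariant} apply verbatim (legitimate since the potentials are supported in $\mathscr{U}$), yielding $\int_{\mathscr{U}} f_{\ell\leq 1}\na^A\na^B(\na_C Y^C)\mathring C_{AB} = 0$.
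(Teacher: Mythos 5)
Your plan follows the paper's proof essentially step for step: the same transformation formulas \eqref{transform} and \eqref{AM_aspect2}, Lemma \ref{quadratic_integral} for the purely quadratic $F_{AB}$ terms, Lemma \ref{integral formula for nonradiative case} for the mixed terms and the angular-momentum aspect, the bilinear forms $T$ and $S$ with their symmetry for the correction terms, and the same pair of identities for the final support statement. One step, however, is mis-described in a way that would spoil the coefficient if carried out literally: you assert that Lemma \ref{integral formula for nonradiative case} combined with \eqref{AM_aspect2} \emph{cancels} the mixed $\mathring C_{AB}$--$F_{AB}$ terms against the angular-momentum aspect difference. That cancellation occurs only for classical BMS fields, where the traceless Hessian of $\na_CY^C$ vanishes. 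For an extended field the lemma leaves the residue $-\frac{1}{4}\int_{\mathscr{U}} f\,\na^A\na^B(\na_CY^C)\mathring C_{AB}$, and this residue is indispensable: the $T$- and $S$-differences contribute $+\frac{1}{8}\int_{\mathscr{U}}(f+f_{\ell\ge 2})\na^A\na^B(\na_CY^C)\mathring C_{AB}$, and only after adding the residue does one obtain $-\frac{1}{8}\int_{\mathscr{U}} f_{\ell\le 1}\na^A\na^B(\na_CY^C)\mathring C_{AB}$ as in the statement; if the classical part truly cancelled, you would be left with $+\frac{1}{8}\int(f+f_{\ell\ge2})(\cdots)$, which does not reduce to an $\ell\le 1$ mode. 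The remaining points you raise are fine: the extended term's $F$-piece indeed drops out (since $\na^EF_{EQ}=\na_Q(\Delta+2)f$, the quantity $\epsilon^{PQ}\na_P\na^EF_{EQ}$ vanishes identically, even without invoking \eqref{eigen_fcn_curl}); the $(\bar u_0-u_0)$-linear terms cancel exactly as in the paper; and the final vanishing follows from the two displayed identities in the proof of Theorem \ref{extended_invariant} once the potentials of $\mathring C_{AB}$ are supported in $\mathscr{U}$.
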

\begin{proof}

Taking the difference of $\bar Q $ and  $Q$ and applying \eqref{transform}, we obtain 
 \begin{equation}
 \begin{split}
    &\bar Q-Q\\
 =&\frac{1}{4}\int_{\mathscr{U}}  Y^A \Big [  \mathring{C}_{AB}\nabla_D F^{BD}+F_{AB} \nabla_D \mathring{C}^{BD}+\frac{1}{2} \nabla_A (\mathring{C}_{BD} F^{BD})- F_{AB} \nabla_D F^{BD} \\
      &\qquad\qquad\qquad -\frac{1}{4} \na_A( F_{BD}F^{BD} )   \Big ] +\int_{\mathscr{U}} Y^A \lt[\bar{N}_A  (\bar{u}_0, x)-  N_A  (u_0, x)\rt]\\
& + \big[ \int_{\mathscr{U}}(2Y^1 f_{\ell\ge 2} - 2Y^A \na_A f_{\ell\ge 2})  \mathring m\big]  + \frac{1}{16} T( \mathring{c}, \mathring{c}) - \frac{1}{16} T( \mathring{c} -2f_{\ell\geq 2}, \mathring{c}- 2f_{\ell\geq 2})\\
&\qquad\qquad\qquad +\frac{1}{32} S( \mathring{c},\mathring{\underline{c}}) -\frac{1}{32} S( \mathring{c}-2f_{\ell\geq 2},\mathring{\underline{c}}) +\frac{1}{32} S( \mathring{\underline{c}},\mathring{c}) -\frac{1}{32} S(\mathring{\underline{c}}, \mathring{c}-2f_{\ell\geq 2})\\
&+ (\bar u_0 - u_0) \Big[ \int_{\mathscr{U}} 2 \mathring m Y^1 -\frac{1}{4} \int_{\mathscr{U}}  (\epsilon_{AB} \nabla^A Y^B) \epsilon^{PQ} \nabla_P \nabla^E \mathring{C}_{EQ} \Big]
 \end{split}
 \end{equation}
 We have 
  \[
 \begin{split}
 &2 T( \mathring{c}, \mathring{c}) - 2T( \mathring{c} -2f_{\ell\geq 2}, \mathring{c}- 2f_{\ell\geq 2})+S( \mathring{c},\mathring{\underline{c}}) -( \mathring{c}-2f_{\ell\geq 2},\mathring{\underline{c}}) +S( \mathring{\underline{c}},\mathring{c}) -S(\mathring{\underline{c}}, \mathring{c}-2f_{\ell\geq 2})\\
=  &4\int_{\mathscr{U}}( f_{\ell\geq 2}+f)  \lt[ \na^A\na^B(\na_C Y^C)    \mathring {C}_{AB}  \rt]  + 8T(f_{\ell\geq 2} ,f_{\ell\geq 2} ).
 \end{split}
 \]
 and observe that 
  \[ \int_{\mathscr{U}} Y^A \lt[  -F_{AB}\nabla_D F^{DB}-\frac{1}{4} \na_A( F_{BD}F^{BD} ) \rt]  = T(f_{\ell\geq 2} ,f_{\ell\geq 2})\] 
  by Lemma \ref{quadratic_integral}.  
  
  We compute
\[\begin{split}&\int_{\mathscr{U}} Y^A \lt[\bar{N}_A  (\bar{u}_0, x)-  N_A  (u_0, x)\rt] + (\bar u_0 - u_0) \Big[ \int_{\mathscr{U}} 2  \mathring m Y^1 -\frac{1}{4} \int_{S^2}  (\epsilon_{AB} \nabla^A Y^B) \epsilon^{PQ} \nabla_P \nabla^E C_{EQ} \Big]
\\
=& \int_{\mathscr{U}} Y^A\lt[     f(\na_A \mathring{m}+3\mathring{m}\nabla_A f-f \nabla^B \mathring{P}_{BA} -3\mathring{P}_{BA} \na^B f  
 \rt]. 
\end{split}
\]
As a result,
\[\begin{split}
   &\bar{Q}-Q\\
=&\frac{1}{4}\int_{\mathscr{U}} Y^A \lt[  \mathring{C}_{AB}\nabla_D F^{BD}+F_{AB} \nabla_D \mathring{C}^{BD}+\frac{1}{2} \nabla_A (\mathring{C}_{BD} F^{BD})-f \nabla^B \mathring{P}_{BA} -3\mathring{P}_{BA} \na^B f   \rt]\\
&+ \int_{\mathscr{U}}  (2 Y^A \na_A f_{\ell\le 1} - f_{\ell\le 1} \na_A Y^A) \mathring m+\frac{1}{8}\int_{\mathscr{U}}( f_{\ell\geq 2} +f) \na^A\na^B(\na_C Y^C)     \mathring{C}_{AB}.  \end{split}\]
By Lemma \ref{integral formula for nonradiative case}, the first line is equal to
\[ -\frac{1}{4} \int_{\mathscr{U}} f \na^A\na^B (\na_C Y^C) \mathring C_{AB}.\]

Putting everything together, we arrive at 
\[\begin{split}
\bar{Q}-Q= \int_{\mathscr{U}}  (2 Y^A \na_A f_{\ell\le 1} - f_{\ell\le 1} \na_A Y^A)  \mathring m  -\frac{1}{8}\int_{\mathscr{U}} f_{\ell\leq 1}  \na^A\na^B(\na_C Y^C)     \mathring{C}_{AB}.  \end{split}\]
This completes the proof using \eqref{supertranslation_bracket}.
\end{proof}

\section{Boost invariance of the classical and invariant charges}
This section generalizes the boost invariance of classical and invariant charges in Section \ref{sec:boost_classical} to extended BMS fields. The two subsections correspond exactly to those of Section \ref{sec:boost_classical}.

We also recall the assumption made in the beginning of Section \ref{subsec:total_flux_extended} that $m,N_A,C_{AB},N_{AB}$ are supported in $\mathscr{U}$, the domain of the extended conformal Killing vector field $Y^A$.
\subsection{Total flux of classical and invariant charges}
We generalize Theorem \ref{boost_invariance_flux} to extended BMS fields. We begin with a pointwise identity.
\begin{lemma}\label{Y1_conformal}
\begin{align*}
\hat\na_a \hat\na_b (K\bar Y^1) - \frac{1}{2} \hat\Delta(K\bar Y^1) \hat\sigma_{ab}  = K \lt( \hat\na_a \hat\na_b (Y^1\circ g) - \frac{1}{2} \hat\Delta (Y^1 \circ g) \hat\sigma_{ab} \rt).
\end{align*}
\end{lemma}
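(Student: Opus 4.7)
The plan is to reduce the tensor identity to a scalar computation via the boost relation, and then to evaluate the trace-free Hessian of $\bar Y^c\hat\na_c K$ using the conformal-Killing structure of $\bar Y$ together with the crucial input $T(K)=0$. By Definition \ref{def_boost_related}, one has the pointwise scalar relation $K\bar Y^1 = K(Y^1\circ g)-\bar Y^c\hat\na_c K$ at $\bar x$. Writing $T(F):=\hat\na_a\hat\na_b F-\tfrac{1}{2}\hat\Delta F\,\hat\sigma_{ab}$ for the trace-free Hessian on $(S^2,\hat\sigma)$, the lemma becomes
\[
T(\bar Y^c\hat\na_c K) \,=\, T(K(Y^1\circ g))-K\,T(Y^1\circ g).
\]
Since $g\colon(S^2,\hat\sigma)\to(S^2,\sigma)$ is an isometry by $\hat\sigma=g^*\sigma$, pulling back \eqref{bar_traceless_Hessian} yields $\hat\na_a\hat\na_b K=\tfrac{1}{2}\hat\Delta K\,\hat\sigma_{ab}$, i.e.\ $T(K)=0$. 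The Leibniz rule together with $T(K)=0$ then collapses the right-hand side into
\[
T(K(Y^1\circ g))-K\,T(Y^1\circ g)\,=\,2\hat\na_{(a}K\,\hat\na_{b)}(Y^1\circ g)-\hat\na^c K\,\hat\na_c(Y^1\circ g)\,\hat\sigma_{ab}.
\]

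Next, because $g$ is conformal, $\bar Y^a$ is an extended conformal Killing field on $(S^2,\hat\sigma)$ with conformal factor $Y^1\circ g$, giving the decomposition
\[
\hat\na_a\bar Y^c\,=\,(Y^1\circ g)\delta_a{}^c+\tfrac{1}{2}\psi\,\hat\epsilon_a{}^c,\qquad \psi:=\hat\epsilon_{cd}\hat\na^c\bar Y^d.
\]
The commutator identity $[\hat\na_a,\hat\na_c]\hat\na_b F=-\hat\sigma_{bc}\hat\na_a F+\hat\sigma_{ab}\hat\na_c F$ on the unit round sphere yields, for any vector $V$ and scalar $F$,
\[
\hat\na_a\hat\na_b(V^c\hat\na_c F)\,=\,\mathcal{L}_V(\hat\na_a\hat\na_b F)+(\hat\na_a\hat\na_b V^c)\hat\na_c F-V_b\hat\na_a F+\hat\sigma_{ab}V^c\hat\na_c F.
\]
Setting $V=\bar Y$, $F=K$ and invoking $T(K)=0$ makes $\mathcal{L}_{\bar Y}(\hat\na_a\hat\na_b K)$ a pure-trace tensor, so after the trace-free symmetric projection
\[
T(\bar Y^c\hat\na_c K)_{ab}\,=\,\big[(\hat\na_a\hat\na_b\bar Y^c)\hat\na_c K-\bar Y_b\hat\na_a K\big]_{\text{sym, trace-free}}.
\]

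Finally, differentiating the conformal-Killing formula produces $\hat\na_a\hat\na_b\bar Y^c=\hat\na_a(Y^1\circ g)\delta_b{}^c+\tfrac{1}{2}\hat\na_a\psi\,\hat\epsilon_b{}^c$, and comparing this with the Ricci-commuted version of $\hat\na_b\hat\na_a\bar Y^c$ forces the scalar identity
\[
\bar Y_c\,=\,-\hat\na_c(Y^1\circ g)+\tfrac{1}{2}\hat\epsilon_c{}^d\hat\na_d\psi,
\]
which is the Hodge-decomposition rewriting of Lemma \ref{eigen_form} for $\bar Y$ on $(S^2,\hat\sigma)$. Substituting these expressions into the sym-trace-free formula above, the $(Y^1\circ g)$-pieces assemble into $2\hat\na_{(a}K\,\hat\na_{b)}(Y^1\circ g)$, while the residual $\psi$-pieces have the form $\tfrac{1}{2}\big[\hat\na_{(a}\psi\,(*\hat\na K)_{b)}-(*\hat\na\psi)_{(a}\hat\na_{b)}K\big]$. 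By the two-dimensional Hodge identity
\[
\alpha_{(a}(*\beta)_{b)}-(*\alpha)_{(a}\beta_{b)}\,=\,\langle\alpha,*\beta\rangle\,\hat\sigma_{ab}\qquad\text{for 1-forms }\alpha,\beta,
\]
(verifiable directly in an orthonormal frame, applied with $\alpha=d\psi$ and $\beta=dK$) this residual is purely a multiple of $\hat\sigma_{ab}$, which combines with the trace correction from the sym-trace-free projection to produce exactly $-\hat\na^c K\,\hat\na_c(Y^1\circ g)\,\hat\sigma_{ab}$, matching the right-hand side from the first paragraph. The main obstacle is tracking the two inequivalent $\psi$-pairings $\hat\na\psi\cdot {*}\hat\na K$ and ${*}\hat\na\psi\cdot \hat\na K$ and invoking the two-dimensional Hodge identity to show that their symmetrised difference is a pure-trace tensor; once this step is carried out, all remaining pieces match term-by-term.
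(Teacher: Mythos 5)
Your proof is correct. I checked the key steps: the reduction $K\bar Y^1=K(Y^1\circ g)-\bar Y^c\hat\na_cK$, the fact $T(K)=0$ (which follows from \eqref{bar_traceless_Hessian} with $f\equiv 1$ -- your ``pull back by the isometry'' phrasing is a slightly roundabout way of saying this, but the fact is the same one the paper uses), the Leibniz collapse of $T(K(Y^1\circ g))-KT(Y^1\circ g)$, the third-derivative commutator identity on the unit round sphere, the trace-free character of $\mathcal{L}_{\bar Y}(\hat\na_a\hat\na_bK)$ modulo $\hat\sigma_{ab}$ via the conformal Killing equation, the relation $\bar Y_c=-\hat\na_c(Y^1\circ g)+\tfrac12\hat\epsilon_c{}^d\hat\na_d\psi$ obtained by antisymmetrizing the second derivatives of $\bar Y$, and the two-dimensional identity $\alpha_{(a}(*\beta)_{b)}-(*\alpha)_{(a}\beta_{b)}=\langle\alpha,*\beta\rangle\hat\sigma_{ab}$; all are valid and they assemble to exactly $2\hat\na_{(a}K\hat\na_{b)}(Y^1\circ g)-\hat\na^cK\hat\na_c(Y^1\circ g)\hat\sigma_{ab}$, as required.

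Your route is genuinely different in organization from the paper's, and cleaner. The paper works with the combination $K^{-1}\pl_cK\bar Y^c$ and brute-forces $\hat\na_a\hat\na_b(K^{-1}\pl_cK\bar Y^c)$, generating a long list of terms involving $K^{-1},K^{-2},K^{-3}$ and the explicit fact $\hat\Delta K=-2K+\mathrm{const}$, then shows the non-pure-trace leftovers cancel against $\pl_{(a}K\pl_{b)}\bar Y^1$. You instead isolate the scalar $\bar Y^c\hat\na_cK$ (no inverse powers of $K$), compute its trace-free Hessian through a general Lie-derivative identity for $\hat\na_a\hat\na_b(V^cF_{,c})$, and dispose of the rotational ($\psi$) part of $\hat\na_a\bar Y^c$ by a single Hodge-type algebraic identity showing its symmetrized contribution is pure trace. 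What your approach buys is conceptual transparency: the only inputs are $T(K)=0$, the conformal Killing equation on $(S^2,\hat\sigma)$, and constant curvature one, and each appears exactly once; the identity $\bar Y_c=-\hat\na_c(Y^1\circ g)+\tfrac12\hat\epsilon_c{}^d\hat\na_d\psi$ you derive is precisely the $\hat\sigma$-analogue of \eqref{Hodge Y} from Appendix A, so your argument also ties the lemma back to machinery already in the paper. The paper's computation, while messier, has the minor virtue of never needing the symmetrization bookkeeping or the two-dimensional Hodge identity. The one place to be careful in writing yours up is the symmetrization convention in $2\hat\na_{(a}K\hat\na_{b)}(Y^1\circ g)$ and in the Hodge identity; with the weight-$\tfrac12$ convention everything is consistent as you have it.
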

\begin{proof}
Recalling
\[ \bar Y^1 = Y^1 \circ g - K^{-1} \pl_a K \bar Y^a\]
and \[ \hat\na_a \hat\na_b K - \frac{1}{2} \hat\Delta K \hat\sigma_{ab} =0,\]
we get
\begin{align*}
&\hat\na_a \hat\na_b (K\bar Y^1) - \frac{1}{2} \hat\Delta(K\bar Y^1) \hat\sigma_{ab} \\
&= K \lt(\hat\na_a \hat\na_b (Y^1\circ g - K^{-1}\pl_c K \bar Y^c) - \frac{1}{2} \hat\Delta(Y^1 \circ g - K^{-1}\pl_c K \bar Y^c) \hat\sigma_{ab} \rt) \\
&\quad + \pl_a K \pl_b \bar Y^1 + \pl_b K \pl_a \bar Y^1 - \hat\na^c K \pl_c \bar Y^1 \hat\sigma_{ab}\\
&= K \lt( \hat\na_a \hat\na_b (Y^1\circ g) - \frac{1}{2} \hat\Delta (Y^1 \circ g) \hat\sigma_{ab} \rt) - K \hat\na_a \hat\na_b \lt( K^{-1}\pl_c K \bar Y^c \rt)\\
&\quad + \pl_a K \pl_b \bar Y^1 + \pl_b K \pl_a \bar Y^1 + \mathscr{A}  
\end{align*}
where $\mathscr{A}$ is proportional to $\hat\sigma_{ab}$.
We compute
\begin{align*}
&-\hat\na_a\hat\na_b \lt( K^{-1}\pl_c K \bar Y^c \rt) \\
&= \hat\na_a \lt( K^{-2}\pl_b K \pl_c K \bar Y^c - K^{-1} \cdot \frac{1}{2}\hat\Delta K \hat\sigma_{bc}\bar Y^c - K^{-1} \pl_c K \hat\na_b \bar Y^c \rt)\\
&= -2K^{-3} \pl_a K \pl_b K \pl_c K \bar Y^c + K^{-2}\cdot \frac{1}{2}\hat\Delta K \hat\sigma_{ab} \pl_c K \bar Y^c + K^{-2} \pl_b K \cdot \frac{1}{2}\hat\Delta K \hat\sigma_{bc}\bar Y^c +K^{-2} \pl_b K \pl_c K \hat\na_a \bar Y^c\\
&\quad +K^{-2} \pl_a K \cdot \frac{1}{2} \hat\Delta K \hat\sigma_{bc} \bar Y^c - K^{-1} \frac{1}{2}\pl_a \hat\Delta K \hat\sigma_{bc} \bar Y^c - K^{-1}\cdot \frac{1}{2}\hat\Delta K \hat\sigma_{bc}\bar Y^c\\
&\quad + K^{-2}\pl_a K \pl_c K \hat\na_b \bar Y^c - K^{-1}\cdot \frac{1}{2}\hat\Delta K \hat\sigma_{ac} \bar\na_b \bar Y^c - K^{-1}\pl_c K \hat\na_a \hat\na_b \bar Y^c.
\end{align*}
We use the fact that a conformal Killing vector field is preserved by a conformal map
\begin{align}\label{conf_Killing_hat}
\hat\na^b \bar Y^c + \hat\na^c \bar Y^b = 2 (Y^1\circ g) \hat\sigma^{bc}
\end{align} to simplify the last term
\begin{align*}
\hat\na_a \hat\na_b \bar Y^c &= \hat\na_a \lt( -\hat\sigma_{bd} \hat\na^c \bar Y^d + 2 (Y^1\circ g) \delta_b^c \rt)\\
&= -\hat\sigma_{bd} \lt( \hat\na^c\hat\na_a \bar Y^d + \hat R_{a\;\;\;\;e}^{\;\;cd} \bar Y^e \rt) + 2 \pl_a \lt( \bar Y^1 + K^{-1}\pl_d K \bar Y^d \rt)\delta_b^c\\
&= -\hat\sigma_{bd} \hat\na^c\hat\na_a \bar Y^d + (\delta_b^c \hat\sigma_{ae}- \hat\sigma_{ba} \delta^c_e ) \bar Y^e + 2 \pl_a \bar Y^1 \delta_b^c\\
&\quad +2 \lt( -  K^{-2} \pl_a K \pl_d K \bar Y^d + K^{-1}\hat\Delta K \hat\sigma_{ad}\bar Y^d + K^{-1}\pl_d K \hat\na_a \bar Y^d \rt)\delta_b^c 
\end{align*}
Noting that $\hat\Delta K = -2 K + const.$, we obtain 
\begin{align*}
-\hat\na_a\hat\na_b \lt( K^{-1}\pl_c K \bar Y^c \rt) &= -\frac{1}{2} \lt( \hat\na_a\hat\na_b \lt( K^{-1}\pl_c K \bar Y^c \rt) + \hat\na_b\hat\na_a \lt( K^{-1}\pl_c K \bar Y^c \rt)\rt) \\
&= \frac{1}{2}K^{-1} \pl_c K \hat\na^c (\hat\sigma_{bd} \hat\na_a \bar Y^d + \hat\sigma_{ad}\hat\na_b \bar Y^d)  - K^{-1} \lt( \pl_b K \pl_a \bar Y^1 + \pl_a K \pl_b \bar Y^a \rt).
\end{align*}
By \eqref{conf_Killing_hat}, the first term on the right-hand side is proportional to $\hat\sigma_{ab}$. Consequently, the difference of two symmetric traceless 2-tensors $\hat\na_a \hat\na_b (K\bar Y^1) - \frac{1}{2} \hat\Delta(K\bar Y^1) \hat\sigma_{ab} $ and $K \lt( \hat\na_a \hat\na_b (Y^1\circ g) - \frac{1}{2} \hat\Delta (Y^1 \circ g) \hat\sigma_{ab} \rt)$ is proportional to $\hat\sigma_{ab}$ and hence zero. This completes the proof.
\end{proof}

\begin{theorem}\label{boost_equ_extended_flux}
Suppose $\bar Y$ and $Y$ are two extended BMS fields related by a boost $B_g$ where $\bar Y$ (resp. $Y$) is defined on $(-\infty,+\infty) \times \mathscr{U} $ (resp. $(-\infty,+\infty) \times g(\mathscr{U})$). If $N_{AB} = O(|u|^{-2-\epsilon})$, then we have
\[ \delta \tilde Q(\bar Y) = \delta \tilde Q(Y)\quad\mbox{and}\quad \delta Q(\bar Y) = \delta Q(Y).\]
\end{theorem}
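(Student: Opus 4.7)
The plan is to follow the two-case decomposition used in Theorem \ref{boost_invariance_flux}, writing $\bar Y = \bar Y^0 \partial_{\bar u} + (\bar u \bar Y^1 \partial_{\bar u} + \bar Y^a \partial_a)$ and handling each summand separately. When $\bar Y = \bar Y^0 \partial_{\bar u}$, we have $\bar Y^a = 0$, $\bar Y^1 = 0$, and $\bar\nabla_a \bar Y^a = 0$, so both the $u$-weighted term in the classical charge and all the correction terms in the invariant charge vanish identically. The statement then reduces to Case 1 of Theorem \ref{boost_invariance_flux}, which only uses the change-of-variable formula for the mass aspect and works for any function $\bar Y^0$ on $\mathscr{U}$.

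For the case $\bar Y = \bar u \bar Y^1 \partial_{\bar u} + \bar Y^a \partial_a$, the Ashtekar--Streubel-like piece of $\delta\tilde Q(\bar Y)$ was already shown boost-invariant in Case 2 of Theorem \ref{boost_invariance_flux}, and as the remark following that theorem notes, the calculation only uses that $\bar Y^a$ is a vector field on $S^2$, so it applies verbatim to the extended setting. The genuinely new content is (i) the extra $u$-weighted contribution $\frac{1}{4}\int u[(2Y^1)\nabla^A\nabla^B N_{AB} - (\epsilon_{AB}\nabla^A Y^B) \epsilon^{PQ}\nabla_P\nabla^E N_{EQ}]$ appearing in $\partial_u \tilde Q$ for extended BMS fields, and (ii) the two correction integrals involving $c\,\nabla^A\nabla^B(\nabla_D Y^D) C_{AB}$ and $\underline c\,\nabla^A\nabla^B(\nabla_D Y^D)(*C)_{AB}$ entering $\delta Q(\bar Y) - \delta\tilde Q(\bar Y)$.

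For (i), I would integrate by parts twice to move the derivatives onto $\nabla_D Y^D$ and $\epsilon_{AB}\nabla^A Y^B$, producing integrands of the form $\nabla^A\nabla^B(2Y^1) N_{AB}$ and its curl analogue. Under the boost, Lemma \ref{Y1_conformal} tells us that the traceless Hessian of $K\bar Y^1$ with respect to $\hat\sigma$ equals $K$ times the traceless Hessian of $Y^1\circ g$ (and a completely parallel identity, proved by the same calculation, handles $\epsilon_{ab}\bar\nabla^a \bar Y^b$). Combined with $\bar N_{ab}(\bar u,\bar x) = \partial_a g^A \partial_b g^B N_{AB}(K\bar u, g(\bar x))$, the area-form relation $d\hat\sigma = K^2 d\bar\sigma$, and the change of variable $u = K\bar u$ (which contributes an extra $K^{-1}$ into $du\, d\sigma$), all $K$-factors cancel and the $u$-weighted flux pulls back to its unboosted counterpart. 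The decay $N_{AB} = O(|u|^{-2-\epsilon})$ makes the $u$-weighted integrals absolutely convergent, justifying Fubini.

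For (ii), I would use the boost transformations $\bar c = K^{-1}(c\circ g)$ and $\bar{\underline c} = K^{-1}(\underline c \circ g)$ (modulo harmonic modes $\ell \leq 1$ that are annihilated by $\nabla^A\nabla^B$), together with $\bar C_{ab} = K^{-1}\partial_a g^A \partial_b g^B C_{AB}$. Applying Lemma \ref{Y1_conformal} to $K\bar Y^1$ and tracking the area form, the pointwise integrand $\bar c\, \bar\nabla^a\bar\nabla^b(\bar\nabla_d \bar Y^d)\bar C_{ab}$ multiplied by the area element reproduces exactly $c\,\nabla^A\nabla^B(\nabla_D Y^D) C_{AB}$ pulled back by $g$; the co-closed term is identical using $(*C)$ and the companion conformal identity. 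This gives boost invariance of both correction integrals at $u = \pm\infty$.

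The principal obstacle is the careful bookkeeping of conformal weights: each term balances the area form ($K^2$), the shear and potentials ($K^{-1}$ each), the rescaling $\bar u = u/K$ (another $K^{-1}$), and the first-order discrepancy between $\bar\nabla$, $\hat\nabla$, and $g^*\nabla$. Lemma \ref{Y1_conformal} is precisely what absorbs the leftover $K^{-1}\partial K$ pieces from $\bar Y^1 = Y^1\circ g - K^{-1}\partial_a K\, \bar Y^a$, and without it the new extended-BMS contributions would not cleanly transform. The strengthened decay $N_{AB} = O(|u|^{-2-\epsilon})$ (versus the $O(|u|^{-1-\epsilon})$ sufficient for classical BMS fields) is exactly what is needed for the additional factor of $\bar u$ in the $u$-weighted terms and in $\bar N_a(\bar u,\bar x)$ (via the $K\bar u$ factor multiplying $\nabla_A \mathring m - \frac{1}{4}\nabla^B \mathring P_{BA}$ in Lemma \ref{transform_boost}).
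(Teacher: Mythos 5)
Your proposal is correct and follows essentially the same route as the paper: reduce to the classical boost theorem (whose Ashtekar--Streubel computation, as the remark there notes, only needs $\bar Y^a$ to be a vector field), isolate the new $u$-weighted news term and the two new potential-correction integrals, and kill each by combining Lemma \ref{Y1_conformal} with the transformation laws of $c$, $C_{AB}$, the area form, and the change of variable $u=K\bar u$. The only cosmetic difference is that the paper first converts the curl expression $(\epsilon_{AB}\nabla^AY^B)\epsilon^{PQ}\nabla_P\nabla^EN_{EQ}$ into the traceless Hessian of $Y^1$ paired with $N_{AB}$ via the identity \eqref{closed=coclosed}, so that Lemma \ref{Y1_conformal} is invoked only once, whereas you invoke a parallel conformal identity for the curl directly; both are valid.
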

\begin{proof}
Following the proof of Theorem \ref{boost_invariance_flux}, we have
\begin{align*}
\delta\tilde Q(\bar Y) - \delta\tilde Q(Y) &= -\frac{1}{4} \int_{-\infty}^{+\infty} \int_\mathscr{U} \bar u \lt( \bar\epsilon_{ab} \bar\na^a \bar Y^b \rt)\bar\epsilon^{pq}\bar\na_p \bar\na^e \bar N_{eq} d\bar\sigma d\bar u \\
&\quad + \frac{1}{4} \int_{-\infty}^{+\infty} \int_{g(\mathscr{U})} u (\epsilon_{AB}\na^A Y^B) \epsilon^{PQ} \na_P\na^E N_{EQ} d\sigma du
\end{align*} for extended BMS fields. By \eqref{closed=coclosed}, \eqref{Y1_conformal}, and the change of variable formula, we have 
\begin{align*}
&\int_{-\infty}^{+\infty} \int_\mathscr{U} \bar u \lt( \bar\epsilon_{ab} \bar\na^a \bar Y^b \rt)\bar\epsilon^{pq}\bar\na_p \bar\na^e \bar N_{eq} d\bar\sigma d\bar u\\
&=\int_{-\infty}^{+\infty} \int_{\mathscr{U}} \bar u (\bna^a \bna^b \bar Y^1 - \frac{1}{2}\bar\Delta \bar Y^1 \bar\sigma^{ab})\bar N_{ab} d\bar\sigma d\bar u \\
&= \int_{-\infty}^{+\infty} \int_{\mathscr{U}} u K^{-1} \lt( \hat\na_a \hat\na_b (K\bar Y^1) - \frac{1}{2} \hat\Delta (K\bar Y^1) \hat\sigma_{ab} \rt) \hat\sigma^{ac} \hat\sigma^{bd} \bar N_{cd} d\hat\sigma du\\
&= \int_{-\infty}^{+\infty} \int_{g(\mathscr{U})} u (\na^A \na^B Y^1 - \frac{1}{2}\Delta Y^1 \sigma^{AB}) N_{AB} d\sigma du\\
&= \int_{-\infty}^{+\infty} \int_{g(\mathscr{U})} u (\epsilon_{AB}\na^A Y^B) \epsilon^{PQ} \na_P\na^E N_{EQ} d\sigma du.
\end{align*}
This proves $\delta \tilde Q(\bar Y)=\delta \tilde Q(Y)$ for two extended BMS fields $\bar Y$ and $Y$ related by a boost.

For the invariant charge, it suffices to examine the correction terms
\begin{align}\label{temp_3}
-\frac{1}{16} \int_{g(\mathscr{U})} c \na^A\na^B (\na_D Y^D) C_{AB} \Big|_{-\infty}^{\infty} - \frac{1}{16} \int_{g(\mathscr{U})} \cb \na^A\na^B (\na_D Y^D) (*C)_{AB} \Big|_{-\infty}^{\infty}.
\end{align}

We claim that each of them is invariant under boost. By Lemma \ref{Y1_conformal}, we have the relation
\begin{align*}
\int_{\mathscr{U}} \bar c \bna^a\bna^b \bar Y^1 \bar C_{ab} d\bar\sigma &= \int_{\mathscr{U}} \bar c \lt( \hat\na_a \hat\na_b (K\bar Y^1) - \frac{1}{2} \hat\Delta(K\bar Y^1) \hat\sigma_{ab} \rt) \hat\sigma^{ad}\hat\sigma^{be} (K\bar C_{de}) d\hat\sigma \\
&= \int_{\mathscr{U}} K \bar c \lt( \hat\na_a \hat\na_b (Y^1 \circ g) - \frac{1}{2}\hat\Delta (Y^1 \circ g) \hat\sigma_{ab} \rt)\hat\sigma^{ad}\hat\sigma^{be} (K\bar C_{de}) d\hat\sigma \\
&= \int_{g(\mathscr{U})} c \lt( \na^A \na^B Y^1 - \frac{1}{2}\Delta Y^1 \sigma^{AB} \rt)C_{AB} d\sigma.
\end{align*}
at $\bar u = +\infty$ and $u = +\infty$. Hence, the first term in \eqref{temp_3} is invariant. The invariance of the second term follows similarly. This completes the proof of $\delta Q(\bar Y) = \delta Q(Y)$ for two extended BMS fields related by a boost.  
\end{proof}

\subsection{Non-radiative spacetimes}
We generalize Theorem \ref{boost_invariance_nonradiative} to extended BMS fields.
\begin{theorem}\label{boost_equ_extended_vanish}
Suppose $\bar Y$ and $Y$ are two extended BMS fields related by a boost $B_g$ in a non-radiative spacetime where $\bar Y$ (resp. $Y$) is defined on $(-\infty,+\infty) \times \mathscr{U} $ (resp. $(-\infty,+\infty) \times g(\mathscr{U})$). If $\mathring\cb$ is supported in $g(\mathscr{U})$, then 
\[ \tilde Q(\bar u, \bar Y) = \tilde Q(u,Y) \quad\mbox{and}\quad Q(\bar u,\bar Y) = Q(u,Y)\]
for any retarded time $\bar u$ and $u$.\end{theorem}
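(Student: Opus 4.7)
The plan is to mirror the argument of Theorem \ref{boost_invariance_nonradiative} in parallel with the extension idea of Theorem \ref{boost_equ_extended_flux}, carefully tracking the new extended-BMS contribution $-\tfrac{1}{4}\int_{\mathscr{U}} \bar u (\bar\epsilon_{ab}\bar\na^a \bar Y^b)\bar\epsilon^{pq}\bar\na_p \bar\na^e \mathring{\bar C}_{eq}\,d\bar\sigma$ that is absent in the classical case, and monitoring every place where Lemma \ref{eigen_form} was previously invoked globally on $S^2$. Split $\bar Y = \bar Y^0 \pl_{\bar u} + (\bar u \bar Y^1 \pl_{\bar u} + \bar Y^a \pl_a)$. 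The pure supertranslation part is handled exactly as Case 1 of Theorem \ref{boost_invariance_nonradiative}: the transformation $\bar m(\bar u,\bar x) = K^3 \mathring m(g(\bar x))$ together with \eqref{change of variable} gives invariance, and the new extended term vanishes for this piece since $\bar\epsilon_{ab}\bar\na^a \bar Y^b = 0$ when only $\bar Y^0$ is nonzero.

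For the remaining piece $\bar u \bar Y^1 \pl_{\bar u} + \bar Y^a \pl_a$, substitute Lemma \ref{transform_boost} into $\tilde Q(\bar u,\bar Y)$ and reorganize as in Case 2 of Theorem \ref{boost_invariance_nonradiative}. Every contribution already present in the classical non-radiative calculation matches its counterpart in $\tilde Q(u,Y)$ by the same manipulations, with one caveat: the vanishing in \eqref{non-radiative_boost_bp} relied on $(\Delta + 2)(\epsilon_{AE}\na^E Y^A) = 0$ over the entire sphere, which for an extended field holds only on $\mathscr{U}$. Writing $\na^B \mathring P_{BA} = \tfrac{1}{2}\epsilon_{AE}\na^E \Delta(\Delta+2)\mathring\cb$, the hypothesis that $\mathring\cb$ is supported in $g(\mathscr{U})$ lets one integrate by parts without boundary contributions so that the identity is applied only in the interior of $\mathscr{U}$, where Lemma \ref{eigen_form} is valid. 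The genuinely new contribution, after applying Lemma \ref{transform_boost} to $\mathring{\bar C}_{ab}$, reduces to the identity
\[ \int_{\mathscr{U}} \bar u \bigl(\bar\na^a \bar\na^b \bar Y^1 - \tfrac{1}{2}\bar\Delta \bar Y^1 \bar\sigma^{ab}\bigr)\mathring{\bar C}_{ab}\, d\bar\sigma = \int_{g(\mathscr{U})} u \bigl(\na^A\na^B Y^1 - \tfrac{1}{2}\Delta Y^1 \sigma^{AB}\bigr)\mathring C_{AB}\, d\sigma, \]
which is the non-radiative analogue of the identity established at the end of Theorem \ref{boost_equ_extended_flux}: rewrite both sides via \eqref{closed=coclosed}, apply Lemma \ref{Y1_conformal} to replace $K\bar Y^1$ by $Y^1 \circ g$, use $\mathring{\bar C}_{ab} = K^{-1}\pl_a g^A \pl_b g^B \mathring C_{AB}$, and invoke \eqref{change of variable}. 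This yields $\tilde Q(\bar u,\bar Y) = \tilde Q(u,Y)$. For the invariant charge, the correction $(Y^A\na_A c - Y^1 c)\mathring m$ is boost-invariant verbatim as in Theorem \ref{boost_invariance_nonradiative}, and the two additional correction terms involving $\mathring c \,\na^A\na^B(\na_D Y^D) \mathring C_{AB}$ and $\mathring{\underline{c}}\,\na^A\na^B(\na_D Y^D)(\ast \mathring C)_{AB}$ are invariant by the same Lemma \ref{Y1_conformal} argument that controlled the corresponding flux terms in Theorem \ref{boost_equ_extended_flux}, together with the transformation law $\bar c = K^{-1}(\mathring c\circ g)$ and its analogue for $\mathring{\underline{c}}$.

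The main obstacle is that Lemma \ref{eigen_form} no longer holds globally: the eigenfunction identities $\Delta(\na_C Y^C) = -2 \na_C Y^C$ and $\Delta(\epsilon_{AE}\na^E Y^A) = -2\epsilon_{AE}\na^E Y^A$ that powered \eqref{non-radiative_boost_bp} in the classical proof are now confined to $\mathscr{U}$. Absent a support hypothesis on $\mathring\cb$, the integration by parts that shifts derivatives from $\mathring\cb$ onto $Y^A$ would produce boundary terms on $\pl\mathscr{U}$ that cannot be controlled; the stated hypothesis that $\mathring\cb$ is supported in $g(\mathscr{U})$ is exactly what eliminates these boundary contributions and allows the classical argument to extend.
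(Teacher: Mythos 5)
Your overall plan is the paper's plan, and several pieces are right: Case 1, the boost invariance of the invariant-charge correction terms via Lemma \ref{Y1_conformal}, and the observation that the hypothesis on $\mathring\cb$ is exactly what rescues the integration by parts in \eqref{non-radiative_boost_bp}. But the central new computation contains a genuine error. The identity you claim,
\[ \int_{\mathscr{U}} \bar u \bigl(\bar\na^a \bar\na^b \bar Y^1 - \tfrac{1}{2}\bar\Delta \bar Y^1 \bar\sigma^{ab}\bigr)\mathring{\bar C}_{ab}\, d\bar\sigma = \int_{g(\mathscr{U})} u \bigl(\na^A\na^B Y^1 - \tfrac{1}{2}\Delta Y^1 \sigma^{AB}\bigr)\mathring C_{AB}\, d\sigma, \]
is false at fixed retarded time. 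Tracking conformal weights: raising two indices with $\bar\sigma=K^{-2}\hat\sigma$ gives $K^4$, the traceless Hessian gives $K^{-1}\cdot K$ via \eqref{bar_traceless_Hessian} and Lemma \ref{Y1_conformal}, the shear gives $K^{-1}$ (unlike the news, which has weight $K^0$), and $d\bar\sigma=K^{-2}d\hat\sigma$ — a net factor of $K$. In Theorem \ref{boost_equ_extended_flux} this surplus is absorbed by $\bar u\,d\bar u = K^{-2} u\, du$; here $\bar u$ is a constant and nothing absorbs it. The correct statement is that the left side equals $\bar u\int_{g(\mathscr{U})} (K\circ g^{-1})\bigl(\na^A\na^B Y^1 - \tfrac{1}{2}\Delta Y^1\sigma^{AB}\bigr)\mathring C_{AB}\,d\sigma$, which is not the corresponding term of $\tilde Q(u,Y)$.

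Relatedly, your assertion that ``every contribution already present in the classical non-radiative calculation matches its counterpart by the same manipulations'' with \eqref{non-radiative_boost_bp} as the only caveat misses a second place where globality was used: the last line of the classical Case 2 computation, $-\tfrac{1}{4}\int \bar u\, Y^A\bigl((K\circ g^{-1})\na^B\mathring P_{BA} + 3\mathring P_{BA}\na^B(K\circ g^{-1})\bigr)$, vanished in Theorem \ref{boost_invariance_nonradiative} only because Lemma \ref{integral formula for nonradiative case} reduces it to $-\tfrac14\int \bar u (K\circ g^{-1})\na^A\na^B(\na_C Y^C)\mathring C_{AB}$ and the integrand is identically zero for a \emph{global} conformal Killing field; for an extended field it survives. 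The paper's proof closes precisely by pairing these two leftovers: the $K$-weighted $\mathring P_{BA}$ line and the $K$-weighted transform of the extended-field term cancel each other through Lemma \ref{integral formula for nonradiative case} applied with $f = K\circ g^{-1}$ (for which $F_{AB}=0$). Without this cancellation mechanism your argument does not close, so the gap is substantive rather than cosmetic.
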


\begin{remark}
The assumption on the support of $\mathring \cb$ is invariant under BMS transformations in non-radiative spacetimes.
\end{remark}

\begin{proof}
The proof of Theorem \ref{boost_invariance_nonradiative} implies that  
\begin{align*}
\tilde Q(\bar u, \bar Y) &= \tilde Q(\bar u, Y) -\frac{1}{4} \int_{g(\mathscr{U})} \bar u Y^A \lt( (K\circ g^{-1}) \na^B \mathring P_{BA} + 3 \mathring P_{BA} \na^B (K\circ g^{-1}) \rt) \\
&\quad + \frac{1}{4} \int_{g(\mathscr{U})} \bar u (K \circ g^{-1}) \na^A \na^B (\na_C Y^C) \mathring C_{AB}
\end{align*}
for extended BMS fields. Note that the assumption on $\mathring\cb$ is used so that the integration by parts in \eqref{non-radiative_boost_bp} holds. By Lemma \ref{integral formula for nonradiative case}, the additional terms vanish. This proves $\tilde Q(\bar u,\bar Y) = \tilde Q(u,Y)$ for any retarded time $\bar u$ and $u$. 

For the invariant charge, one verifies the boost invariance of each correction term as in the previous theorem. This completes the proof of $Q(\bar u,\bar Y) = Q(u,Y)$ for any retarded time $\bar u$ and $u$.  
\end{proof}

\appendix

\section{Integral formulae for extended conformal Killing fields}
We derive integral lemmas for extended gconformal Killing fields $Y^A \frac{\pl}{\pl x^A}$ defined on an open subset $\mathscr{U} \subset S^2$, see Definition \ref{extended_conformal_Killing}. 
In these integral formulae, we impose assumptions so that integrations by parts are valid. This happens, for example, if the extended conformal Killing field is integrating against tensors with compact support in $\mathscr{U}$. If $Y^A \frac{\pl}{\pl x^A}$ is a global conformal Killing field, namely $\mathscr{U}=S^2$, those assumptions hold automatically and the formulae are applicable in Part 1.

\begin{lemma} \label{integral_lemma} Let $Y^A \frac{\pl}{\pl x^A}$ be an extended conformal Killing field defined on an open subset $\mathscr{U} \subset S^2$. If $f$ is a function with compact support in $\mathscr{U}$, then we have
\begin{align}\label{integral_formula}
\begin{split}
&\int_\mathscr{U} Y^A \lt( -F_{AB}\nabla_{D}N^{BD} +N_{AB}\nabla_{D}F^{BD}-\frac{1}{2} \epsilon_{AB} \nabla^B (\epsilon^{PQ} F_P^{\,\,\, E}N_{EQ} )  \rt)\\
&= \int_\mathscr{U} (-\nabla_AY^A f + 2 Y^A \nabla_A f )(\nabla_B\nabla_D N^{BD}) -\int_\mathscr{U} f\epsilon^{PQ} N_{EQ} \na^E\na_P(\epsilon_{CD}\na^C Y^D)
\end{split}
\end{align}
where $F_{AB}=2\nabla_A \nabla_B f-\Delta f\sigma_{AB}$.
\end{lemma}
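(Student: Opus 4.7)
The plan is to integrate by parts on each of the three LHS terms until every derivative has been moved off $f$, then match with the RHS via a pointwise identity on $Y$ derived from Lemma \ref{eigen_form}. Throughout I use the shorthand $\Phi := \nabla_A Y^A$, $\Omega := \epsilon_{AB}\nabla^A Y^B$, $v_A := \nabla^B N_{AB}$, $\psi := \nabla^A v_A$, $\mu := \epsilon^{AB}\nabla_A v_B$, and $\alpha_A := \epsilon_A{}^C v_C$.

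For the middle LHS term, I would first use $\nabla^B F_{AB} = \nabla_A(\Delta+2)f$ (a consequence of $F_{AB} = 2\nabla_A\nabla_B f - \Delta f\sigma_{AB}$ together with $[\Delta,\nabla_A]h = \nabla_A h$ on scalars on $S^2$). One IBP combined with the vanishing $(\nabla^B Y^A)N_{AB} = 0$ (from the CKE and tracelessness of $N$) converts it to $-\int(Y^A v_A)(\Delta+2)f$; self-adjointness of $\Delta+2$ and expansion via $\Delta Y^A = -Y^A$ together with the Hodge--Weitzenb\"ock identity $\Delta v_A = v_A + \nabla_A\psi - \epsilon_A{}^B\nabla_B\mu$ (valid on $S^2$, where Ricci $=\sigma$) reduce it fully. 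For the third LHS term, the 2D pointwise identity $F_P{}^E N_{EQ} = \tfrac{1}{2}(F\cdot N)\sigma_{PQ} + \tfrac{1}{2}(F\cdot *N)\epsilon_{PQ}$ for symmetric traceless tensors, where $(*N)_{AB} := \epsilon_A{}^C N_{CB}$, yields $\epsilon^{PQ}F_P{}^E N_{EQ} = F\cdot *N$; one IBP with $\epsilon_{AB}\nabla^B Y^A = -\Omega$, then two more using $F\cdot *N = 2\nabla^A\nabla^B f\,(*N)_{AB}$ (by tracelessness of $*N$), put it in the desired form. For the first LHS term, the general identity $\int F_{AB}T^{AB} = 2\int f\,\nabla^A\nabla^B T_{AB} - \int f\,\Delta T^A_A$ (valid for any 2-tensor $T$, since $[\nabla_B,\nabla_A]T^{AB}=0$ on $S^2$) applied with $T^{AB}=Y^A v^B$, together with the curvature commutator $\nabla^A\nabla^B Y_A = \nabla^B\Phi + Y^B$, completes the reduction.

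Summing $L_1 + L_2 + L_3$ and doing one final IBP on $-\int f Y^A\nabla_A\psi$, matching with the RHS reduces to the single pointwise statement
\[
(\nabla^B\Phi)v_B + 2Y^A v_A + (\nabla^A\Omega)\alpha_A = 0.
\]
This is the heart of the argument: applying $\Delta$ to the CKE-derived formula $\nabla_A Y_B = \tfrac{1}{2}\Phi\sigma_{AB} + \tfrac{1}{2}\Omega\epsilon_{AB}$ gives $\Delta Y_B = \tfrac{1}{2}\nabla_B\Phi - \tfrac{1}{2}\epsilon_B{}^A\nabla_A\Omega$, which combined with $\Delta Y_B = -Y_B$ from Lemma \ref{eigen_form} yields the key identity $\nabla_B\Phi + 2Y_B = \epsilon_B{}^A\nabla_A\Omega$. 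Substituting this, together with $(\nabla^A\Omega)\alpha_A = -v^B\epsilon_B{}^A\nabla_A\Omega$ (direct check in an orthonormal frame), produces the required cancellation. Matching is completed by noting $\epsilon^{PQ}N_{EQ}\nabla^E\nabla_P\Omega = (*N)_{AB}\nabla^A\nabla^B\Omega$.

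The main obstacle is the bookkeeping of half a dozen intertwined integrations by parts and careful sign-tracking in the $S^2$ curvature commutators on both vectors and covectors. The feature that makes the extended BMS case nontrivial beyond the classical is precisely the above pointwise identity on $Y$: for classical fields $\Omega$ has mode $\ell=1$, so $\nabla_A\nabla_B\Omega = -\Omega\sigma_{AB}$ and the second RHS term $\int f(*N)^{AB}\nabla_A\nabla_B\Omega$ vanishes automatically by tracelessness of $*N$, but in the extended case that shortcut is unavailable and the full identity on $Y$ is essential.
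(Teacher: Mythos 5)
Your proposal is correct and follows essentially the same route as the paper's proof: split the left-hand side into the three terms, integrate by parts to strip all derivatives off $f$ (using $\nabla^B F_{AB}=\nabla_A(\Delta+2)f$ and $\nabla^BY^A N_{AB}=0$), and close the argument with the identity $\nabla_B(\nabla_CY^C)=-2Y_B+\epsilon_{BC}\nabla^C(\epsilon_{DE}\nabla^DY^E)$, which is exactly the paper's key equation \eqref{Hodge Y}; your repackaging of the middle term via self-adjointness of $\Delta+2$ and of the first term via the universal identity for $\int F_{AB}T^{AB}$ are cosmetic reorganizations of the same integrations by parts, and your final pointwise cancellation $(\nabla^B\nabla_AY^A)v_B+2Y^Bv_B+(\nabla^A\Omega)\epsilon_A{}^Cv_C=0$ checks out.
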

\begin{proof}
We denote the function $\alpha = \epsilon_{AB}\na^A Y^B$ in the proof. We write the left-hand side as $(1)+(2)+(3)$. Integrating by parts, we get
\begin{align*}
(1) &= \int_\mathscr{U} -Y^A (2\na_A\na_B f -\Delta f\sigma_{AB}) \na_D N^{BD} \\
&= \int_\mathscr{U} 2 \na^B Y^A \na_A f \na^D N_{BD} + 2 Y^A \na_A f \na^B\na^D N_{BD} + \Delta f Y^A \na^B N_{AB},\\
(2) &= \int_\mathscr{U} Y^A N_{AB}\na^B (\Delta+2)f\\
&= \int_\mathscr{U} -\na^B Y^A N_{AB}(\Delta+2)f -Y^A \na^B N_{AB} (\Delta+2)f\\
&= \int_\mathscr{U} -Y^A \na^B N_{AB}(\Delta+2)f
\end{align*}
and hence
\begin{align*}
(1)+(2) = \int_\mathscr{U} 2\na^B Y^A \na_A f \na^D N_{BD} + 2 Y^A \na_A f \na^B\na^D N_{BD} - 2Y^A f \na^B N_{AB}.
\end{align*}
We simplify the first integral
\begin{align*}
&\int_\mathscr{U} 2\na^B Y^A \na_A f \na^D N_{BD} \\
&= \int_\mathscr{U} \na_C Y^C \na^B f \na^D N_{BD} + \alpha \epsilon^{BA} \na_A f \na^D N_{BD}\\
&= \int_\mathscr{U} -\na_B\na_C Y^C f \na_D N^{BD} -\na_C Y^C f \na^B\na^D N_{BD} + \alpha \epsilon^{BA} \na_A f \na^D N_{BD}\\
&= \int_\mathscr{U} 2Y^B f \na^D N_{BD} -\epsilon_{BC}\na^C \alpha f \na_D N^{BD} -\na_C Y^C f \na^B\na^D N_{BD} + \alpha \epsilon^{BA} \na_A f \na^D N_{BD}.
\end{align*}
In the last equality, we used the identity
\begin{align}\label{Hodge Y}
\na_B \na_C Y^C = -2Y_B +\epsilon_{BC}\na^C \alpha,
\end{align}
which is obtained from the computation 
\[ \na_B \na_C Y^C = \na_C \na_B Y^C - Y_B = \na^C \lt( \frac{1}{2} \na_D Y^D \sigma_{BD} + \frac{1}{2}\alpha \epsilon_{BC} \rt) - Y_B.\]
We also note the identity
\begin{align}\label{closed=coclosed}
2 \na_A\na_B (\na_C Y^C) - \Delta (\na_C Y^C)\sigma_{AB} = \epsilon_{AC}\na_B \na^C (\epsilon_{DE}\na^D Y^E) + \epsilon_{BC}\na_A \na^C (\epsilon_{DE}\na^D Y^E),
\end{align}
which is obtained by differentiating \eqref{Hodge Y} and using the identity $(\Delta+2)\na_C Y^C =0$.

In summary, we have
\[ (1)+(2) = \int_\mathscr{U} (-\nabla_AY^A f + 2 Y^A \nabla_A f )(\nabla_B\nabla_D N^{BD}) -\epsilon_{BC}\na^C \alpha f \na_D N^{BD}+ \alpha \epsilon^{BA} \na_A f \na^D N_{BD}.\]
Finally, we have
\begin{align*}
(3) &= \int_\mathscr{U} -\frac{1}{2}\alpha \epsilon^{PQ} (2 \na_P \na^E f - \Delta f \delta_P^E) N_{EQ}\\
&= \int_\mathscr{U} \na_P \alpha\epsilon^{PQ} \na^E f N_{EQ} + \alpha\epsilon^{PQ}\na^E f \na_P N_{EQ}\\
&= \int_\mathscr{U} -\na^E \na_P \alpha \epsilon^{PQ} f N_{EQ} -\na_P \alpha  \epsilon^{PQ} f\na^E N_{EQ} + \alpha \na^E f \epsilon_{EP} \na_Q N^{EQ} 
\end{align*}
where we used the identity $\epsilon^{PQ}\na_P N_{EQ} = \epsilon^{EP}\na_Q N^{EQ}$ \cite[(2.13)]{CKWWY} in the last equality. Putting these together, the assertion follows.
\end{proof}

\begin{lemma} \label{T}  Let $\alpha$ be a function defined on an open subset $\mathscr{U} \subset S^2$. Consider two functions $u$ and $v$ defined on $\mathscr{U}$ such that one of them has compact support. Denoting $u_{AB} = \na_A\na_B u$ and $v_{AB} = \na_A\na_B v$, we have
  \begin{equation}\label{T_symmetry} \int_\mathscr{U} \Big [ u(v_{AB} -\frac{1}{2} \Delta v \sigma_{AB}) - v(u_{AB} -\frac{1}{2} \Delta u \sigma_{AB})  \Big] \nabla^A\nabla^B \alpha=\frac{1}{2}\int_\mathscr{U} (u\Delta v-v\Delta u) (\Delta+2) \alpha.\end{equation}
and
   \begin{align}\label{S_symmetry}\begin{split} \int_\mathscr{U} \Big [ u({\epsilon_A}^D v_{DB}+{\epsilon_B}^D v_{DA} ) - v({\epsilon_A}^D u_{DB}+{\epsilon_B}^D u_{DA} )  \Big] \nabla^A\nabla^B \alpha \\
   =\int_\mathscr{U} \epsilon_{AD}\lt( u \na^A v - v \na^A u\rt) \na^D (\Delta+2) \alpha.\end{split} \end{align}

In particular, applying the above formulae to the divergence of an extended conformal Killing field $Y^A \frac{\pl}{\pl x^A}$ defined on $\mathscr{U}$, we have
\[\int_\mathscr{U} \Big [ u(v_{AB} -\frac{1}{2} \Delta v \sigma_{AB}) - v(u_{AB} -\frac{1}{2} \Delta u \sigma_{AB})  \Big] \nabla^A\nabla^B (\nabla_CY^C)=0 .\]
\[ \int_\mathscr{U} \Big [ u({\epsilon_A}^D v_{DB}+{\epsilon_B}^D v_{DA} ) - v({\epsilon_A}^D u_{DB}+{\epsilon_B}^D u_{DA} )  \Big] \nabla^A\nabla^B (\nabla_CY^C)=0\]
if either $u$ or $v$ is supported in $\mathscr{U}$. 
\end{lemma}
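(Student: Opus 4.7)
The plan is to establish \eqref{T_symmetry} and \eqref{S_symmetry} by repeated integration by parts on $\mathscr{U}$, valid because one of $u,v$ is compactly supported, combined with commutator identities on $(S^2,\sigma)$. Given the two identities, the ``in particular'' statements are immediate: for $\alpha = \na_C Y^C$ with $Y$ an extended conformal Killing field, Lemma~\ref{eigen_form} gives $(\Delta+2)(\na_C Y^C) = 0$ and hence $\na^D(\Delta+2)(\na_C Y^C) = 0$, so both right-hand sides vanish.

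For \eqref{T_symmetry}, I would first split off the trace via $\sigma_{AB}\na^A\na^B\alpha = \Delta\alpha$, reducing the LHS to
\[
\int_\mathscr{U} (u\,v_{AB} - v\,u_{AB})\na^A\na^B\alpha + \tfrac{1}{2}\int_\mathscr{U}(v\Delta u - u\Delta v)\Delta\alpha.
\]
Integrating by parts on one $\na_A$ in the first integral transfers a derivative onto $\na^A\na^B\alpha$, and the $(S^2,\sigma)$ identity $\na_A\na^A\na^B\alpha = \na^B(\Delta+1)\alpha$—which follows from $[\na_A,\na^B]\na^A\alpha = R^B{}_C\na^C\alpha = \na^B\alpha$ since the Ricci tensor of the round sphere equals $\sigma_{AB}$—closes the computation. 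The cross terms $\na_A u \cdot \na_B v \cdot \na^A\na^B\alpha$ are symmetric in $u\leftrightarrow v$ and cancel in the antisymmetric difference, leaving $\int(u\Delta v - v\Delta u)(\Delta+1)\alpha$; together with the trace contribution this yields $\tfrac{1}{2}\int(u\Delta v - v\Delta u)(\Delta+2)\alpha$, as claimed.

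For \eqref{S_symmetry}, the symmetry of $W^{AB} := \na^A\na^B\alpha$ lets me rewrite the LHS as $2\int_\mathscr{U}\epsilon_A{}^D(u\,v_{DB} - v\,u_{DB})W^{AB}$; integrating by parts on $\na_D$ (using $\na\epsilon=0$) gives two contributions. The first involves the antisymmetric tensor $T'_{DB}:=\na_D u\,\na_B v - \na_D v\,\na_B u$, which in two dimensions must satisfy $T'_{DB} = \phi\,\epsilon_{DB}$ with $\phi = \epsilon^{PQ}\na_P u\,\na_Q v$; combined with $\epsilon_A{}^D\epsilon_{DB} = -\sigma_{AB}$ this produces $2\int\phi\,\Delta\alpha$. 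The second contribution is $-2\int(u\na_B v - v\na_B u)\,\epsilon_A{}^D\na_D W^{AB}$, where the central difficulty is the pointwise identity
\[
\epsilon_A{}^D\na_D\na^A\na^B\alpha = \epsilon_A{}^B\na^A\alpha,
\]
obtained by commuting $\na_D$ past $\na^A$ and then past $\na^B$ and evaluating $\epsilon_A{}^D R_D{}^{AB}{}_E = 2\epsilon_E{}^B$ from $R_{ABCD} = \sigma_{AC}\sigma_{BD} - \sigma_{AD}\sigma_{BC}$. This identity vanishes in flat $\R^2$, so the nonzero result on $S^2$ is a pure curvature effect, and it is precisely this effect that promotes $\Delta\alpha$ to $(\Delta+2)\alpha$ on the right-hand side of \eqref{S_symmetry}; without it, the specialization to conformal Killing fields would not annihilate the RHS. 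Applying this identity converts the second contribution to $2\int\epsilon_{AD}(u\na^A v - v\na^A u)\na^D\alpha$; to match the claimed RHS I would then integrate by parts on $\na^D$ in $\int\epsilon_{AD}(u\na^A v - v\na^A u)\na^D\Delta\alpha$, noting that the Hessian terms $\epsilon_{AD}\na^D\na^A v$ and $\epsilon_{AD}\na^D\na^A u$ vanish (antisymmetric contracted with symmetric), and what survives is exactly $2\int\phi\,\Delta\alpha$ via the identity $\epsilon_{AD}\epsilon^{DA} = -2$. Summing the two contributions matches the LHS, completing the proof.
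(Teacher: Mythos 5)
Your proof is correct and follows essentially the same route as the paper: both identities are obtained by integration by parts exploiting the $u\leftrightarrow v$ antisymmetry together with the commutation identities of the round sphere (Ricci $=\sigma$), which is precisely where the shift from $\Delta$ to $\Delta+2$ originates. The only cosmetic difference is in \eqref{S_symmetry}, where the paper routes the computation through the traceless Hessian of $\alpha$ and the identity $\epsilon_{DB}\na^D C^{BA}=\epsilon^{AD}\na^B C_{BD}$ for symmetric traceless $2$-tensors, while you use the Riemann tensor directly together with the fact that an antisymmetric $2$-tensor in two dimensions is a multiple of the area form; the two bookkeepings are equivalent.
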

\begin{proof}
We write $u_A = \na_A u$ in the proof. We compute
\[
\begin{split}
    & \int_\mathscr{U}(uv_{AB}  - vu_{AB} ) \nabla^A\nabla^B \alpha \\
 = & \int_\mathscr{U} \nabla_B (uv_{A}  - vu_{A} )   \nabla^A\nabla^B \alpha\\
 = & -\int_\mathscr{U}  (uv_{A}  - vu_{A} )\nabla_B\nabla^A\nabla^B \alpha\\
 =&-\int_\mathscr{U}  (uv_{A}  - vu_{A} )\big[ \nabla^A\nabla_B\nabla^B(\nabla_CY^C)   + \nabla^A (\nabla_CY^C)    \Big]\\
 =&-\int_\mathscr{U}  (uv_{A}  - vu_{A} ) \nabla^A(\Delta+1)\alpha  \\
 =&\int_\mathscr{U}  ( u \Delta v - v \Delta u)(\Delta+1)\alpha\\
 \end{split}
 \]
\eqref{T_symmetry} follows from rearranging the terms. For \eqref{S_symmetry}, note that we can replace $\na^A\na^B \alpha$ by the symmetric traceless 2-tensor $\na^A\na^B \alpha - \frac{1}{2}\Delta\alpha \sigma^{AB}$. Recalling Proposition 2.4 of \cite{CKWWY} that $\epsilon_{DB} \na^D C^{BA} = \epsilon^{AD}\na^B C_{BD}$ for any symmetric traceless 2-tensor, we integrate by parts to get
\begin{align*}
&\int_\mathscr{U}  u ( {\epsilon_A}^D v_{DB}+{\epsilon_B}^D v_{DA} ) \nabla^A\nabla^B \alpha \\
&= \int_\mathscr{U} -\lt( u_D {\epsilon_A}^D v_B + u_A {\epsilon_B}^D v_D \rt)\lt( \na^A\na^B \alpha - \frac{1}{2}\Delta\alpha\sigma^{AB} \rt)\\
&\quad + \int_\mathscr{U} -u v_B \lt( -\frac{1}{2}{\epsilon_B}^D \na_D (\Delta+2 )\alpha \rt) -u{\epsilon_B}^D v_D \cdot \frac{1}{2}\na^B (\Delta+2)\alpha.
\end{align*}
Interchanging $u$ and $v$ and subtraction yield \eqref{S_symmetry}. The second claims follows from $(\Delta+2)(\nabla_CY^C)=0$ for an extended BMS field.
\end{proof}
The following lemma generalizes Lemma 2.3 of \cite{CKWWY}.
\begin{lemma} \label{quadratic_integral} Let $Y^A \frac{\pl}{\pl x^A}$ be an extended conformal Killing field defined on an open subset $\mathscr{U} \subset S^2$. If $u$ is a function with compact support in $\mathscr{U}$, then we have
\[
\begin{split}
    &\int_\mathscr{U} Y^1 [2\nabla_A\nabla_Bu \nabla^A\nabla^B u- (\Delta u)^2] .\\
=   &\int_\mathscr{U} Y^1[(\Delta+2)u]^2 + 2  \int_\mathscr{U}  u (\nabla^A\nabla^B Y^1 ) (\nabla_A\nabla_Bu - \frac{1}{2} \Delta u \sigma_{AB})
\end{split}
\]
where $2Y^1=\na_CY^C$
\end{lemma}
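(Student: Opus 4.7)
The plan is to transform the left-hand side directly into the right-hand side via a sequence of integrations by parts, with the compact support of $u$ in $\mathscr{U}$ killing all boundary terms. The proof rests on two structural ingredients. First, on $(S^2,\sigma)$ of constant curvature $1$, the Ricci identity gives $\na^A\na_A\na_B u = \na_B(\Delta+1)u$, hence the traceless Hessian $T_{AB} := \na_A\na_B u - \tfrac{1}{2}\Delta u\,\sigma_{AB}$ satisfies the divergence identity
\[ \na^A T_{AB} = \tfrac{1}{2}\na_B(\Delta+2)u. \]
Second, Lemma \ref{eigen_form} provides the eigenvalue relation $(\Delta+2)(\na_C Y^C) = 0$, which under $2Y^1 = \na_C Y^C$ reads $\Delta Y^1 = -2Y^1$.

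I would start by noting that, because $T$ is traceless and $\sigma^{AB}\sigma_{AB} = 2$, the integrand on the left is $2Y^1 T_{AB}T^{AB}$, and the final integrand on the right is $2u(\na^A\na^B Y^1)T_{AB}$. Rewriting $\int_\mathscr{U} Y^1 T_{AB}T^{AB} = \int_\mathscr{U} Y^1 T^{AB}\na_A\na_B u$, integration by parts in $\na_A$ together with the divergence identity yields
\[ \int_\mathscr{U} Y^1 T_{AB}T^{AB} = -\int_\mathscr{U} \na_A Y^1\,T^{AB}\na_B u - \tfrac{1}{2}\int_\mathscr{U} Y^1 \na^B(\Delta+2)u\,\na_B u. \]
A second integration by parts on the first integral above, moving $\na_B$ off $\na_B u$, combined with another application of the divergence identity, produces the desired term $\int_\mathscr{U} u\,(\na_A\na_B Y^1)\,T^{AB}$ plus a cross term of the form $\tfrac{1}{2}\int_\mathscr{U} u\,\na_A Y^1\,\na^A(\Delta+2)u$.

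A third integration by parts on this cross term, moving $\na^A$ off $(\Delta+2)u$ and using $\Delta Y^1 = -2Y^1$, converts it to $-\tfrac{1}{2}\int_\mathscr{U} \na^A u\,\na_A Y^1\,(\Delta+2)u + \int_\mathscr{U} uY^1(\Delta+2)u$. Collecting, the two contributions involving $(\Delta+2)u$ paired with first derivatives combine into $-\tfrac{1}{2}\int_\mathscr{U} \na^A u\,\na_A[Y^1(\Delta+2)u]$, and a final integration by parts rewrites this as $\tfrac{1}{2}\int_\mathscr{U} Y^1 \Delta u\,(\Delta+2)u$. Using the pointwise factorization $(\Delta+2)u\,(\tfrac{1}{2}\Delta u + u) = \tfrac{1}{2}[(\Delta+2)u]^2$, the two remaining integrals consolidate into $\tfrac{1}{2}\int_\mathscr{U} Y^1[(\Delta+2)u]^2$. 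Doubling yields the claim.

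The main obstacle is purely bookkeeping—tracking signs and index contractions across four integrations by parts. The conceptual content is entirely carried by the divergence identity for the traceless Hessian on $S^2$ and the eigenvalue relation $\Delta Y^1 = -2Y^1$ for extended conformal Killing fields; compact support of $u$ in $\mathscr{U}$ removes all boundary contributions without further hypothesis.
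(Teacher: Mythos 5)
Your proof is correct and follows essentially the same route as the paper's: two integrations by parts to move derivatives from $u$ onto $Y^1$, the commutation identity $\nabla^A\nabla_A\nabla_B u=\nabla_B(\Delta+1)u$ on the unit sphere, and the eigenvalue relation $\Delta Y^1=-2Y^1$. Your bookkeeping via the traceless Hessian $T_{AB}$ and its divergence identity $\nabla^A T_{AB}=\tfrac{1}{2}\nabla_B(\Delta+2)u$ is a tidier organization of the same computation the paper performs by fully expanding $\nabla^A\nabla^B(Y^1\nabla_A\nabla_B u)$.
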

\begin{proof}
We use the following formulae in the derivation

\[\begin{split}
\Delta|\nabla u|^2&=2|\nabla^2u|^2+2\nabla u\cdot \nabla (\Delta+1)u\\
\Delta (u^2)&=2|\nabla u|^2+2 u\Delta u\\
\Delta(u\Delta u)&=(\Delta u)^2+2\nabla u\cdot \nabla(\Delta u)+u \Delta^2 u.
\end{split}\]

Integrating by parts twice gives

\[\int_\mathscr{U} Y^1\nabla_A\nabla_Bu \nabla^A\nabla^B u=\int_\mathscr{U}  u \nabla^A\nabla^B(Y^1 \nabla_A\nabla_Bu) \]

We compute 
\[\begin{split}
      &\nabla^A\nabla^B(Y^1 \nabla_A\nabla_Bu)\\
=&(\nabla^A\nabla^B Y^1) \nabla_A\nabla_Bu+2\nabla^B Y^1 \nabla^A \nabla_A\nabla_Bu+Y^1 \nabla^A\nabla^B \nabla_A\nabla_Bu\\
=&-Y^1\Delta u+2\nabla^BY^1\nabla_B (\Delta+1)u+Y^1\Delta (\Delta+1)u + (\nabla^A\nabla^B Y^1- \frac{1}{2}\Delta Y^1 \sigma_{AB} ) \nabla_A\nabla_Bu \\
=&Y^1 \Delta^2 u+2\nabla^B Y^1 \nabla_B (\Delta+1)u + (\nabla^A\nabla^B Y^1- \frac{1}{2}\Delta Y^1 \sigma_{AB} ) \nabla_A\nabla_Bu  \end{split},\] where we use $\nabla^A \nabla_A\nabla_Bu=\nabla_B (\Delta+1)u$ and $\Delta Y^1 = -2 Y^1$ in the second equality. 

On the other hand, we have the identity:

\[ 2\nabla^B u\nabla_Bv=\Delta (uv)- u\Delta v-v\Delta u\] and thus 
\[2\nabla^BY^1\nabla_B (\Delta+1)u=\Delta (Y^1 (\Delta+1)u)-Y^1 \Delta(\Delta+1)u+2Y^1  (\Delta+1)u.\]
Putting all together gives:
\[\begin{split} &\int_\mathscr{U} Y^1 \nabla_A\nabla_Bu \nabla^A\nabla^B u\\
=&\int_\mathscr{U} Y^1\Delta^2 u+\int_\mathscr{U} u  [\Delta (Y^1  (\Delta+1)u)-Y^1 \Delta(\Delta+1)u+2Y^1  (\Delta+1)u]\\
  & + \int_\mathscr{U}  u (\nabla^A\nabla^B Y^1- \frac{1}{2}\Delta Y^1 \sigma_{AB} ) \nabla_A\nabla_Bu\\
=&\int_\mathscr{U} Y^1 [(\Delta u)^2+2u\Delta u+2u^2] + \int_\mathscr{U}  u (\nabla^A\nabla^B Y^1- \frac{1}{2}\Delta Y^1 \sigma_{AB} ) \nabla_A\nabla_Bu\end{split}.\]

Therefore, 
\[
\begin{split}
    &\int_\mathscr{U} Y^1 [2\nabla_A\nabla_Bu \nabla^A\nabla^B u- (\Delta u)^2] .\\
=   &\int_\mathscr{U} Y^1[(\Delta+2)u]^2 + 2  \int_\mathscr{U}  u (\nabla^A\nabla^B Y^1- \frac{1}{2}\Delta Y^1 \sigma_{AB} ) \nabla_A\nabla_Bu\\
=   &\int_\mathscr{U} Y^1[(\Delta+2)u]^2 + 2  \int_\mathscr{U}  u (\nabla^A\nabla^B Y^1 ) (\nabla_A\nabla_Bu - \frac{1}{2}\Delta u \sigma_{AB})
\end{split}
\]

\end{proof}

The next lemma, used in all invariance proofs in non-radiative spacetimes, generalizes Lemma B.1 and B.2 in \cite{CKWWY}.
\begin{lemma}\label{integral formula for nonradiative case} Let $Y^A \frac{\pl}{\pl x^A}$ be an extended conformal Killing vector field defined on an open subset $\mathscr{U} \subset S^2$. Let $f$ be a function on $S^2$ and $C_{AB}$ be a symmetric traceless 2-tensor on $S^2$. $P_{BA} = \na_B\na^E C_{EA} - \na_A\na^E  C_{EB}$. If either $f$ or $C_{AB}$ has compact support in $\mathscr{U}$, then   
\begin{align*}
&\frac{1}{4}\int_\mathscr{U} Y^A \lt[  C_{AB}\nabla_D F^{BD}+F_{AB} \nabla_D C^{BD}+\frac{1}{2} \nabla_A ( C_{BD} F^{BD})-f \nabla^B P_{BA} -3P_{BA} \na^B f   \rt]\\
&= -\frac{1}{4} \int_\mathscr{U} f \na^A\na^B(\na_C Y^C) C_{AB}
\end{align*}
where $F_{AB} = 2\na_A \na_B f - \Delta f \sigma_{AB}$ and $P_{BA} = \na_B \na^E C_{EA} - \na_A \na^E C_{EB}$.
\end{lemma}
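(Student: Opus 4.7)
The plan is to expand the left-hand side by repeated integration by parts, moving all derivatives off $Y^A$ and converting them into $\phi := \na_C Y^C$ and $\alpha := \epsilon_{CD}\na^C Y^D$ by means of identity \eqref{ckf}. The compact-support hypothesis on either $f$ or $C_{AB}$ kills every boundary term. The guiding idea is that this statement is the extended-field counterpart of Lemmas B.1--B.2 of \cite{CKWWY}: for a classical conformal Killing field, $\phi$ is an $\ell=1$ spherical harmonic, so $\na_A\na_B\phi + \phi\sigma_{AB}\equiv 0$ and the right-hand side vanishes. For an extended field, $\phi$ satisfies only the local equation $(\Delta+2)\phi = 0$ from Lemma \ref{eigen_form}, so the symmetric traceless part $\na_A\na_B \phi + \phi\sigma_{AB}$ contracted with $C_{AB}$ survives and yields exactly the claimed correction.

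First I would record the toolkit: $\na^D Y^A = \tfrac{1}{2}\phi\,\sigma^{DA} + \tfrac{1}{2}\alpha\,\epsilon^{DA}$ (from \eqref{ckf}), together with $(\Delta+2)\phi = 0$ and $(\Delta+2)\alpha = 0$, and the commutator $[\na_A,\na_B]$ on $S^2$ (constant curvature $1$). I would also rewrite the right-hand side, using tracelessness of $C_{AB}$, as $-\tfrac{1}{4}\int f C^{AB}\bigl(\na_A\na_B\phi - \tfrac{1}{2}\Delta\phi\,\sigma_{AB}\bigr)$, which is the natural target.

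Next I would process the first three terms. Integration by parts on $Y^A C_{AB}\na_D F^{BD}$ and $Y^A F_{AB}\na_D C^{BD}$ moves $\na_D$ onto $Y^A$; the third term $\tfrac{1}{2}Y^A\na_A(C_{BD}F^{BD})$ is a total divergence producing $-\tfrac{1}{2}\phi\, C_{BD}F^{BD}$. Substituting $\na^D Y^A = \tfrac{1}{2}\phi\sigma^{DA} + \tfrac{1}{2}\alpha\epsilon^{DA}$ collapses the $\phi$-contribution from these three terms (the two middle $C F$ pieces cancel against the total divergence), leaving a residual $\alpha$-contribution and a term of the form $\int \phi \cdot C^{AB} F_{AB}$. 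Then, expanding $F_{AB} = 2\na_A\na_B f - \Delta f\,\sigma_{AB}$ and integrating by parts twice (using $(\Delta+2)\phi = 0$ to absorb Laplacians on $\phi$), this piece becomes $-\int f\, C^{AB}(\na_A\na_B\phi + \phi\sigma_{AB})$ plus a remainder.

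For the fourth and fifth terms I would combine $-f\, Y^A\na^B P_{BA} - 3 Y^A P_{BA}\na^B f$ into $-\na^B(f Y^A P_{BA}) - 2 Y^A P_{BA}\na^B f - f (\na^B Y^A) P_{BA}$, dropping the divergence by compact support. Since $P_{BA}$ is antisymmetric, $P_{BA} = \tfrac{1}{2}\epsilon_{BA}(\epsilon^{DE}P_{DE})$, so only the $\alpha$ part of $\na^B Y^A$ couples to it, and integration by parts of $Y^A P_{BA}\na^B f$ against the definition $P_{BA} = \na_B\na^E C_{EA} - \na_A\na^E C_{EB}$ (together with $\epsilon_{DB}\na^D C^{BA} = \epsilon^{AD}\na^B C_{BD}$ and the identity \eqref{closed=coclosed}) reduces the $P$-terms to an expression purely in $\phi$, $\alpha$, $f$ and $C_{AB}$. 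Summing the contributions of all five terms, the $\alpha$-pieces cancel (precisely the cancellation of Lemma B.2 in \cite{CKWWY}, carried over to $\mathscr{U}$ thanks to the support hypothesis), and the $\phi$-pieces consolidate into $-\tfrac{1}{4}\int f\, C^{AB}\na_A\na_B\phi$, which is the right-hand side.

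The main obstacle is bookkeeping: there is no conceptual difficulty, but the $\alpha$-contributions generated across all five terms must cancel exactly (since the right-hand side has no $\alpha$-dependence). This cancellation is forced by $(\Delta+2)\alpha = 0$ and the Hodge-type identity \eqref{closed=coclosed}, and is the step that most easily admits sign errors; I would carry it out by tracking $\phi$- and $\alpha$-contributions in separate columns and invoking $(\Delta+2)\phi = (\Delta+2)\alpha = 0$ only at the very end to close the computation.
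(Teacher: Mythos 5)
Your plan is essentially the paper's proof: integrate by parts over $\mathscr{U}$ (with the compact-support hypothesis killing every boundary term), use the conformal Killing equation to reduce $\na^B Y^A$ to its trace and antisymmetric parts, and consolidate all five terms into $-\frac{1}{4}\int_{\mathscr{U}} f\,\na^A\na^B(\na_C Y^C)\,C_{AB}$. The paper's bookkeeping is somewhat leaner --- it processes the two $P$-terms first, exploits the antisymmetry of $P_{BA}$ directly (so it never needs to name $\alpha=\epsilon_{CD}\na^C Y^D$, invoke $(\Delta+2)\alpha=0$, or use \eqref{closed=coclosed}), and then observes that what remains cancels the first three terms after one more integration by parts --- but the mechanism is the same as yours.
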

\begin{remark}
If $Y^A \frac{\pl}{\pl x^A}$ is defined globally on $S^2$, then we have
\[
\na^A\na^B(\na_C Y^C) C_{AB} = 0.
\]
\end{remark}
\begin{proof}
Integrating by part the last two terms of the first line, we have
\[
\begin{split}
   &\int_{\mathscr{U}}Y^A(-f \nabla^B P_{BA} -3P_{BA} \na^B f )\\
=& \int_{\mathscr{U}}-2 Y^A (\na_B\na^D C_{DA} - \na_A\na^D C_{DB})\na^B f +2f (\na^B Y^A  - \frac{1}{2} \na_EY^E \sigma^{AB})\na_B\na^D C_{DA} \\
=& \int_{\mathscr{U}} 2 \na_B Y^A\na^D C_{DA}\na^B f + 2 Y^A\na^D C_{DA} \Delta f  - 2 \na_A Y^A\na^D C_{DB}\na^B f-2 Y^A\na^D C_{DB} \na_A\na^B f\\
           &  +\int_{\mathscr{U}}  2fY^A\na^D C_{DA}-2 \na^B Y^A  \na_B f\na^D C_{DA}-f \na_EY^E\na^A\na^B C_{AB} \\
=&  \int_{\mathscr{U}} - 2 Y^A \na^D C_{DB} (\na_A\na^B f - \frac{1}{2}\Delta f \sigma_{AB}) +  Y^A \na^D C_{AD} (\Delta + 2)f \\
   &  -\int_{\mathscr{U}}2 \na_A Y^A\na^D C_{DB}\na^B f+f \na_EY^E\na^A\na^B C_{AB}, 
\end{split}
\]
where in the first equality we anti-symmetrize $\na^B Y^A$ into $\na^B Y^A - \frac{1}{2} \na_E Y^E \sigma^{AB}$ by \eqref{cke}.

After simplifying the last two terms
\[
\begin{split}
  &\int_{\mathscr{U}}2 \na_A Y^A\na^D C_{DB}\na^B f+f \na_EY^E\na^A\na^B C_{AB}\\
=&\int_{\mathscr{U}}-2 \na^D\na_A Y^A C_{DB}\na^B f  -2\na_AY^A C_{DB} \na^D\na^B f \\
   &+ \int_{\mathscr{U}} \na^A \na^B f \na_EY^E C_{AB} + f  \na^A\na^B\na_EY^E C_{AB} + 2 \na^Bf \na^A (\na_EY^E) C_{AB} \\
 =& \int_{\mathscr{U}} f  \na^A\na^B\na_EY^E C_{AB} - \na_AY^A C_{DB} \na^D\na^B f\\
  =& \int_{\mathscr{U}} f  \na^A\na^B\na_EY^E C_{AB} - \frac{1}{2}\na_AY^A C_{DB} F^{DB},
\end{split}
\]
we get
\begin{align*}
&\frac{1}{4}\int_{\mathscr{U}} Y^A \lt[  C_{AB}\nabla_D F^{BD}+F_{AB} \nabla_D C^{BD}+\frac{1}{2} \nabla_A (C_{BD} F^{BD})-f \nabla^B P_{BA} -3P_{BA} \na^B f   \rt]\\
&= -\frac{1}{4} \int_{\mathscr{U}} f \na^A\na^B(\na_C Y^C) C_{AB}.
\end{align*}
\end{proof}

\section{Explicit forms of extended BMS fields}
An extended BMS field corresponds to a singular solution of \eqref{cke} which cannot be integrated to a conformal diffeomorphism of $S^2$. In the following, we discuss 
these solutions. First observe that the equation \eqref{cke} is conformally invariant in the sense that $Y^A$ is a solution of \eqref{cke} with respect to $\sigma$ if and on if 
$Y^A$ is a solution of \eqref{cke} with respect to a metric that is conformal to $\sigma$. 
Consider the stereographic projection $\rho$ from the complement of the north pole $(0, 0,1)$  of ${S}^2\subset \R^3$ to $\mathbb{R}^2=\mathbb{C}$:
\[\rho: S^2-\{(0, 0, 1)\}\rightarrow \mathbb{R}^2=\mathbb{C}.\] The pull-back of  the flat metric $\mathring{\sigma}=|dz|^2= (dx)^2+(dy)^2$ on $\mathbb{R}^2=\mathbb{C}$ through $\rho$ is conformal to the standard round metric $\sigma$ on $S^2$. With respect to the flat metric $\bar{\sigma}=\rho^*\mathring{\sigma}$, \eqref{cke}
is exactly the Cauchy-Riemann equation, i.e $Y^1\partial_x+Y^2\partial_y$ satisfies \eqref{cke} if and only if $Y^1+iY^2$ satisfies the Cauchy-Riemann equation $\partial_{\bar{z}}(Y^1+iY^2)=0$. 
A complete basis for the extended conformal Killing fields can be found in \cite{Flanagan-Nichols} in terms of $\ell=1$ spherical harmonics. 
\[ l_m=-z^{m+1} \partial_z, \bar{l}_m=-\bar{z}^{m+1} \partial_{\bar{z}}, m \in \mathbb{Z}.\]

In particular, $\partial_z$ and $z^{m+1}\partial_z$ are conformal Killing fields with respect to $\bar{\sigma}$.
We also recall that the function $z$ on $\mathbb{C}$ satisfies $\bar{\Delta}z=0$ and $\bar{\nabla}_A z\bar{\na}^Az=0$ with respect to $\bar{\sigma}$.

Let $Z$ be the pull-back of $z$ through $\rho$, and $\mathring{Y}^A$ be the pull back of $(\partial_z)^A$ through $\rho^{-1}$. $Z$ and $\mathring{Y}^A$ satisfy
\[\Delta Z=0, \na_AZ\na^A Z=0, \nabla^A \mathring{Y}^B+\nabla^B \mathring{Y}^A-\nabla_C\mathring{Y}^C \sigma^{AB}=0\] by the conformal invariance of these equations. 

Explicitly $Z$, $\bar{Z}$, and $\mathring{Y}^A$ are given by 
\[ Z=\frac{\tilde{X}^1+i\tilde{X}^2}{1-\tilde{X}^3}, \bar{Z}=\frac{\tilde{X}^1-i\tilde{X}^2}{1-\tilde{X}^3}\] and 
\[\begin{split}\mathring{Y}^A&=(\epsilon^{CD} \nabla_C Z\nabla_D \bar{Z})^{-1}\epsilon^{AB} \nabla_B \bar{Z} \end{split}.\] 
All are defined and smooth outside the north pole $(0, 0, 1)$. A straightforward calculation shows that
\begin{equation}\label{divergence}  \na_A \mathring{Y}^A=-(\tilde{X}^1-i\tilde{X}^2) \text{ and } \na_A Z\na^A\tilde{X}^3=Z.\end{equation}

\begin{lemma} The divergence $\na_A Y^A$ of the conformal Killing field $Y^A=Z^{m+1} \mathring{Y}^A$ satisfies 
\[ (\Delta+2)\na_A Y^A=0\] on $S^2-\{(0, 0, 1)\}$. 
\end{lemma}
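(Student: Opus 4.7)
The plan is to realize $Y^A = Z^{m+1}\mathring Y^A$ as an extended conformal Killing field on $\mathscr U = S^2-\{(0,0,1)\}$ in the sense of Definition \ref{extended_conformal_Killing}, at which point the desired identity is exactly \eqref{eigen_fcn_div} from Lemma \ref{eigen_form}. So the whole content of the argument is to verify \eqref{cke} for $Y^A$; the eigenvalue equation for the divergence is then automatic.

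To verify \eqref{cke}, I would exploit the conformal invariance of the conformal Killing equation noted in the paragraph preceding the lemma: a vector field satisfies \eqref{cke} with respect to $\sigma$ if and only if it does so with respect to any conformally related metric. The metric $\bar\sigma = \rho^*\mathring\sigma$ is conformal to $\sigma$ on $\mathscr U$, and as already observed in that same paragraph, with respect to the flat metric $\bar\sigma = |dz|^2$, the conformal Killing equation for a (complex) vector field $(Y^1+iY^2)\partial_z$ is precisely the Cauchy--Riemann equation $\partial_{\bar z}(Y^1+iY^2)=0$. The holomorphic vector field $z^{m+1}\partial_z$ trivially satisfies this, and its pullback to $(S^2,\sigma)$ under $\rho^{-1}$ is exactly $Z^{m+1}\mathring Y^A$ (using the definition of $\mathring Y^A$ as $(\partial_z)^A$ pulled back through $\rho^{-1}$). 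Hence $Y^A$ solves \eqref{cke} on $\mathscr U$ and is a bona fide extended conformal Killing field there.

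Applying \eqref{eigen_fcn_div} of Lemma \ref{eigen_form} to this $Y^A$ on $\mathscr U$ then yields $(\Delta+2)\nabla_A Y^A = 0$ on $S^2-\{(0,0,1)\}$, as claimed. No step presents a genuine obstacle: the only thing to be a little careful about is that the identification $\bar\sigma = \rho^*\mathring\sigma$ is only valid on $\mathscr U$ (the north pole being the exceptional point of the stereographic chart), but this is harmless because the conclusion is local and we restrict to $\mathscr U$ anyway. As a sanity check, one could alternatively compute $\nabla_A Y^A$ directly from \eqref{divergence} together with $\nabla_A Z\,\mathring Y^A = \epsilon^{AB}\nabla_A Z\nabla_B\bar Z / (\epsilon^{CD}\nabla_C Z\nabla_D\bar Z) = 1$, obtaining $\nabla_A Y^A = (m+1)Z^m - Z^{m+1}(\tilde X^1 - i\tilde X^2)$, and then verify annihilation by $\Delta+2$ using $\Delta Z=0$, $|\nabla Z|^2=0$ (which give $\Delta Z^k = 0$) and $(\Delta+2)\tilde X^i=0$; but the conformal invariance route is shorter and makes the structural reason transparent.
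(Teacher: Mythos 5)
Your proof is correct but takes a genuinely different route from the paper. The paper proves the lemma by brute force: it computes $\na_A Y^A=(m+1)Z^m(\na_AZ)\mathring{Y}^A+Z^{m+1}\na_A\mathring{Y}^A=Z^m(m-\tilde{X}^3)$ using \eqref{divergence}, and then shows $\Delta\big(Z^m(m-\tilde{X}^3)\big)=-2Z^m(m-\tilde{X}^3)$ using $\Delta Z=0$, $\na_AZ\na^AZ=0$ (hence $\Delta Z^m=0$), $\na_AZ\na^A\tilde{X}^3=Z$, and $\Delta\tilde{X}^3=-2\tilde{X}^3$ --- i.e.\ exactly the ``sanity check'' you sketch at the end. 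Your main argument instead observes that $Y^A=Z^{m+1}\mathring{Y}^A$ is the pullback of the holomorphic field $z^{m+1}\partial_z$, hence satisfies \eqref{cke} on $\mathscr{U}=S^2-\{(0,0,1)\}$ by conformal invariance, and then invokes \eqref{eigen_fcn_div} of Lemma \ref{eigen_form}, whose proof uses only \eqref{cke} and constant curvature and is logically independent of this appendix. That is a clean and valid deduction; it is shorter, and it makes transparent that the eigenvalue identity is not special to this example but holds for every extended conformal Killing field, which is the structural point of Lemma \ref{eigen_form}. What the paper's computation buys in exchange is the explicit closed form $\na_AY^A=Z^m(m-\tilde{X}^3)$, which is of independent interest when one wants concrete mode content of the divergence. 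One small point worth making explicit in your write-up: $Z^{m+1}\mathring{Y}^A$ is a complex vector field, so one is really verifying the complexified conformal Killing equation and applying Lemma \ref{eigen_form} to the complexification; since \eqref{cke} and the proof of Lemma \ref{eigen_form} are linear, this is harmless, but it deserves a sentence.
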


\begin{proof} We compute \[\begin{split}\na_A Y^A&=\na_A (Z^{m+1} \mathring{Y}^A)\\
&=(m+1) Z^m (\na_A Z) \mathring{Y}^A+Z^{m+1} \na_A\mathring{Y}^A\\
&=(m+1) Z^m-Z^{m+1} (\tilde{X}^1-i\tilde{X}^2)\\
&=Z^m(m-\tilde{X}^3),\end{split}\] where we use the first equation in \eqref{divergence}. 

Furthermore, 
\[\begin{split}\Delta (\na_A Y^A)&=(\Delta Z^m)(m-\tilde{X}^3)-2m Z^{m-1}\na_A Z \na^A \tilde{X}^3-Z^m \Delta \tilde{X}^3\\
&=-2 Z^m(m-\tilde{X}^3),
\end{split}\] where we use the second equation in \eqref{divergence}.

\end{proof}
\section{Conformal transformation on $S^2$}
Let $\sigma$ be the induced metric of unit sphere $S^2 \subset \R^3$ Let $g: S^2 \rw S^2$ be a conformal map and $\hat\sigma = g^*\sigma$. It is well-known that $g$ is a linear fractional transformation and a direct computation yields that $\hat\sigma = K^2 \bar \sigma$. In addition, both $\hat\sigma$ and $\bar \sigma$ have constant curvature 1. Moreover, 
\begin{align*}
K(x) = \frac{1}{\alpha_0 + \alpha_i x^i} >0
\end{align*} 
where $(\alpha_0,\alpha_i)$ is a unit timelike vector and $x^i,i=1,2,3$ forms an orthonormal basis of first eigenfunctions on $(S^2,\bar\sigma)$. 

We denote by $\bar\nabla,\bar\Delta$ ($\hat\nabla, \hat\Delta$ respectively) the covariant derivative and Laplacian with respect to $\bar\sigma$ ($\hat\sigma,$ respectively). Taking Hessian and Laplacian of $K^{-1}$, we get
\begin{lemma}
\begin{align}
K^{-2} \bar\nabla_a \bar\nabla_b K - 2K^{-3} \bar\nabla_a K \bar\nabla_b K &= \alpha_i x^i \bar\sigma_{ab} \label{Hessian K} \\
\frac{1}{2} K^{-2} \bar\Delta K - K^{-3}|\bar\nabla K|^2 = \alpha_i x^i \label{Laplace K}
\end{align}
\end{lemma}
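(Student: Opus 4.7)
The plan is to reduce both identities to a single standard fact about first eigenfunctions on the round sphere by rewriting the left-hand sides in terms of $K^{-1}$, which is by assumption the restriction of an affine function on $\R^3$.

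First I observe the algebraic identity
\[
\bar\nabla_a \bar\nabla_b (K^{-1}) = 2 K^{-3} \bar\nabla_a K \bar\nabla_b K - K^{-2} \bar\nabla_a \bar\nabla_b K,
\]
so the left-hand side of \eqref{Hessian K} is precisely $-\bar\nabla_a \bar\nabla_b (K^{-1})$. Since $K^{-1} = \alpha_0 + \alpha_i x^i$ and the Hessian annihilates constants, it suffices to compute $\bar\nabla_a \bar\nabla_b x^i$ for the ambient coordinate functions restricted to $(S^2,\bar\sigma)$.

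For the latter I will invoke the classical fact that on the unit sphere each coordinate function $x^i$ is an $\ell=1$ eigenfunction whose traceless Hessian vanishes; concretely,
\[
\bar\nabla_a \bar\nabla_b x^i = -x^i \bar\sigma_{ab},
\]
which follows either from the Obata-type identity for $\ell=1$ eigenfunctions or directly from differentiating the normal decomposition $\partial_i = \bar\nabla x^i + x^i \nu$ in $\R^3$ (with $\nu$ the outward unit normal) and projecting tangentially. Summing against $\alpha_i$ gives $\bar\nabla_a \bar\nabla_b (K^{-1}) = -\alpha_i x^i \bar\sigma_{ab}$, which upon negation is exactly \eqref{Hessian K}.

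Finally, \eqref{Laplace K} is obtained for free: contracting \eqref{Hessian K} with $\bar\sigma^{ab}$ on the two-sphere produces
\[
K^{-2} \bar\Delta K - 2 K^{-3} |\bar\nabla K|^2 = 2 \alpha_i x^i,
\]
and dividing by $2$ yields the claimed expression. There is no substantive obstacle here; the only step with any content is the Hessian identity for $x^i$, and everything else is a one-line manipulation of $K$ versus $K^{-1}$.
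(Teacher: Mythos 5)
Your proposal is correct and follows essentially the same route as the paper, which proves the lemma precisely by taking the Hessian and Laplacian of $K^{-1}=\alpha_0+\alpha_i x^i$ and using $\bar\nabla_a\bar\nabla_b x^i=-x^i\bar\sigma_{ab}$ for the $\ell=1$ eigenfunctions. The sign conventions all check out (in particular $\bar\Delta x^i=-2x^i$, consistent with the paper's usage), and the trace step yielding \eqref{Laplace K} is exactly as you describe.
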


The next lemma compares the covariant derivatives $\bar\nabla$ and $\hat\na.$ The proof is through direct computation and is left to the readers. 

 \begin{lemma}
\begin{enumerate}
\item  The Christoffel symbols of $\hat\sigma$ and $\bar\sigma$ are related by
 \[\hat{\Gamma}_{ab}^c=\bar\Gamma_{ab}^c + \frac{1}{2} K^{-2}\left( (\partial_b K^2) \delta_a^c+(\partial_a K^2) \delta_b^c-(\partial_d K^2) \bar\sigma_{ab} \bar\sigma^{cd}\right).\]
\item  The covariant derivatives of a covector $X_a$ are related by
 \[\begin{split}\hat\nabla_b X_a&=\bar\nabla_b X_a + K^{-1} (\partial_b K)X_a + K^{-1} (\partial_aK) X_b - K^{-1}(\na^cK)X_c \,\bar\sigma_{ab}\\
K^2 \hat{\sigma}^{ab} \hat{\nabla}_b X_a&={\bar\sigma}^{ab} {\bar\nabla}_b X_a.\end{split}\] In particular,  for a function $f$, we have 
\begin{align}\label{bar_Laplacian}
K^2\hat{\Delta} f=\bar\Delta f
\end{align} and 
\begin{align}
\hat\nabla_a \hat\nabla_b (Kf) - \frac{1}{2} \hat{\Delta} (Kf )\hat{\sigma}_{ab}   =K\lt( \bar\nabla_a \bar\nabla_b f -\frac{1}{2} \bar\Delta f \bar\sigma_{ab}\rt) \label{bar_traceless_Hessian}
\end{align}
\item If $C_{ab}$ is a traceless symmetric 2-tensor for $\bar\sigma$ (hence also traceless for $\hat\sigma$), then
\begin{align}\label{bar_div_traceless}
\hat\nabla^b (KC_{ab}) = K^{-1}\bar\nabla^b C_{ab} + K^{-2} \bar\nabla^b K C_{ab} = K^{-2} \bar\nabla^b (KC_{ab})
\end{align}
and
\begin{align}\label{bar_divdiv_traceless}
\hat\nabla^a \hat\nabla^b (KC_{ab}) = K^{-3} \bar\nabla^a \bar\nabla^b C_{ab}.
\end{align} 
\end{enumerate}
\end{lemma}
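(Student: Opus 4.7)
The plan is to verify the three parts of the lemma by direct tensor computation, starting from the standard conformal-change law for the Levi-Civita connection and then exploiting the structural identities \eqref{Hessian K}--\eqref{Laplace K} for $K$.

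I would establish part (1) by invoking the classical transformation formula: for $\hat\sigma = e^{2\omega}\bar\sigma$ one has
\[ \hat\Gamma^c_{ab} - \bar\Gamma^c_{ab} = \delta^c_a \bar\nabla_b \omega + \delta^c_b \bar\nabla_a \omega - \bar\sigma_{ab}\bar\sigma^{cd}\bar\nabla_d \omega. \]
Setting $\omega = \log K$ and using $K^{-1}\partial_a K = \tfrac{1}{2}K^{-2}\partial_a K^2$ yields the stated expression. For part (2), the covector formula follows immediately from $\hat\nabla_b X_a = \partial_b X_a - \hat\Gamma^c_{ba}X_c$ and part (1). The scalar Laplacian identity $K^2\hat\Delta f = \bar\Delta f$ comes out because contracting $\hat\Gamma^c_{ab} - \bar\Gamma^c_{ab}$ with $\bar\sigma^{ab}$ gives $2\bar\nabla^c\omega - 2\bar\nabla^c\omega = 0$. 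For the traceless Hessian identity \eqref{bar_traceless_Hessian}, I expand $\bar\nabla_a\bar\nabla_b(Kf)$ and $\bar\Delta(Kf)$ by the Leibniz rule; after collecting, the residual terms not proportional to $K\bar\nabla_a\bar\nabla_b f$ combine into
\[ f\bigl(\bar\nabla_a\bar\nabla_b K - 2K^{-1}\bar\nabla_a K\bar\nabla_b K\bigr) - f\bar\sigma_{ab}\bigl(\tfrac{1}{2}\bar\Delta K - K^{-1}|\bar\nabla K|^2\bigr). \]
By \eqref{Hessian K} and \eqref{Laplace K}, both bracketed expressions equal $K^{2}\alpha_i x^i$ times $\bar\sigma_{ab}$ (after taking the symmetric piece), so they cancel within the trace-free combination on the left, leaving exactly $K[\bar\nabla_a\bar\nabla_b f - \tfrac{1}{2}\bar\Delta f \bar\sigma_{ab}]$.

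For part (3), the single divergence identity \eqref{bar_div_traceless} is immediate once one writes
\[ \hat\nabla_d(KC_{ab}) = (\partial_d K)C_{ab} + K\bar\nabla_d C_{ab} - K\bigl(S^c_{da}C_{cb} + S^c_{db}C_{ac}\bigr), \]
where $S^c_{ab} = \hat\Gamma^c_{ab} - \bar\Gamma^c_{ab}$, and then contracts with $\bar\sigma^{bd}$. Each summand of $\bar\sigma^{bd}S^c_{d\cdot}C_{\cdot\cdot}$ produces either a factor $\bar\sigma^{bc}C_{cb} = \mathrm{tr}_{\bar\sigma}(C) = 0$ or a pair $\omega^b C_{ab} - \omega^b C_{ba}$ that vanishes by symmetry of $C$, which gives \eqref{bar_div_traceless}.

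The main obstacle is the double divergence \eqref{bar_divdiv_traceless}, where two layers of conformal correction must be tracked. I would set $W_a := \hat\nabla^b(KC_{ab}) = K^{-2}\bar\nabla^b(KC_{ab})$, a covector, and apply the same contraction argument as for the scalar Laplacian in part (2) to conclude $\hat\nabla^a W_a = K^{-2}\bar\nabla^a W_a$. Expanding the right-hand side by the Leibniz rule produces a cross-term $K^{-2}\bigl(\bar\nabla^b K \cdot \bar\nabla^a C_{ab} - \bar\nabla^a K \cdot \bar\nabla^b C_{ab}\bigr)$ which vanishes after relabelling $a\leftrightarrow b$ using the symmetry of $C_{ab}$, and a principal correction
\[ K^{-2}\bigl[\bar\nabla^a\bar\nabla^b K - 2K^{-1}\bar\nabla^a K \bar\nabla^b K\bigr]C_{ab}. \]
By \eqref{Hessian K} the bracket is proportional to $\bar\sigma^{ab}$, and hence annihilates the trace-free $C_{ab}$; what remains is precisely $K^{-3}\bar\nabla^a\bar\nabla^b C_{ab}$. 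The delicate point, and the place where I expect a sign-tracking slip to be most likely, is ensuring that the two Leibniz expansions of $\bar\nabla^a[K^{-2}\bar\nabla^b(KC_{ab})]$ produce coefficients that assemble correctly into the combination of \eqref{Hessian K}; this is the reason for writing $W_a$ in the factored form $K^{-2}\bar\nabla^b(KC_{ab})$ before differentiating.
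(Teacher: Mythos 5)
Your proposal is correct and is precisely the direct computation that the paper explicitly leaves to the reader: part (1) from the standard conformal transformation law of the Levi--Civita connection, part (2) by tracing and by combining the Leibniz expansion with \eqref{Hessian K}--\eqref{Laplace K}, and part (3) by exploiting the tracelessness and symmetry of $C_{ab}$ together with \eqref{Hessian K} to kill the conformal correction terms. One incidental observation: for your step in \eqref{bar_traceless_Hessian} (and in part (3)) to produce the residual $\bar\nabla_a\bar\nabla_b K - 2K^{-1}\bar\nabla_a K\,\bar\nabla_b K$ that \eqref{Hessian K} annihilates, the correction terms in the displayed covector formula must carry the opposite signs, namely $\hat\nabla_b X_a=\bar\nabla_b X_a - K^{-1} (\partial_b K)X_a - K^{-1} (\partial_aK) X_b + K^{-1}(\na^cK)X_c \,\bar\sigma_{ab}$, which is what actually follows from part (1); your computation is consistent with these corrected signs, so the argument goes through and the discrepancy is only a sign typo in the stated formula (to which the trace identity $K^2\hat\sigma^{ab}\hat\nabla_bX_a=\bar\sigma^{ab}\bar\nabla_bX_a$ is insensitive).
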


\end{document}